\documentclass[11pt]{article}
\usepackage[margin=1in]{geometry}
\usepackage{amssymb,amsthm}
\usepackage{booktabs} 

\usepackage{mathtools}
\usepackage{subfig}
\usepackage{float}

\usepackage{url}            

\usepackage{tcolorbox}
\usepackage{bbm}
\usepackage{dsfont}
\usepackage{amsmath,amsfonts,amssymb,commath,hyperref}
\usepackage{natbib}
\usepackage{caption}
\usepackage[colorinlistoftodos,prependcaption,textsize=small]{todonotes}
\usepackage[shortlabels]{enumitem}
\usepackage{xspace}
\usepackage[multiple]{footmisc}

\usepackage{algorithm}
\usepackage[noend]{algpseudocode}

\usepackage[capitalize]{cleveref}

\usepackage[capitalize]{cleveref}
\crefname{claim}{Claim}{Claims}

\usepackage[suppress]{color-edits}
\addauthor{ch}{black}
\addauthor{df}{orange}
\addauthor{r}{magenta}
\addauthor{rev}{blue}

\newcommand{\detPi}{\widetilde{\Pi}}
 
\newcommand{\bx}{\mathbf{x}}
\newcommand{\by}{\mathbf{y}}
\newcommand{\bz}{\mathbf{z}}

\newcommand{\randomstatevector}{{\mathbf{N}}}

\newcommand{\randomstate}{{N}}
\newcommand{\scaledrandomstate}{N^\theta}

\newcommand{\rewardset}{\Xi}

\newcommand{\rev}{R}
\newcommand{\workerid}{w}
\newcommand{\xvec}{\mathbf{x}}
\newcommand{\simplex}{\boldsymbol{\Delta}}
\newcommand{\EE}{\mathbb{E}}
\newcommand{\fluidopt}{\textsc{Fluid-Opt}}
\newcommand{\supplyopt}{\textsc{Supply-Opt}}
\newcommand{\scaledlambda}{\lambda^\theta}
\newcommand{\scaledv}{v_{\theta}}
\newcommand{\PP}{\mathbb{P}}
\newcommand{\EEn}{\EE\left[\randomstate^\star\right]}
\newcommand{\Lambdaub}{\overline{\Lambda}}
\newcommand{\Lambdalb}{\underline{\Lambda}}
\newcommand{\supplyoptsol}{\bx^S}
\newcommand{\fluidn}{\widetilde{N}}
\newcommand{\avgreward}{\widehat{r}}
\newcommand{\supp}{\text{supp}}
\newcommand{\nf}{\normalfont}
\newcommand{\rmax}{r_{\max}}
\newcommand{\rmin}{r_{\min}}
\newcommand{\avgloss}{\hat{\ell}}
\newcommand{\algxvec}{\widetilde{\bx}^S}
\newcommand{\algx}{\widetilde{x}^S}

\newcommand{\scaledprofit}{\widehat{\Pi}^\theta}

\newcommand{\optfluidprofit}{\widetilde{\Pi}^*}

\newcommand{\unitvec}{\mathbf{e}}

\newcommand\blfootnote[1]{%
  \begingroup
  \renewcommand\thefootnote{}\footnote{#1}%
  \addtocounter{footnote}{-1}%
  \endgroup
}
\usepackage{natbib}
\usepackage{adjustbox}

\usepackage{authblk}
\usepackage{booktabs} 

\allowdisplaybreaks
\newcommand{\Halmos}{{}}
\newtheorem{theorem}{Theorem}
\numberwithin{theorem}{section}
\newtheorem{definition}[theorem]{Definition}

\newtheorem{remark}[theorem]{Remark}
\newtheorem{lemma}[theorem]{Lemma}

\newtheorem{proposition}[theorem]{Proposition}

\newtheorem{assumption}{Assumption}

\usepackage{setspace}
\onehalfspacing

\begin{document}
	\title{Fair Incentives for Repeated Engagement\blfootnote{A preliminary version of this paper appeared as an extended abstract in WINE 2023.}}
        \author[1]{Daniel Freund}
        \author[2]{Chamsi Hssaine}
        \affil[1]{Massachusetts Institute of Technology, Cambridge, MA}
        \affil[2]{University of Southern California, Marshall School of Business, Los Angeles, CA}

	\date{}
	\maketitle

	\begin{abstract}
            We study a decision-maker's problem of finding optimal monetary incentive schemes for retention when faced with agents whose participation decisions {(stochastically) depend on the incentive they receive}. Our focus is on policies constrained to fulfill two fairness properties that {preclude outcomes wherein} different groups of agents experience different treatment on average. {We formulate the problem as a high-dimensional stochastic optimization problem, and study it through the use of a closely related deterministic variant.} We show that the optimal \emph{static} solution to this deterministic variant is asymptotically optimal for the {\it dynamic} problem under fairness constraints. Though solving for the optimal static solution gives rise to a non-convex optimization problem, we uncover a structural property that allows us to design a tractable, fast-converging heuristic policy.
Traditional schemes for retention ignore fairness constraints; indeed, the goal in these is to use differentiation to incentivize {repeated engagement with the system}. Our work $(i)$ shows that even in the absence of explicit discrimination, {dynamic policies may unintentionally discriminate between agents of different types by varying the type composition of the system}, and $(ii)$ presents an asymptotically optimal policy to avoid such discriminatory outcomes.
	\end{abstract}

\section{Introduction}\label{sec:intro}

Stakeholder retention is a fundamental challenge faced by many organizations.
For instance, it was said of former IBM executive Buck Rodgers that he ``behaved as if every IBM customer
were on the verge of leaving and that
[he'd] do anything to keep them from
bolting'' \citep{rodgers1986ibm}. Indeed, though it is conventional wisdom that continued {\it growth}  is necessary to ensure the success of a business, studies have routinely found that increasing customer {\it retention} rates by as little as 5\% could lead to an increase in profits of up to 95\% \citep{gallo2014value}. 

Of the many ways in which an organization can increase retention, one important lever it has at its disposal is that of {unconditional} {\it monetary incentives}{, which are paid out regardless of the retention outcome}. {Such incentives are used in many practical contexts:}
\begin{itemize}
    \item \textbf{Customer-side retention by for-profit corporations}: 
{Many loyalty programs use monetary rewards, e.g., gift cards in exchange for redeemable points, to retain customers. When the redemption rewards vary with time and  points need to be redeemed at an arbitrary point of time (e.g., after a certain point balance has been achieved but before they expire), the customer's reward can be modeled as, effectively, random. A concrete example of this arises in luxury condos that offer reward programs to their tenants \citep{gables23captivate}. Though tenants receive loyalty points every month, points can only be redeemed when the balance reaches a certain point. In addition, points need to be redeemed quickly, lest they expire, and their redemption is independent of a lease renewal decision, i.e., they are unconditional.}
    \item \textbf{Non-profit organizations}: Lotteries have been found to be extremely effective in incentivizing charitable giving in empirical studies \citep{landry2006toward}. In practice, certain banks have set up programs, referred to in Germany as \emph{Gewinnsparen}, that enroll their clients for recurring donations for local causes in exchange for a chance to win a cash amount \citep{gewinnsparen2}. 
    Monetary awards have also been found to effectively combat against volunteer attrition, which plagues non-profit organizations across the board \citep{downs2014modeling,frey2017volunteer}.
\end{itemize}

{In implementing these sorts of monetary incentives, one important concern that a decision-maker faces is that of {\it fairness}, especially in light of the abundance of examples in which algorithms deployed in the real world unintentionally discriminate against protected groups~\citep{kleinberg2018algorithmic}. {
Within the context of monetary incentives for retention, 
a decision-maker may want  
to maximize her bang-per-buck by paying individuals with different earnings sensitivities different amounts in order to to minimize the cost of retaining them. 
{However, s}ince individuals' earnings sensitivities correlate with protected classes such as gender \citep{heckert2002gender}, such unconstrained policies would discriminate between classes.}

{Our work aims to better understand} {the role of fairness in the} optimal design of {unconditional} monetary incentives for repeated engagement. Indeed, despite the frequent use of such incentives in practice, as well as extensive empirical work on their effectiveness in a variety of settings, to the best of our knowledge there have been few attempts to develop {\it theoretical} insights into their design {and the associated risk of algorithmic group discrimination.} We further detail our contributions below.
}

\subsection{Summary of contributions}

We consider a model wherein agents join a system in each (discrete-time) period, and receive a (possibly random) reward to remain in the system in the next period. At the end of the period, {\it unaware of the underlying distribution} from which rewards are drawn, agents probabilistically make {a} decision based on the reward received in the period to stay in the system, or leave once and for all. Specifically, we assume that agents are partitioned into {\it types} defined by $(i)$ their sensitivity to rewards, formalized via a {\it departure function} {that maps rewards received to the probability of departing}, and $(ii)$ the rate at which they join the system.\footnote{In Appendix \ref{apx:fairness-thm} we extend our main results to a rational-entry model wherein the types also encapsulate a reservation value wherein arrivals of a type only occur if their average reward exceeds the reservation value.} This model, though simple, gives rise to a spectrum of models of agent behavior. In particular, most of our results hold for {\it any} departure functions, as long as these  functions are non-increasing in the reward paid out to an agent.

The decision-maker collects some revenue associated with the number of agents in the system in each period, and incurs the cost associated with incentivizing these agents to stay according to the chosen reward distribution (where the support of this distribution is assumed to be {an arbitrary finite} set). The goal of the decision-maker is to determine the optimal policy to maximize her long-run average profit. As is common in classical stochastic control problems, the infinite-horizon Markov Decision Process (MDP) associated with the decision-maker's optimization problem suffers from the {curse of dimensionality}, which motivates the task of finding {\it near-optimal} policies.

One natural approach a decision-maker may want to take to maximize her profit is to {\it learn, then discriminate}: given the history of rewards paid out to each agent, the decision-maker could try to estimate each agent's type, and ``target'' agents whom she believes would stay in the system for lower rewards. {However,} not only is such {\it explicit} discrimination potentially problematic from a public relations standpoint, but it also runs counter to {\it  group fairness}, which at a high level requires that {an algorithm treat (reward) individuals belonging to different groups (e.g., demographic groups) similarly.\footnote{Within the field of machine learning, the notion of group fairness is typically defined via statistical parity, wherein a classifier must assign subjects in protected and unprotected groups to a class with equal probability.}} 
Avoiding this sort of {explicit} discrimination, the decision-maker can then turn to {\it dynamic} policies that draw rewards i.i.d. from the same distribution {\it in each period}, all the while actively managing the number of agents in the system by varying the distribution {\it across periods}. {However, even these seemingly fair policies may discriminate implicitly:} we show {that they} {can lead to some groups receiving consistently higher rewards than others due to the members of the former (latter) group \emph{self-selecting} into periods with higher (lower) rewards.}

{
{Against this backdrop, our work aims to {characterize} retention policies that fulfill two stringent fairness requirements:} $(i)$ agents {must} be paid from the same reward distribution {\it in each time period}, and $(ii)$ {agents of} different  types must experience the same reward distribution on average, over a long enough time horizon (\cref{def:fair_policy}). The two fairness constraints respectively enforce distributional envy-freeness {\it within} and {\it across} periods. On the one hand, if one views the reward distribution abstraction as agents playing a lottery for retention, distributional envy-freeness within periods ensures that agents know they are playing the same lottery. This {is a reasonable} principle given that, in the systems we consider, agents are symmetric from a revenue perspective (e.g., there is no ``specialization'' across types).
On the other hand, though group fairness may not be a requirement in all settings, the lack thereof may be problematic (for instance, in settings where departure probabilities are correlated with protected classes such as gender and race~\citep{heckert2002gender}). In particular, in such contexts an audit --- or even a company-authored DEI report --- might find in hindsight that{, despite avoiding explicit discrimination and not even attempting to learn the types of each agent, an organization may nonetheless pay} higher bonuses/rewards to some demographic than to another; whether or not this is justifiable, it poses a potential risk that organizations should be aware of. 
}

\chedit{In our first contribution, we 
show that there exists a static policy that is asymptotically optimal amongst the space of all policies that fulfill our fairness criteria \chedit{(\cref{thm:static-policies-are-opt-for-one-type,thm:asymptotic-result})}. We also show that dynamic policies can strictly outperform any static policy \chedit{(\cref{prop:explicit-wage-disc,ex:steady-state-cyclic})}; in other words, the asymptotic value of dynamic policies in our setting arises solely from the ability to (implicitly) discriminate between types.
Though we do not seek to prescribe {our stringent fairness constraints} for all retention settings,
a main insight of our work is that the value of a dynamic policy, in our setting, comes only from exploiting different retention probabilities across groups.}
 
{The proof that a static policy is asymptotically optimal is constructive, i.e., we design a heuristic fluid-based static policy for which our asymptotic guarantees hold as the market size is scaled by a parameter $\theta$. We moreover show {(\cref{thm:asymptotic-result})} that our fluid-based heuristic is {\it fast-converging} in the sense that it converges to a fluid upper bound at a rate of $\mathcal{O}(\frac{1}{\theta})$  (under a mild technical condition on the decision-maker's revenue function; without this condition it converges at a rate of $\mathcal{O}(\frac{1}{\sqrt{\theta}})$).} 
 
 {While our policy satisfies these natural desiderata,} computing it requires us to solve a high-dimensional nonconvex optimization problem 
 which is, \emph{a priori}, nontrivial to optimize. {In our final technical contribution}, we show a surprising structural property of the problem that allows us to efficiently compute its optimal solution. In particular, {\it independent of the size of the reward set, the number of agent types, and their departure probabilities}, there exists a fluid-optimal reward distribution that places positive weight on {\it at most} two rewards (\cref{thm:main-theorem}). This allows us to identify an optimal solution by considering all pairs of rewards, and then solving a KKT condition that consists of a single equation in one variable for each pair. \chedit{Similar two-reward structure of fluid-based policies has been identified in other operational settings \citep{bassamboo2016scheduling}; we use this result to derive insights into the structure of the fluid optimal policy for certain special cases. In particular, we show that the convexity of the departure probability function impacts the optimal dispersion level of the optimal reward scheme, lending credence to the use of ``surprise-and-delight'' lotteries for retention (\cref{thm:main-theorem2}).}

\paragraph{Structure of the paper.} 
{In Section~\ref{ssec:related-work}, we survey related literature. We then present the model and formulate the decision-maker's optimization problem in Section~\ref{sec:preliminaries}. We use a deterministic relaxation of our system to show in Section \ref{sec:main-results} that the fluid-based heuristic is optimal amongst all policies that satisfy our fairness constraints; though we also show the existence of dynamic policies that outperform the fluid heuristic, these policies are inherently discriminatory. Section~\ref{sec:asymptotic-opt} is devoted to analyzing the fluid-based heuristic and proving its fast convergence to the value of the fluid relaxation in a large-market regime. \chedit{Finally, Section \ref{sec:special-cases} leverages our analysis of the fluid heuristic to characterize optimal policies in special cases of interest.} 
All proofs are deferred to the appendix.}

\subsection{Related work}\label{ssec:related-work}

\paragraph{Workforce capacity planning.} Our work is related to the topic of {\it workforce capacity planning}, which has a long history in the operations management literature (see, e.g., \citet{de2015workforce} for an excellent survey). Within this line of work, we highlight papers that consider attrition and retention aspects of workforce planning. In contrast to our work, which focuses on the question of issuing monetary incentives throughout an agent's lifetime to retain them, these works are concerned with {hiring}, promotion, and {termination} decisions. For example, motivated by the naval aviation system, early work by \citet{grinold1976manpower} considered optimal accession policies when aviators have a known and deterministic lifetime. More recently, \citet{hu2016strategic} studied optimal hiring and admission and training policies for junior nurses, a profession in which attrition is pervasive, and as a result has been a central focus of much of the workforce planning literature. In their model, a {\it fixed} and exogenous fraction of the population leaves the system in each period, whereas in ours the decision-maker aims to set incentives in order to affect their retention. This work also resembles ours in that agents are {\it homogeneous} from a skills' perspective (though ours are heterogeneous with respect to their departures). 

A subset of the literature on workforce capacity planning is interested in worker {\it heterogeneity}; however, most of these works focus on heterogeneity with respect to skill set, not with respect to attrition. \citet{ahn2005staffing} consider a model in which workers turn over independently of the organization's policy and the state of the system, whereas \citet{gans2002managing} and \citet{arlotto2014optimal} allow workers' departure decisions to depend on the state of the system, but not the decision-maker's policy nor their own history in the system. Most recently, \citet{jaillet2021strategic} considered a more complex model of hiring, dismissing and promoting when workers' resignation decisions depend on their ``time-in-grade,'' or lifetime, in the system. {To the best of our knowledge, no works in this stream consider the fairness implications of policies.}

\paragraph{Customer retention.} We highlight the most closely related works here, as this area has a rich history in the marketing liteature (see, e.g., \citet{ascarza2018pursuit} for a survey). To the best of our knowledge, none of these papers focus on designing {\it fair} customer retention policies. On the contrary, the goal in these latter works is precisely to use differentiation in order to incentivize customers to stay. For example, in a computational study \citet{lemmens2020managing} define a profit-based loss function to predict, for each customer, the financial impact of a retention intervention, ranking customers based on the marginal impact of the intervention on churn, and post-intervention profits. \citet{aflaki2014managing} develop theoretical insights around optimal retention policies, in a setting where customer ``types,'' or sensitivities to interventions, are known by the decision-maker, thus allowing for customer differentiation; in their model, the optimal decision across the population decouples into optimal decisions for each individual customer. 
A separate stream of work investigates how {\it capacity} decisions affect service access quality, and customer retention as a result \citep{afeche2017customer,furman2021customer}. In contrast to the interventions we consider, which occur {\it in each period}, in the settings these latter papers consider, the decision-maker is constrained to make a single decision {\it at the beginning} of the time horizon.   

\paragraph{Empirical work on effectiveness of monetary incentives.} The effectiveness of monetary incentives for retention is also well-documented in the medical community. Empirical studies highlight their efficacy within the context of adherence to medication~\citep{volpp2008test,kimmel2012randomized}, weight loss and exercise~\citep{volpp2008financial,meeker2021combining}, postpartum compliance~\citep{stevens1994incentives}, and home-based health monitoring~\citep{sen2014financial}, for instance.

\paragraph{Algorithmic fairness.} Finally, our work adds to the large and growing body of work on the design of {\it fair} algorithms. We note that there is no universally agreed-upon notion of fairness (see, e.g., \citet{mehrabi2021survey} for a comprehensive overview of the different notions of fairness that have been considered in the literature). The notion of fairness that we choose to focus on is that of {\it group fairness} (as opposed to {\it individual fairness} \citep{dwork2012fairness}), which itself has no single definition. The one closest to ours, that arises in the machine learning literature within the context of group-fair classifiers, is {\it statistical parity} \citep{corbett2017algorithmic}. This requires that individuals in both  protected and non-protected groups have equal probabilities of being assigned to the positive predicted class; interpreting the set of rewards as possible predicted classes, our fairness definition can be understood as a multidimensional variant of statistical parity.

{An operational setting that is related to our study is fair (online) resource allocation;  this has received significant recent attention \citep{sinclair2021sequential,bateni2022fair,allouah2022robust, manshadi2021fair,banerjee2023online,freund2023group, freund2023good,balseiro2022uniformly}.} Other related literature on fair algorithms includes pricing problems with fairness constraints such as those studied by \citet{cohen2021dynamic,salem2021taming} and \citet{cohen2022price}. However, none of these works model customer attrition.  
	\section{Preliminaries}\label{sec:preliminaries}

We consider a discrete-time, infinite-horizon model of an organization{, which we henceforth generically refer to as a {\it system}}. {In each period {agents join the system, receive a reward, and decide whether to stay in the system for future periods or leave.} The {decision-maker} makes a profit, in each period, composed of the revenue from the {number of agents in the system}, net of {the cost of} the rewards {paid out to {agents}}. {For example, within the employment context, an agent corresponds to a worker performing a set of tasks in each period, with the reward corresponding to a bonus incentive; an agent can similarly correspond to a customer who enjoys service from her cable provider in a given period, with the reward corresponding to a discount.} We formalize each component of the model below}, {beginning with some technical notation.} {For clarity of exposition, we defer a lengthy discussion of modeling assumptions to the end of the section.}

{\paragraph{Technical notation.} Throughout the paper, $\mathbb{R}^+ = \left\{x \in \mathbb{R} \, | \, x \geq 0\right\}$, $\mathbb{R}^{> 0} = \left\{x \in \mathbb{R} \, | \, x > 0\right\}$, and $\mathbb{N}^+ = \left\{i \in \mathbb{N} \, | \, i \geq 1\right\}$. For $K  \in \mathbb{N}^+$, we let $[K] = \{1,\ldots,K\}$. We use $\supp(f)$ to denote the support of a given probability mass function $f$, i.e., $\supp(f) = \left\{x : f(x) > 0\right\}$, and $|\supp(f)|$ denotes the cardinality of its support. Moreover, given set $\rewardset \subset \mathbb{R}$, we let {$\Delta^{|\rewardset|}=\left\{\xvec\in[0,1]^{|\rewardset|}: \sum_r x_r=1\right\}$} be the standard probability simplex over $\rewardset$.  {Finally, $\unitvec_r$ denotes {the unit vector in the direction of reward~$r$.}}}

\subsection{Basic setup}\label{sec:basic-setup}

{\paragraph{Agents.}
We assume there are $K \in \mathbb{N}^+$ types of agents, defined by $(i)$ their reward sensitivity, and $(ii)$ the rate at which they join the system. Specifically, for $i \in [K]$, a type $i$ agent is associated with a {\it departure probability function} $\ell_i: \rewardset \mapsto (0,1]$, where $\rewardset \subset \mathbb{R}^+$} is a finite set of rewards from which the decision-maker chooses to compensate its agents.\footnote{{Heterogeneity in departure probability functions has also been considered in \citet{ovchinnikov2014balancing} in a simple model with two types.}} We assume that~$\ell_i$ is non-increasing and known to the decision-maker. Let $\rmin = \inf\{r \, | \, r \in \rewardset\}$ and $\rmax = \sup\{r \, | \, r \in \rewardset\}$. Moreover, the number of type $i$ arrivals in period $t$, $A_i(t)$, is drawn i.i.d. (across types and periods) from a $\text{Pois}(\lambda_i)$ distribution also known to the decision-maker; let~$\mathbf{A}(t) = \left(A_i(t), i = 1 \in [K]\right)$. 

\chedit{
\begin{remark}\label{remark:endo}
The exogenous arrival assumption is for ease of exposition. In Appendix \ref{apx:fairness-thm} we show that our main result applies to an endogenous entry model in which agents choose to enter the system by comparing their long-run average earnings to a type-dependent reservation value. 
\end{remark}
}

\paragraph{Periods.} Period $t \in \mathbb{N}$ is defined by the following sequence of events: $(i)$ for all $i \in [K]$, $A_i(t)$ type $i$ agents join the system; $(ii)$ each agent $\workerid$ is in the system (e.g., enjoying service, or working, for the duration of the period), and collects a (possibly random) reward $r_{\workerid}$; $(iii)$ having collected reward $r_{\workerid}$, {agent $\workerid$, unaware of the reward distribution, departs from the system with probability $\ell_{i_\workerid}(r_{\workerid})$, where $i_\workerid \in [K]$ denotes the type of agent $\workerid$;
otherwise, she remains in the system and moves onto the next period. We assume that the agent's decision to leave the system is made independently from all other agents, and independently of her prior history of rewards, i.e., agents make a decision based only on {their most recent experience}; moreover}, once an agent leaves, she does not return in a later period. ({This latter assumption --- that the agent is ``lost for good'' --- is standard in the workforce planning and customer retention literature; see, e.g., \citet{aflaki2014managing}, \citet{arlotto2014optimal}.})

We use $D_i(t)$ to denote the number of type $i$ departures at the end of period $t$, with $\mathbf{D}(t) = \left(D_i(t), i \in [K]\right)$. Finally, let $\randomstate_i(t)$ be the number of type $i$ agents in the system in period $t$ (and thus requiring payment), with $\randomstatevector(t) = \left(\randomstate_i(t), i \in [K]\right)$, and $\randomstate(t) = \sum_{i \in [K]} \randomstate_i(t)$. $\randomstatevector(t)$ {is based on the number of agents who were in the system in the previous period and did not depart, as well as the number of new agents who joined {at the beginning of the} current period. Given the described dynamics, the {\it state} of the system is fully characterized by} $\randomstatevector(t)$, which evolves as
\begin{align*}
\randomstatevector(t+1) = \randomstatevector(t) - \mathbf{D}(t) + \mathbf{A}(t+1), \qquad \forall \, t \in \mathbb{N}.
\end{align*}
{We next specify the decision-maker's objective and corresponding optimization problem.}

\paragraph{Objective.}

Given $\randomstate(t)$, the total number of agents in period $t$, the decision-maker obtains revenue $\rev\left(\randomstate(t)\right)$, where $\rev:\mathbb{R}^+ \mapsto \mathbb{R}^+$. We assume that $\rev$ is time-invariant, and depends only on the {\it total number} of agents in the system, rather than the type composition of the agent pool in each period. $\rev$ is moreover assumed to be $L$-Lipschitz continuous and differentiable over $\mathbb{R}^{+}$, as well as non-decreasing and concave. Given $\{r_\workerid\}_{\workerid = 1}^{\randomstate(t)}$, the (possibly random) set of rewards paid out to agents in period $t$, the period-$t$ profit is given by:
$$\Pi(t) = \rev\left(\randomstate(t)\right) - \sum_{\workerid=1}^{\randomstate(t)} r_\workerid.$$

Let $\widehat{\Pi}(t)$ denote the expected profit 
{in period} $t$, where the expectation is taken over the randomness in the reward realizations. 

{We formulate the decision-maker's optimization problem as a discrete-time, infinite horizon Markov Decision Process (MDP), where the objective is to maximize the long-run average profit.} Suppose the initial condition is $\randomstatevector(0) = \mathbf{c}_0$, $\mathbf{c}_0 \in \mathbb{N}^K$. For any policy $\varphi$, let $v({\varphi})$ denote the long-run average profit under $\varphi$ (assuming this limit exists). Formally:
\begin{align*}
v(\varphi) =& \lim_{T\to \infty} \frac1T \mathbb{E}\left[\,\sum_{t=1}^T\widehat{\Pi}(t)\, \bigg{\vert} \, \randomstatevector(0) = \mathbf{c}_0\,\right].
\end{align*}

{In complete generality, a policy $\varphi$ maps the set of  agents in the system in a given period, and the history of rewards observed by each agent, to a distribution over rewards for each of these agents in that period. We restrict our attention to the set of policies for which the above limit exists. More importantly, we impose that any policy $\varphi$ satisfy the following two fairness criteria:
\begin{enumerate}
    \item If two agents are in the system {in the same period}, the distribution from which their rewards are drawn is identical. Thus, $\varphi$ must map the set of  agents in the system in a given period, and the history of rewards observed by each agent, to a {\it single} distribution over rewards in every period. We denote the distribution in period $t$ by $\bx(t) = (x_r(t), r \in \rewardset)$. {Then, in a fixed period $t$, we have
        \begin{align*}
        \widehat{\Pi}(t) = \rev\left(\randomstate(t)\right)-\randomstate(t)\left(\sum_r rx_r(t)\right), 
        \end{align*}} 
\noindent \chedit{where the second term captures the expected payout across all $N(t)$ agents who each receive reward~$r$ with probability $x_r(t)$.}
    \item The {\it average reward distribution} observed by agents of different types, conditional on their types, is ``approximately'' the same over time. We refer to this latter constraint as {\it group fairness}, and provide its mathematical formalization in \cref{sec:main-results}.
\end{enumerate}

\chedit{At this point we reiterate that it is not our goal to prescribe our fairness definitions as the only reasonable ones across all industries. However,  given the spectrum of ways in which a decision-maker can differentiate between groups, it is unclear where one should draw the line. For instance, consider a policy in which type 1 agents receive a reward of 50 almost surely, whereas type 2 agents receive a reward of zero 49\% of the time, a reward of 100 49\% of the time, and a reward of 50 2\% of the time. The reward distributions seen by both agent types have the same expectation and median; however, depending on the context, one type may be perceived as having a more desirable reward distribution  (as Type 1 benefits from the certainty of a reward of 50 in each period). Our fairness definition may be particularly stringent in choosing where to draw the line, but it provides a simple first step to understanding how heterogeneous groups differentially self-selecting into the system may produce unfair outcomes.}
}

\subsection{Large-market regime}

{Given the size of the state space, the curse of dimensionality renders the goal of solving the MDP to optimality intractable. As a result, we turn to the more attainable goal of designing {\it asymptotically optimal} policies in a so-called {\it large-market limit}. 
}

The regime we consider is defined by a sequence of systems parametrized by $\theta \in \mathbb{N}^+$, with $\scaledlambda_i = \theta\lambda_i$ for all $i \in [K]$, and departure probabilities $\left\{\ell_i(\cdot)\right\}_{i\in[K]}$ held fixed. 
{We use~$\scaledrandomstate(t)$ to denote the number of agents in the system in period $t$ in the scaled system, and~$\scaledrandomstate$ the steady-state number of agents in the system. In order to keep the cost and revenue of any given policy on the same order, we define the 
{normalized profit in the scaled system} 
at time $t$ to be
\begin{align*}
\scaledprofit(t) = \rev\left(\frac{\scaledrandomstate(t)}{\theta}\right)-\frac{\scaledrandomstate(t)}{\theta}\left(\sum_r r x_r(t)\right).
\end{align*}
For $\theta \in \mathbb{N}^+$, we let $\scaledv(\varphi)$ denote the long-run average {normalized} 
profit of policy $\varphi$. 

{We briefly motivate this choice of scaling via a newsvendor-like revenue function. Suppose $\rev(N) = \min\{N,D\}$, for all $N \in \mathbb{N}$ and some $D > 0$. Then, scaling both demand and supply by $\theta$, 
a more standard scaling in which we simply divide $\rev$ by $\theta$ would give us:
\begin{align*}
    \frac1\theta\rev(N^\theta) = \frac1\theta \min\left\{N^\theta, \theta D\right\} = \min\left\{\frac{N^\theta}{\theta}, D\right\} = \rev\left(\frac{N^\theta}{\theta}\right).
\end{align*}
}
With strictly concave and increasing $R$, {a more standard scaling in which the revenue function is {not normalized} 
(i.e., $\rev{(N^\theta)}$ is not replaced with $\rev(N^\theta/\theta)$), yields a vacuous asymptotically optimal solution $\bx=\unitvec_{\rmin}$, and the objective going to $-\infty$. Scaling the system in this manner overcomes such vacuities.} Finally, we remark that a constant additive loss in $\widehat{\Pi}^\theta$ translates to an $\Omega(\theta)$ loss in the non-normalized profit $\theta\rev(\scaledrandomstate/\theta)-\scaledrandomstate(\sum_r rx_r)$.
}

\paragraph{Deterministic relaxation of the stochastic system.} 

In order to analyze {\it first-order} {differences between policies}, we consider a deterministic relaxation of the stochastic system, formally defined below.

Let $(\mathbf{x}(t))_{t \in \mathbb{N}^+}$ denote a sequence of reward distributions. In each period~$t$, for all~$i \in [K]$, we observe~$\lambda_i$ arrivals and~$\fluidn_i(t)\sum_r \ell_i(r)x_r(t)$ departures of type~$i$ agents, where $\fluidn_i(t)$ satisfies the following inductive relation:
\begin{align}\label{eq:inductive-deterministic}
    \fluidn_i(t+1) = \fluidn_i(t) + \lambda_i - \fluidn_i(t)\sum_r\ell_i(r)x_r(t), \quad \forall\, t \in \mathbb{N}^+.
\end{align}
Let $\widetilde{N}(t) = \sum_i \widetilde{N}_i(t)$, for $t \in \mathbb{N}^+$. Let $\widetilde{\varphi}$ denote the policy which pays out rewards from $(\mathbf{x}(t))_{t \in \mathbb{N}^+}$ in this deterministic system, and let $\detPi(\widetilde{\varphi})$ denote the long-run average profit induced by $\widetilde{\varphi}$ in the deterministic system, i.e.,
\begin{align}
    \detPi(\widetilde{\varphi}) = \lim_{T\to\infty}\frac1T \sum_{t=1}^T \left[R\left(\fluidn(t)\right) - \left(\sum_r r x_r(t)\right)\fluidn(t)\right].
\end{align}
Given policy $\varphi$, when necessary we use $\widetilde{N}^\varphi(t)$ to emphasize the dependence on $\varphi$. (As in the stochastic system, we assume $\widetilde{\varphi}$ is such that this limit exists.)

{In order to establish the connection between the two systems, consider the following coupling: for any $T, \theta \in \mathbb{N}^+$, fix payout distributions $\mathbf{x}(1),\ldots,\mathbf{x}(T)$. Moreover, let $\Delta(T)$ be the absolute difference between the expected profit in the stochastic system and the profit in the deterministic system over $T$ periods, i.e.,
$$\Delta(T) := \left\lvert\left(\sum_{t=1}^T\EE\left[R\left(\frac{N^\theta(t)}{\theta}\right)\right] -\sum_r rx_r(t)\EE\left[\frac{N^\theta(t)}{\theta}\right]\right)-\left(\sum_{t=1}^T R(\widetilde{N}(t))-\sum_r rx_r(t)\widetilde{N}(t)\right)\right\rvert.$$ The following proposition states that the long-run average profit in the deterministic system converges to the long-run average expected profit in the large-market regime.}

{
\begin{proposition}\label{prop:det-to-stoch}
Suppose $\widetilde{N}_i(0) = \mathbb{E}\left[\frac{N_i^\theta(0)}{\theta}\right]$ for all $i \in [K]$. Then, there exists a constant $C_0 > 0$ (independent of $T$, $\theta$) such that $\lim_{T\to\infty}\frac{\Delta(T)}{T} \leq C_0/\sqrt{\theta}$.
\end{proposition}
}

We conclude the section with a discussion of our modeling assumptions.

\subsection{Discussion of modeling assumptions}\label{ssec:modeling-asp} 

\paragraph{{Memoryless agents.}} \chedit{One limitation of our model is that agents decide to stay or leave based only on the most recent reward; in Appendix \ref{apx:memory-extension} we extend our fairness result to a setting in which agents have a two-period memory. The fact that the same results hold there suggests that similar results may hold for even more general models of memory. However,  generalizing the memoryless assumption of our model is not a main objective of our work.} Indeed, the assumption is motivated by the fact that {agents typically lack insight into the algorithms that generate the decisions they receive (e.g., why an algorithm paid out a reward in a given period, in our setting).}
{Moreover, the memoryless assumption follows a long tradition of models that consider agent attrition (also referred to as disengagement, in some settings) in the operations literature. For instance, within the context of recommender systems, \citet{ben2022modeling} and \citet{bastani2021learning} assume that the probability a user disengages with the recommendations depends only on the quality of the most recent recommendation. \citet{afeche2017customer} similarly note that this sort of  ``recency effect'' is typically assumed in the customer retention setting, in models that link demand to past {\it service levels} \citep{hall2000customer, ho2006incorporating, liu2007dynamic}. \citet{lemmens2020managing} also consider memoryless customers in their churn prediction problem.
{As a result, the focus of our work is not on improving 
 existing models of agent memory, but rather on leveraging existing memoryless}  models to gain insights into {\it fairness considerations} for these well-studied systems.}

We conclude the discussion of the memoryless assumption by noting that the abstraction of a period {is very general.} {For example, in a contractual employment setting, a period could be considered to be the length of the contract. In a noncontractual setting, a period could constitute however long agents are believed to consider past rewards before making the decision to leave the system. In the setting where agents are customers, a period would be the duration of the subscription contract.}

\paragraph{Time-invariance of the revenue function.} Another assumption upon which our model relies is the fact that the revenue function depends only on the number of agents in the system in a given period. {{In an employment setting, for instance,} the time-invariant assumption models a {\it mature} market, with newsvendor-like dynamics{. The work performed in the system can be viewed as ``low-skill,'' in the sense that workers arriving to the system are homogeneous, and the decision-maker does not benefit from workers gaining skill specificity with time. In the customer retention setting, on the other hand, stationarity of the revenue function is a reasonable assumption within the context of profit generated from the number of active subscribers {in a mature market} (ignoring heterogeneity in subscription plans).}} \chedit{An interesting question beyond the scope of our work is whether a dynamic policy outperforms a static policy when the revenue function is dictated by a state that can be either low or high; though a static policy would do much worse in such a setting, it is unclear whether a dynamic policy can outperform a static one without violating our fairness constraints.}

\paragraph{Unconditional versus conditional incentives.}  {As in \citet{lemmens2020managing}, we focus on {\it unconditional} incentives, wherein the decision-maker pays out the reward {\it independently} of the agent's decision to stay or leave, as opposed to {\it conditional} incentives. Empirical evidence of the effectiveness of such unconditional incentives has been found in the behavioral sciences, e.g., within the context of physician and patient surveys~\citep{abdulaziz2015national,young2015unconditional,rosoff2005response}, in addition to clinical study enrollment~\citep{young2020unconditional,kumar2022randomized}. {We believe that the analysis of conditional incentives can be similarly approached.}

	\section{Optimal Policies via the Deterministic Relaxation}\label{sec:main-results}

{In this section we design and analyze a heuristic policy within the context of the deterministic system.  We begin by formalizing the group fairness constraint, first introduced in \cref{sec:preliminaries}.
}

\begin{definition}[Group-fair policy]\label{def:fair_policy}
A policy $\widetilde{\varphi}$ defined by sequence of reward distributions $(\bx(t))_{t \in \mathbb{N}^+}$ is \emph{group-fair} if, for all $\delta > 0$, there exists $\tau_0 \in \mathbb{N}^+$ such that for all $\tau > \tau_0$:
\begin{align}\label{eq:fair-policy}
    \bigg{\lVert}\frac{1}{\sum_{t=t'}^{t'+\tau}\widetilde{N}^\varphi_i(t)}\sum_{t=t'}^{t'+\tau} \widetilde{N}^\varphi_i(t)\bx(t) - \frac{1}{\sum_{t=t'}^{t'+\tau}\widetilde{N}^\varphi_j(t)}\sum_{t=t'}^{t'+\tau} \widetilde{N}^\varphi_j(t)\bx(t) \bigg{\rVert}_1 < \delta  \qquad \, \forall \, t' \in \mathbb{N}^+, \, \forall \, i,j \in [K]. 
\end{align}
\end{definition}
Informally, a group-fair policy guarantees that, over any long enough time interval, the expected reward distributions respectively observed by different agent types do not differ too greatly.

We first show that, despite the unwieldiness of the group fairness constraint, there exists an exceedingly simple group-fair policy that is optimal in the context of the deterministic system: a policy that pays out the {\it same} distribution in each period.

\subsection{Optimality of the fluid-based heuristic}

 Consider the following optimization problem, termed \fluidopt, which computes the optimal {\it static} policy in the deterministic system described above:
\begin{align}\label{eq:fluid-opt}
    \optfluidprofit := \max_{\xvec \in \simplex^{|\rewardset|}, {\mathbf{\fluidn} \in \mathbb{N}^K}} &\rev\left(\sum_i \fluidn_i\right) - \left(\sum_r r x_r\right)\left(\sum_i \fluidn_i\right)\tag{\fluidopt}\\
    \text{s.t.} \qquad &\lambda_i=\fluidn_i\sum_r\ell_i(r)x_r \quad \forall \, i \in [K]. \notag 
\end{align}
Here, the stability constraint ensures that, for each type, the number of arrivals and departures are equal, and follows from plugging $x_r(t) = x_r$, for all $r \in \rewardset, t \in \mathbb{N}^+$ into \eqref{eq:inductive-deterministic}.  Note moreover that omitting the group fairness constraint \eqref{eq:fair-policy} is without loss of generality, as static policies are necessarily group-fair. We have the following theorem.

\begin{theorem}\label{thm:static-policies-are-opt-for-one-type}
Let ${\Phi}$ denote the space of all fair policies. Then, $\sup_{\varphi\in\Phi}\widetilde{\Pi}(\varphi) = \widetilde{\Pi}^*$. That is, there exists an optimal fair policy that is static.
\end{theorem} 

In the remainder of the paper, we refer to the optimal static policy as the \emph{fluid heuristic}. 
 \chedit{The proof of \cref{thm:static-policies-are-opt-for-one-type} is constructive. In particular, we show that the static policy which allocates each reward $r \in \Xi$ according to its long-run average probability under any fair dynamic policy induces a weakly higher long-run average revenue at a weakly lower cost, thus implying weakly improved profit. In fact, in Appendix \ref{apx:fairness-thm} we prove an even stronger statement: that in a system with {\it endogenous} arrivals, where types choose to join the system by comparing their respective long-run average rewards to a reservation wage (see Appendix \ref{apx:fairness-thm} for a formal specification of such a model), there exists an optimal fair policy that is static. Since exogenous arrivals are a special case of endogenous arrivals (i.e., all types have a reservation wage of zero), we obtain \cref{thm:static-policies-are-opt-for-one-type}.}

\subsection{Impact of discrimination by type}\label{sec:pof}

We next investigate the impact of the two fairness constraints imposed. In particular, when expanding the space of policies beyond fair ones, one approach a decision-maker could take would be in the flavor of \emph{learn, then discriminate}: by deploying machine learning algorithms to learn agents' types, a decision-maker can leverage this additional information to then pay agents of different types different amounts. We say that such policies {\it explicitly}  discriminate. 

{

\cref{prop:explicit-wage-disc} formalizes the intuition described above, that policies that learn agent types and target ``cheaper'' agents can greatly outperform optimal fair policies.

\begin{proposition}\label{prop:explicit-wage-disc}
Consider the setting with $K = 2$, $\rewardset = \{0,v_1,v_2\}$, $v_1 < v_2$, and the following departure probabilities:
\begin{align*}
    \ell_1(r') = \begin{cases}
    1 \quad &\mbox{if }r' = 0 \\
    0 \quad &\mbox{if }r' \in \{v_1,v_2\} 
    \end{cases} \qquad \text{and} \qquad 
    \ell_2(r') = \begin{cases}
    1 \quad &\mbox{if }r' \in \{0,v_1\}\\
    0 \quad &\mbox{if }r' = v_2.
    \end{cases}
\end{align*}
Moreover, let $R(\widetilde{N}) = \alpha \min\{\widetilde{N}, D\}$, $\alpha > 2v_2$, and {$\lambda_1 = D/4, \lambda_2 = D/2$}.
Then, there exists a policy~$\varphi^b$ that {\it explicitly}  discriminates such that $\widetilde{\Pi}(\varphi^b) - \widetilde{\Pi}(\varphi^s) = \Omega(D)$, where $\varphi^s$ is the optimal static policy. 
\end{proposition}

The policy $\varphi^b$ that we construct is {\it belief-based}, i.e., it targets cheaper type 1 agents by first learning their type, and then keeping them in the system, all the while keeping type 2 agents out of the system. Specifically, $\varphi^b$ learns the type of agents early on by paying all arriving agents~$v_1$. If an agent stays in the system after having been paid~$v_1$, then this agent is necessarily a type 1 agent, who is ``cheaper'' to keep in the system than a type 2 agent. Once enough type 1 agents are in the system, the policy no longer needs to keep arriving agents in the system, and can pay them nothing for the rest of time.

The above policy clearly violates our first fairness desideratum of drawing rewards from the same distribution for all agents {\it within} a given period. Our next result shows that there exist policies that satisfy this first fairness constraint, but fail to be group-fair; moreover, avoiding group-fairness allows this policy to outperform any fair policy by an unbounded amount. 
We refer to this more subtle version of discrimination, which pays agents in the same period according to the same distribution, as \emph{implicit  discrimination}. {In order to illustrate this, we introduce the notion of a {\it cyclic policy}.

\begin{definition}[Cyclic policy]\label{def:cyclic-pol}
Policy $\varphi$ is \emph{cyclic} if there exists $\tau \in \mathbb{N}^+$ such that $\bx(t+\tau) = \bx(t)$ for all $t \in \mathbb{N}^+$. The smallest $\tau$ for which this holds is the cycle length of policy $\varphi$, which we term $\tau$-\emph{cyclic}.
\end{definition}

When making the distinction between a $\tau$-cyclic policy $\varphi^\tau$ and another policy $\varphi$, we sometimes use $\widetilde{N}_i^{\tau}(t)$, for $i \in [K], t \in \mathbb{N}^+$. \cref{ex:steady-state-cyclic} shows that cyclic policies may implicitly discriminate, and outperform the optimal static policy.
}
}

\begin{proposition}\label{ex:steady-state-cyclic}
Suppose $K = 2$, and $\lambda_2 = \lambda, \lambda_1 = {0.1\lambda}$, $\lambda > 0$. Let $\Xi = \{0,r\}$, for some $r > 0$, with departure probabilities given by:
\begin{align*}
    \ell_1(r') = \begin{cases}
    0 \quad &\mbox{if }r' = r \\
    0.1 \quad &\mbox{if }r' = 0
    \end{cases} \qquad \text{and} \qquad 
    \ell_2(r') = \begin{cases}
    0.5 \quad &\mbox{if }r' = r \\
    1 \quad &\mbox{if }r' = 0.
    \end{cases}
\end{align*}
Suppose moreover that $\rev(\widetilde{N}) = \alpha \widetilde{N}$, $\alpha \in [0.7r, r)$. Consider the cyclic policy $\varphi^c$ of length $2$ which alternates between the two rewards in every period, i.e., the policy defined by $(\bx_r(t), t \in \mathbb{N}^+)$ such that:
\begin{align*}
    x_r(t) = \begin{cases}
    1 \quad &\mbox{if } t \text{ odd} \\
    0 \quad &\mbox{if } t \text{ even}.
    \end{cases}
\end{align*}
Then, $\widetilde{\Pi}(\varphi^c) -\widetilde{\Pi}(\varphi^s) = \Omega(\lambda).$
\end{proposition}

The cyclic policy described above engages in strategic {\it reward slashing}: it induces a large number of type 2 agents to stay in the system every other period, thus benefiting from their presence in the next period. In this next period, however, the decision-maker is able to retain all of its revenue as net profit by not incentivizing agents to stay in the system. 
\chedit{We show in \cref{prop:example-disc}} that under $\varphi^c$, while a type 1 agent in expectation receives the higher reward approximately 50\% of the time, a type 2 agent is only paid the higher reward 40\% of the time in expectation. Thus, $\varphi^c$ fails to satisfy the group-fairness constraint. {{(This instance technically violates the assumption that $\ell(\rmax) > 0$.} We chose 
{these inputs for ease of exposition; one can similarly construct instances where $\ell_1(\rmax) = \epsilon$ for small enough $\epsilon > 0$.})}

{In both examples constructed above, a reward of 0 and high exogenous arrival rates were chosen for clarity of exposition. One can similarly construct an example with $\rmin > 0$, and significantly smaller arrival rates (e.g., $\tilde{\lambda}_1 + \tilde{\lambda}_2 = 0.01(\lambda_1 + \lambda_2)$), with a significantly longer learning period / period of building up the number of agents in the system. Thus, these insights are not intrinsically tied to the exogenous arrival rate, or a large presence of ``free'' agents.} \chedit{Finally, this phenomenon is also not tied to the fact that agents join the system independent of their expected earnings. In this example, the expected reward of type 1 is $\frac{19}{39}\cdot r$, and the average reward of type 2 is $0.4r$  (see \cref{prop:example-disc} for a derivation). Hence, in the setting with endogeneous participation decisions (see Appendix \ref{apx:fairness-thm}), as long as type 1 and 2 agents have reservation wages of at most $\frac{19}{39}\cdot r$ and $0.4r$, respectively, both types would choose to join the system under an endogenous arrival model as in \cref{remark:endo}, and type 2 agents would still receive lower rewards on average.} 

{We conclude the section by noting that a natural quantity to consider is the {\it price of fairness}, i.e., the worst-case ratio (across all problem instances) between the decision-maker's optimal profit with all fairness constraints relaxed, and her profit under the optimal fair policy. A slight modification to the instance in \cref{ex:steady-state-cyclic} immediately gives us that the price of fairness in our setting is {\it unbounded}. To see this, let $v^*$ denote the value of the optimal fair solution for this instance. Defining revenue function $\widehat{R}(\widetilde{N}) = R(\widetilde{N}) - v^*$, the optimal fair solution for this new problem instance achieves a profit of zero, whereas the cyclic policy still achieves strictly positive profit, thus resulting in an unbounded price of fairness.}

\section{Structure and analysis of the fluid heuristic policy}\label{sec:asymptotic-opt}

Leveraging a deterministic relaxation to develop asymptotically optimal policies is a classical approach in many stochastic control problems. 
In these models, not only is the deterministic relaxation a natural upper bound, but it is also an attractive candidate to develop good policies due to its tractability; in particular, the corresponding fluid problem can typically be cast as either a linear or convex program. In this section, we first establish that our problem does not inherit such a convenient structure. {\cref{prop:fluid-non-cvx} formalizes this.}

{
\begin{proposition}\label{prop:fluid-non-cvx}
\ref{eq:fluid-opt} is, in general, non-convex.
\end{proposition}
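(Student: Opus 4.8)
The plan is to prove the statement by exhibiting an explicit instance in which \ref{eq:fluid-opt} fails to be a convex program. I would take the simplest nontrivial case, $K=1$ and $|\rewardset|=2$ with $\rewardset=\{r_1,r_2\}$ and $\ell(r_1)\neq\ell(r_2)$, and argue that the \emph{feasible region} is already non-convex. The source of the difficulty is the stability constraint $\lambda=\fluidn\sum_r\ell(r)x_r$, which is bilinear: it multiplicatively couples the scalar $\fluidn$ with the distribution $\xvec$. Since this constraint is binding, it forces, for each feasible $\xvec=(x_{r_1},x_{r_2})$ on the simplex, the value $\fluidn=\lambda/\big(\ell(r_1)x_{r_1}+\ell(r_2)x_{r_2}\big)$, i.e.\ the reciprocal of an affine function of $\xvec$.

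The key step is then to exhibit two feasible points whose midpoint is infeasible. Concretely, I would pick $\xvec^{(1)}$ and $\xvec^{(2)}$ distinct on the simplex, set $\fluidn^{(j)}=\lambda/\big(\ell(r_1)x_{r_1}^{(j)}+\ell(r_2)x_{r_2}^{(j)}\big)$ so that each $(\xvec^{(j)},\fluidn^{(j)})$ is feasible, and then inspect the convex combination $\tfrac12\big(\xvec^{(1)}+\xvec^{(2)},\,\fluidn^{(1)}+\fluidn^{(2)}\big)$. Because $t\mapsto 1/t$ is strictly convex, the averaged $\fluidn$ strictly exceeds the value $\lambda/\big(\ell(r_1)\bar x_{r_1}+\ell(r_2)\bar x_{r_2}\big)$ required by the binding constraint at the averaged $\xvec$; hence the midpoint violates stability, and the feasible set is not convex. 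Since the feasible set of \ref{eq:fluid-opt} is non-convex for this instance, the program is non-convex in general.

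The main obstacle is conceptual rather than computational: one must guard against the objection that a non-convex \emph{description} of the feasible set could still hide a convex program after reparametrization. To preempt this, I would eliminate $\fluidn$ via the binding constraint, reducing the objective to $\rev\big(\lambda/\sum_r\ell(r)x_r\big)-\lambda\,\big(\sum_r r x_r\big)\big/\big(\sum_r\ell(r)x_r\big)$ over the simplex, and argue that for strictly concave increasing $\rev$ this reduced objective is not concave; it suffices to evaluate it at two distinct points and their midpoint and observe that Jensen's inequality fails. This confirms that the non-convexity is intrinsic to the bilinear coupling rather than an artifact of how $\fluidn$ is represented, which is precisely what motivates the structural analysis of Section~\ref{ssec:fluid-structure}.
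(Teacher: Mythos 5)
Your overall route is the same as the paper's: eliminate $\fluidn$ via the binding stability constraint and show that the resulting one-dimensional objective over the simplex fails to be concave. The paper does exactly this with the instance $K=1$, $\rewardset=\{0,r\}$, $\ell_1(0)=1$, and \emph{linear} revenue $\rev(\fluidn)=\fluidn$, obtaining
$\widehat{\Pi}(\bx)=\lambda_1\frac{1-rx}{1-(1-\ell_1(r))x}$ and computing
$\frac{\partial^2}{\partial x^2}\widehat{\Pi}(\bx)=2\lambda_1\frac{(1-\ell_1(r))(1-\ell_1(r)-r)}{(1-(1-\ell_1(r))x)^3}>0$ whenever $r<1-\ell_1(r)$, so the maximization is of a strictly convex function.

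The gap in your writeup is in the last step. You assert that ``for strictly concave increasing $\rev$ this reduced objective is not concave; it suffices to evaluate it at two distinct points and their midpoint and observe that Jensen's inequality fails,'' but you never exhibit the instance or the points, and the assertion is not true for all instances: if $\ell$ is constant the reduced objective is affine in $\xvec$, and even in the two-reward example the sign of the second derivative flips --- for $r>1-\ell_1(r)$ the reduced objective is concave in $x$. So the failure of Jensen at a midpoint is exactly the thing that must be demonstrated by a concrete parameter choice, and it is the entire content of the proposition; without it the proof is incomplete. (A strictly concave $\rev$ also works against you here, since $\rev\bigl(\lambda/\widehat{\ell}(\bx)\bigr)$ is a concave increasing function of a convex function and its curvature can go either way; the paper's choice of linear $\rev$ makes the revenue term manifestly convex and the computation clean.) Your first argument --- non-convexity of the feasible set from the bilinear constraint --- is correct but carries little weight: the constraint $\fluidn\in\mathbb{N}^K$ already makes the feasible set non-convex for trivial integrality reasons, and as you yourself note, a non-convex description does not preclude a convex reformulation. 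You correctly pivot to the objective for this reason, but then stop one computation short of a proof. To repair it, fix the paper's instance (or any instance with $r<1-\ell_1(r)$) and carry out the second-derivative or three-point evaluation explicitly.
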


{Thus, solving~\ref{eq:fluid-opt} efficiently is \emph{a priori} a difficult task given its high-dimensional and non-convex nature. Though the issue of non-convexity is typically circumvented via an exchange of variables~\citep{cao2020dynamic}, this approach fails in our setting due to the nonlinearity in the total cost of employing $\fluidn_i$ agents, $\fluidn_i\left(\sum_r rx_r\right)$, for $i \in [K]$. Further, the arbitrary heterogeneity in agent types (i.e., the lack of assumptions on $\left\{\ell_i(r)\right\}_{i \in [K]}$ stronger than non-increasing in $r$), makes a crisp characterization of the optimal solution seem elusive.}
} {Despite this, we derive structural properties that allow us to efficiently identify a solution to this a priori intractable problem. Thereafter, we show that our solution is asymptotically optimal for the stochastic system.}

\subsection{Structure and computation of the fluid optimum}\label{ssec:fluid-structure}
 
{\cref{thm:main-theorem} first formalizes that it is possible to derive tractable optimal solutions based on a surprising structural property.}

\begin{theorem}\label{thm:main-theorem}
For any instance of ~\ref{eq:fluid-opt}, independent of $K$ and $|\rewardset|$, there exists an optimal solution $\nf{\bx}^\star$ such that $|\supp(\nf{\bx}^\star)|\leq 2$.
\end{theorem}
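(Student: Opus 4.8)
The plan is to first eliminate the workforce variables and reduce \ref{eq:fluid-opt} to an optimization over the reward distribution alone, whose objective depends on $\bx$ through only \emph{two} scalar statistics. Using the stability constraint $\lambda_i = \fluidn_i \sum_r \ell_i(r) x_r$ to substitute $\fluidn_i = \lambda_i / \sum_r \ell_i(r)x_r$, the objective becomes $G\big(N(\bx),\, \bar r(\bx)\big)$, where
\[
N(\bx) = \sum_{i} \frac{\lambda_i}{\sum_r \ell_i(r) x_r}, \qquad \bar r(\bx) = \sum_r r x_r, \qquad G(N, v) = \rev(N) - vN.
\]
Three properties drive the argument: $\bar r(\bx)$ is \emph{linear} in $\bx$; $N(\bx)$ is \emph{convex} in $\bx$ (a nonnegative combination of reciprocals of positive affine functions, since each $\ell_i(r)\in(0,1]$); and $G$ is \emph{strictly decreasing} in its second argument because $N>0$. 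Crucially, although the problem ranges over $K$ heterogeneous types, all of this heterogeneity is absorbed into the single convex function $N(\bx)$, so the objective never sees more than the pair $(N(\bx), \bar r(\bx))$.

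Next I would fix the workforce at its optimal level. Let $\bx^\star$ be an optimal solution and set $N^\star = N(\bx^\star)$. Because $G$ is strictly decreasing in $\bar r$, it suffices to exhibit a $2$-sparse $\bx$ with $N(\bx) = N^\star$ and $\bar r(\bx) \le \bar r(\bx^\star)$: such an $\bx$ attains objective $G(N^\star, \bar r(\bx)) \ge G(N^\star, \bar r(\bx^\star))$ and is therefore also optimal. This recasts the theorem as the statement that the minimum-average-reward problem at a fixed workforce, $\min\{\bar r(\bx) : N(\bx) = N^\star,\ \bx \in \simplex^{|\rewardset|}\}$, admits a $2$-sparse solution.

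The core is a support-reduction lemma: if $\bx$ has $|\supp(\bx)| \ge 3$ and $N(\bx) = N^\star$, I can produce $\bx'$ with strictly smaller support, $N(\bx') = N^\star$, and $\bar r(\bx') \le \bar r(\bx)$; iterating drives the support down to $2$. To prove it, pick any three rewards $r_a < r_b < r_c$ in the support and reallocate only their combined mass $w = x_{r_a} + x_{r_b} + x_{r_c}$, holding all other coordinates fixed. On the triangle $T = \{(x_{r_a}, x_{r_b}, x_{r_c}) \ge 0 : x_{r_a}+x_{r_b}+x_{r_c} = w\}$, the restriction of $N$ is still convex and $\bar r$ is affine, so $D = \{N \le N^\star\} \cap T$ is a compact convex planar region and the feasible arc $\gamma = \{N = N^\star\} \cap T$ is part of $\partial D$. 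The key geometric fact is that a linear functional restricted to the boundary of a planar convex body is unimodal, so its minimum over any sub-arc is attained at an endpoint of that sub-arc --- provided the sub-arc does not contain the body's global minimizer of $\bar r$ in its relative interior. Here that global minimizer is the vertex of $T$ placing all mass $w$ on the \emph{lowest} reward $r_a$; shifting mass toward $r_a$ raises each $\sum_r \ell_i(r)x_r$ and hence lowers $N$, so this vertex lies in $D$ with $N < N^\star$ and is off $\gamma$. Therefore $\min_\gamma \bar r$ is attained at an endpoint of $\gamma$, i.e., where $\gamma$ meets an edge of $T$ and one of $x_{r_a}, x_{r_b}, x_{r_c}$ vanishes; this yields the desired $\bx'$.

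I expect the main obstacle to be precisely the step that the naive convex relaxation cannot perform. One cannot simply minimize $\bar r$ over the convex set $\{N \le N^\star\}$: that relaxation collapses to the all-minimum-reward point $\unitvec_{\rmin}$ (with $N = \Lambdalb < N^\star$), discarding the workforce and the objective. The delicate point is that the workforce constraint must be kept as an \emph{equality}, which defines a nonconvex surface for $K \ge 2$ (this nonconvexity is exactly \cref{prop:fluid-non-cvx}), while $G$ itself is indefinite. The resolution is the planar boundary-arc argument above, which extracts a $2$-sparse minimizer of a linear objective over this equality set three rewards at a time, entirely through the convexity of $N$ and independently of $K$ and $|\rewardset|$. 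Care is also needed for degenerate arcs --- e.g., when the lowest-reward vertex already satisfies $N = N^\star$, in which case it is itself the sought sparse point --- and to confirm that $\gamma$ always reaches $\partial T$ so that a coordinate genuinely drops out.
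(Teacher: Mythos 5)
Your proposal is correct, and it rests on the same two pillars as the paper's proof --- the convexity of $N(\bx)=\sum_i \lambda_i/\sum_r\ell_i(r)x_r$ in $\bx$, and a support-reduction step that operates on three rewards at a time --- but the scaffolding around those pillars is genuinely different. The paper first passes to the budgeted supply-maximization problem \ref{eq:supply-opt} (with the budget set to the cost of the optimal workforce) and then to an equivalent reward-minimization problem with the budget held at equality; its reduction step is an explicit algebraic case analysis on a triple of \emph{interlacing} rewards, applying the gradient inequality for each convex $\fluidn_i$ to show that at least one of two candidate reallocations weakly increases $N$ at fixed $\avgreward$, followed by a contradiction-and-repair argument when the increase is strict. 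You instead fix the workforce level $N^\star$ outright and minimize $\avgreward$ over the nonconvex equality surface $\{N=N^\star\}$, which eliminates both the budget bookkeeping and the interlacing machinery: the planar slice $T$, the convex region $D=\{N\le N^\star\}\cap T$, and the fact that a linear functional on the boundary of a planar convex body attains its minimum over a sub-arc at an endpoint unless the sub-arc contains the global argmin (here the all-mass-on-$r_a$ vertex, which sits in $D$ off the level curve) do all the work in one stroke, with no contradiction step. The trade-off is that your boundary-arc argument needs the degenerate cases you flag to be handled explicitly --- the vertex landing on $\gamma$, a lower-dimensional $D$, or a disconnected $\gamma$ --- though all are benign since every component of $\gamma$ has its endpoints on $\partial T$; the paper's computation is clunkier but verifiable line by line and produces the interlacing-pair structure it reuses in the feasibility-repair step and in the dispersion result of \cref{thm:main-theorem2}.
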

{Theorem~\ref{thm:main-theorem} presents a {\it structural} characterization of the fluid optimal solution which has far-reaching implications for the {\it analytical} and {\it computational} tractability of~\ref{eq:fluid-opt}.  In particular, suppose we fix $r_1,r_2 \in \rewardset$ and assume $\supp(\bx^\star) = \{r_1, r_2\}$, with $x$ being the weight placed on $r_1$, and $1-x$ the weight placed on $r_2$. Then, \ref{eq:fluid-opt} becomes a {one-dimensional} problem in $x$, and can be solved via the KKT conditions. This then implies that 
an {\it optimal solution to the non-convex problem} \ref{eq:fluid-opt} can be found efficiently {by exhaustively searching over $\binom{|\rewardset|}{2}$ possible pairs of rewards}.} 
We provide some high-level intuition for the strategy used to prove this key structural result below, and defer the proof of the theorem to Appendix \ref{apx:structure-of-fluid}.

Our proof technique is motivated by the following natural interpretation: a profit-maximizing solution must trade off between recruiting enough agents in order to collect high revenue, all the while keeping costs relatively low. We disentangle these two competing effects by introducing a closely related {\it budgeted supply maximization} problem, which we term~\ref{eq:supply-opt}. {At a high level,~\ref{eq:supply-opt} simplifies~\ref{eq:fluid-opt} by removing a degree of freedom: namely, {\it how much the decision-maker can spend to retain agents}. Given this added constraint, there are no longer two competing effects: the goal of the decision-maker is to simply recruit as many agents as possible.} We then constructively show that, given an optimal solution to \ref{eq:supply-opt} that places positive probability mass on more than two rewards, there exists a ``support reduction'' procedure that takes three appropriately chosen rewards and allocates all of the weight on one reward to the other two. Crucially, this procedure is designed so that both the average reward paid out to agents and the total number of agents are maintained. {We complete the proof by showing that, if an optimal solution to \ref{eq:supply-opt} has a support of size no more than 2, it must be that the same holds for \ref{eq:fluid-opt}.}

\subsection{Asymptotic optimality of the fluid heuristic in the stochastic system}

We now analyze the performance of the fluid-based heuristic in our original stochastic system. Let $\varphi^\star$ be the policy that draws rewards according to $\bx^\star$ in every period. 
\cref{prop:fluid-ub-for-all-theta} first establishes that $\optfluidprofit$ is an upper bound on the profit of the optimal policy in the system parameterized by~$\theta$.

\begin{proposition}\label{prop:fluid-ub-for-all-theta}
Let $\scaledv^\star~=~ \sup_{\varphi\in\Phi}\scaledv(\varphi)$. $v^\star_\theta \leq \optfluidprofit$ for all $\theta \in \mathbb{N}^+$.
\end{proposition}

{
As a corollary of \cref{prop:det-to-stoch} and \cref{prop:fluid-ub-for-all-theta}, we obtain the standard $\mathcal{O}\left(\frac{1}{\sqrt{\theta}}\right)$ additive {loss} bound of the fluid heuristic. {Under a mild additional technical condition, we further prove that~$\scaledv(\varphi^\star)$ actually converges to this upper bound at a {\it linear} rate.}
Let {$\Lambdaub = \sum_i \frac{\lambda_i}{\ell_i(\rmax)}$ and $\Lambdalb = \sum_i \frac{\lambda_i}{\ell_i(r_{\min})}$}. Theorem~\ref{thm:asymptotic-result} characterizes the performance of policy $\varphi^\star$ relative to $\optfluidprofit$.}

\begin{theorem}\label{thm:asymptotic-result}
Suppose $\rev$ is twice-continuously differentiable over $\mathbb{R}_{> 0}$, and that there exists a constant $\alpha > 0$ such that $\rev''(n) \geq -\theta^{\alpha}$ for all $n \in \left[\frac1\theta,\Lambdaub\right)$. Then, there exists a constant $C > 0$ such that  $\scaledv(\varphi^\star) \geq {\optfluidprofit} - \frac{C}{\theta}.$
\end{theorem}

It is easy to check that ``reasonable'' concave functions such as $\rev(n) = n^{\beta}$, $\beta \in (0,1)$, and $\rev(n) = \log(1+n)$ satisfy the conditions of Theorem~\ref{thm:asymptotic-result}.

At a high level, our system distinguishes itself from {many} systems considered in the classical stochastic control literature due to the fact that it is ``self-adjusting,'' or, more specifically, ``self-draining.'' What we mean by this is that, even though $\varphi^\star$ is a static policy, its effect on the system does in fact depend on the current state: the higher the number of agents in the system, the more agents depart from the system in the next period, on average. This {directly} follows from the fact that the mean of a Binomial distribution is increasing in the number of trials. Such a ``self-draining'' feature acts as self-regulation, preventing the system from being overloaded with agents who yield little marginal revenue relative to the reward paid out. On the other end of the spectrum, the lower bound on $\rev''$ for small values of $x$ precludes exponentially steep functions, for which the revenue loss could be large when there are few agents in the system. A similar ``self-adjustment'' phenomenon leading to $\mathcal{O}\left(\frac1\theta\right)$-convergence was identified in~\citet{cao2020dynamic}, though for an entirely different setting.

\paragraph{Numerical experiments.} We conclude this section by complementing our theoretical results with numerical experiments. In particular, we illustrate the  performance of $\varphi^\star$ relative to two natural reward schemes a decision-maker may use: a deterministic payout in each period, and a lottery. We first describe the experimental setup.

\medskip

\emph{{Reward} schemes.} We assume the revenue function is newsvendor-like: $\rev(x) = 100\min\left\{x, 150\right\}$ for $x \in \mathbb{R}_{+}$. We let $\rewardset = \{15, {16,}\ldots,60\}$ and consider three different reward schemes: $(i)$ 
the fluid-based heuristic, $(ii)$ {a lottery with variance $\widehat{\sigma}^2$ such that agents receive $\mu$ in expectation, with $\mu = \sum_r rx_r^\star$ and $\widehat{\sigma} = 10$,} and $(iii)$ a fixed reward $r^{\text{det}}$ (specifically, the optimal fixed reward).

\medskip

\emph{Agent types.} We let $K = 3$, with $\lambda_1 = \lambda_2 = \lambda_3 = 10/3$, and consider the following departure probability functions, depicted in Figure~\ref{fig:fluid-probs}:
\begin{enumerate}
    \item type 1 agents: $\ell_1(r) = \min\left\{1,e^{\alpha_1(-r+15)}\right\}$, $\alpha_1 = 7/100$
    \item type 2 agents: $\ell_2(r) = -\alpha_2r + \beta_2$, $\alpha_2 = 1/45$, $\beta_2 = 4/3$
    \item type 3 agents: $\ell_3(r) = -\alpha_3r^2 + \beta_3r + \gamma_3$, $\alpha_3 = (1/2025), \beta_3 = 2/135, \gamma_3=8/9$.
\end{enumerate}

\medskip

\emph{Performance metric.} 
For each of the three reward schemes $\varphi$ described above, we compute
$\optfluidprofit - \scaledv(\varphi)$, for $\theta\in\{1,\ldots,5\cdot 10^3\}.$

\medskip

\emph{Results.} We show the results {(for 1,000 replications of simulations)} in Figure~\ref{fig:fluid-regret}. These results illustrate the strong performance of the fluid heuristic for this particular problem instance. In particular, we indeed observe that the fluid heuristic converges to $\optfluidprofit$, 
as predicted by our theoretical results. Moreover, the fluid policy 
outperforms the deterministic and lottery schemes, {both of which have performance loss bounded away from zero, i.e., they are not (asymptotically) optimal.} \chedit{We refer the reader to Appendix \ref{app:small_market_numerics} for further investigations of the asymptotic convergence, including in moderately sized markets. }

\begin{figure}
	\captionsetup[subfigure]{justification=centering}
	\centering
	\subfloat[$\ell_i(r)$ vs. $r$, for $i \in \{1,2,3\}$ \label{fig:fluid-probs}]{{\includegraphics[width=0.45\textwidth]{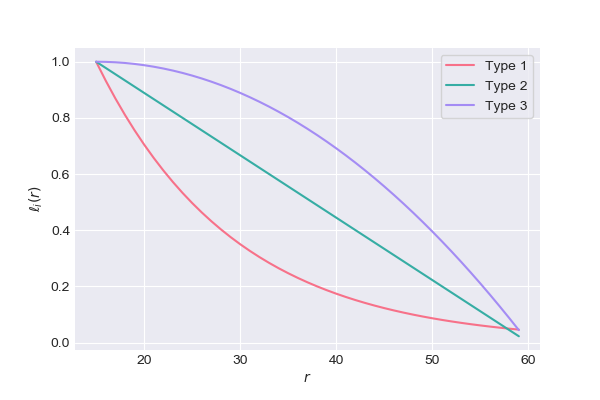} }}\quad%
	\subfloat[Additive loss vs. $\theta$\label{fig:fluid-regret}]{{\includegraphics[width=0.45\textwidth]{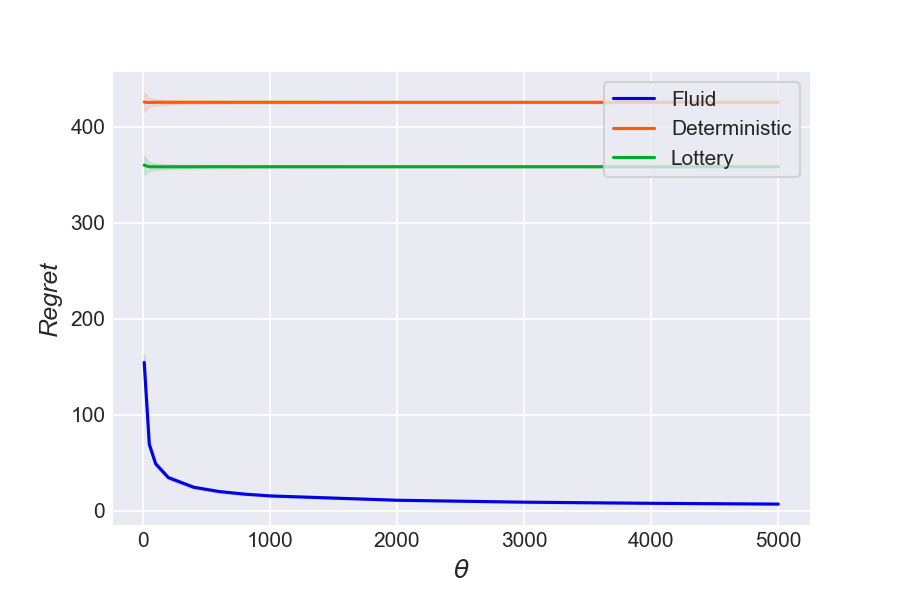} }}
	\caption{Additive loss of the fluid (dark blue curve), deterministic (dark orange curve), and lottery (green curve) heuristics.}%
	\label{fig:ogd-regret}%
\end{figure}


\section{Special cases}\label{sec:special-cases}

\chedit{The tractable structure of the fluid optimum enables not only our numerical experiments, but also allows us to gain additional analytical insights into optimal policies under more structured departure probability models. We first investigate how the optimal policy depends on the {\it convexity} of the departure probability function, and subsequently consider the optimal policy when agents' departure probability function is a type of softmax function.  
}

{
\subsection{Impact of convexity of departure function}\label{sec:cvx-ccv}

To understand the impact of the structure of the agents' departure probability functions on the optimal reward scheme, we define the notions of {\it convexity} (resp., {\it concavity}) over the reward set, as well as the {\it dispersion} of a reward scheme.
\begin{definition}\label{def:ccv}
The departure probability function of a type $i$ agent is \emph{strictly convex} if:
$$\ell_i(r_2) < \ell_i(r_1)\cdot \frac{r_2-r_3}{r_1-r_3} + \ell_i(r_3)\left(1-\frac{r_2-r_3}{r_1-r_3} \right) \qquad  \forall \, r_1 > r_2 > r_3 \in \rewardset.$$

The departure probability function of a type $i$ agent is \emph{strictly concave} if:
$$ \ell_i(r_2) > \ell_i(r_1)\cdot \frac{r_2-r_3}{r_1-r_3} + \ell_i(r_3)\left(1-\frac{r_2-r_3}{r_1-r_3} \right) \qquad \forall \, r_1 > r_2 > r_3 \in \rewardset.$$
\end{definition}

\revedit{
\begin{remark}
At a high level, concave departure probabilities correspond to ``risk-seeking'' agents, since for these agents the expected departure probability from any convex combination of two rewards is higher than the departure probability from handing out the convex combination of these two rewards deterministically. An example of this would be people playing the lottery with minuscule odds of claiming the jackpot and (due to the cost of a ticket) a negative return in expectation. 
Conversely, convex types can be viewed as ``loss averse,'' as they are more likely to stay if paid out a reward deterministically. This  behavior may be more prevalent when there is a significant cost to remaining active in a rewards program. For instance, the Chase Sapphire Reserve card costs \$550 a year, and comes with a deterministic \$300 travel credit (and a range of other benefits), that is renewed each year. There, we would assume that most customers prefer the certainty of their travel benefits over  a small chance to win a more valuable travel credit.
\end{remark}
}

We next introduce the notion of {\it dispersion} of a reward scheme. 

\begin{definition}[Maximal and minimal dispersion]
Consider any feasible solution $\bx$ to~\ref{eq:fluid-opt}, and suppose $\supp(\bx) = \left\{r, r'\right\}$, with $r > r'$. $\bx$ is said to be \emph{maximally dispersed} if $r = \rmax$ and $r' = \rmin$. $\bx$ is said to be \emph{minimally dispersed} if $r$ and $r'$ are consecutive rewards in $\rewardset$, or if $\bx = \unitvec_r$ for some $r \in \rewardset$.
\end{definition}

{The following proposition formalizes the intuitive idea that the dispersion of an optimal policy vastly differs depending on {the convexity of $\ell_i(\cdot)$}. }

\begin{proposition}\label{thm:main-theorem2}
 If $\ell_i(\cdot)$ is strictly convex for all $i \in [K]$, then $\nf{\bx}^\star$ is \emph{minimally} dispersed. On the other hand, $\ell_i(\cdot)$ is strictly concave for all $i\in[K]$, then $\bx^\star$ is \emph{maximally} dispersed.
\end{proposition}

\cref{thm:main-theorem2} tells us that, if $\ell_i(\cdot)$ is strictly convex for all types, in the limit where $\rewardset$ is a compact convex set, it is optimal for the decision-maker to guarantee a fixed reward; {in contrast, for types with strictly concave $\ell_i(\cdot)$}, the decision-maker should run a lottery. 
{The proof of \cref{thm:main-theorem2} crucially relies on the two-reward structure of the optimal static solution derived in Section \ref{ssec:fluid-structure}, thus highlighting its importance in terms of deriving insights into optimal compensation schemes for practical special cases of agent behavior.} 

We next numerically investigate whether these analytical insights when agents have homogeneous structure of departure probability functions port over to mixed populations.}

{
\paragraph{Numerical insights.\label{ssec:risk}}
We consider a setting with two types of agents defined by the following departure probability functions: 
\begin{align*}
\ell_1(r) = e^{-a_1r}, \ a_1 \in\{0.5,1,2,3,4\} \qquad \text{and} \qquad \ell_2(r) = \frac{1-e^{2(r-1)}}{1-e^{-2}}.
\end{align*}
\cref{fig:cvx-ccv-probs} illustrates these departure probabilities.

\begin{figure}
     \centering
     \subfloat[$\ell_1(r)$ vs. $r$]{\includegraphics[width = 0.33\textwidth]{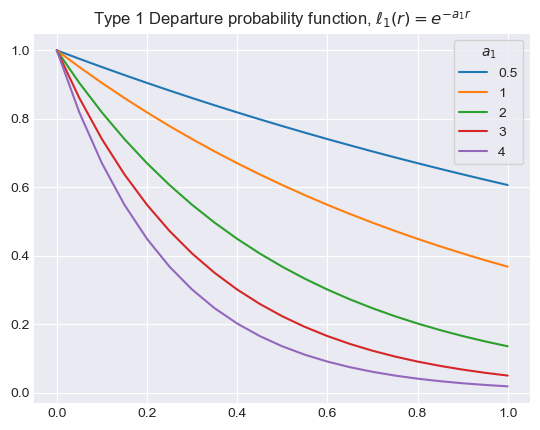}} \qquad
          \subfloat[$\ell_2(r)$ vs. $r$]{\includegraphics[width = 0.33\textwidth]{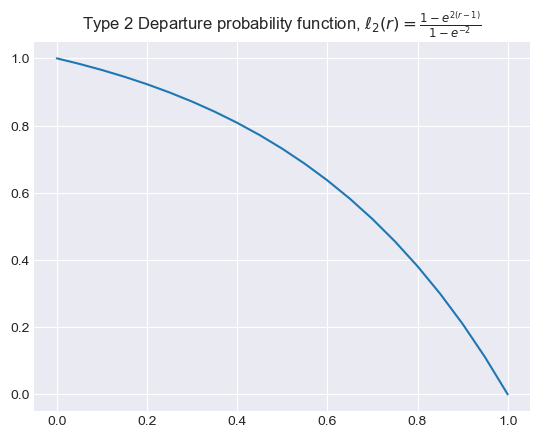}}
    \caption{Departure probability functions for convex and concave mixtures}
    \label{fig:cvx-ccv-probs} 
\end{figure}

The total arrival rate is normalized to 1, with $\alpha$ fraction of arrivals in each period being type 1 (convex) agents, and $1-\alpha$ being type 2 (convex) agents. 
The revenue function is newsvendor-like, with $R(N) = 5\min\{N,5\}$. Finally, $\Xi = \{0,0.05,\ldots,1\}$. 

\cref{fig:cvx-ccv-results} shows the coefficient of variation of the optimal reward distribution for $\alpha \in [0,1]$. Observe that the highest coefficient of variation occurs when $\alpha = 0$ (i.e., all arrivals have concave departure probability function), in line with our theoretical result around maximal dispersion. Conversely, the lowest coefficient of variation occurs at $\alpha = 1$ (i.e., all arrivals have convex departure probability function). In between, the coefficient of variation is decreasing in the fraction of concave arrivals (modulo slight perturbations due to discretization). For $a_1 \in \{2,3,4\}$ we observe a sharp phase transition: there exists a threshold fraction of convex arrivals past which the coefficient of variation of the optimal reward scheme is close to zero. As the convexity of $\ell_1$ increases, this threshold decreases. 

\begin{figure}[h!]
     \centering{\includegraphics[scale=0.35]{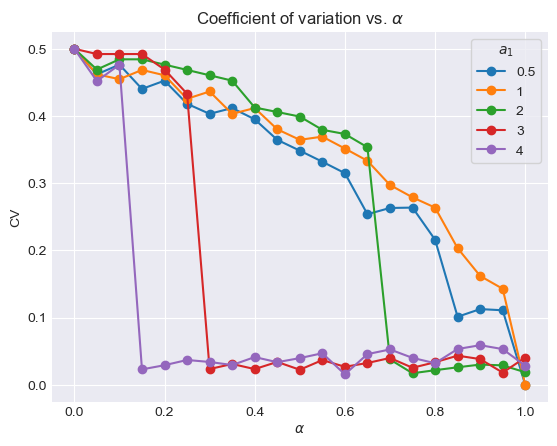}}
    \caption{Impact of population composition on dispersion of optimal reward distribution}
    \label{fig:cvx-ccv-results}
\end{figure}
}


{
\subsection{Linearized S-shape departure functions}

While \cref{sec:cvx-ccv} considered the setting where types' departure probability functions were either convex or concave, in this section we study the {setting in which} agents' departure probability functions that is similar to a softmax function. We define the notion of a ``linearized S-shape'' departure probability function below.
{
\begin{definition}[Linearized S-shape]\label{def:eps-noisy}
Departure probability function $\ell_i(\cdot)$ has a linearized S-shape if there exists $\epsilon, v_i > 0$ such that
\begin{align*}
    \ell_i(r) = \begin{cases}
    1 \quad &r < v_i-\epsilon \\
    \frac12-\frac{r-v_i}{2\epsilon} \quad &r \in [v_i-\epsilon, v_i + \epsilon] \\
    0 \quad &r > v_i+\epsilon
    \end{cases}
\end{align*}
\end{definition}
}

\begin{remark}
{In \cref{def:eps-noisy} we have relaxed the assumption that $\ell_i(\rmax) > 0$ for all $i \in [K]$. This however is without loss of generality for the class of revenue functions we consider in this section, as for these functions the profit would go to $-\infty$ if the policy were to deterministically pay out any reward $r \geq v_i+\epsilon$.}
\end{remark}
Figure \ref{fig:eps-noisy-loss-probs} illustrates these departure probabilities, for a fixed $v_i = 25$ and various values of~$\epsilon$. At a high level, $\epsilon$ can be seen as controlling agents' tolerance of reward uncertainty. As $\epsilon$ approaches 0, the above departure probability function mirrors that of an agent who, {upon receiving reward~$r$}, deterministically stays or leaves. As $\epsilon$ goes to $\infty$, we obtain $\ell_i(r) \to \frac12$ for all $r \in \Xi$, {and} the decision to stay or leave {becomes} independent of the reward paid out; {phrased differently, agents act independently of the reward they receive}.

\begin{figure}
     \centering{\includegraphics[scale=0.5]{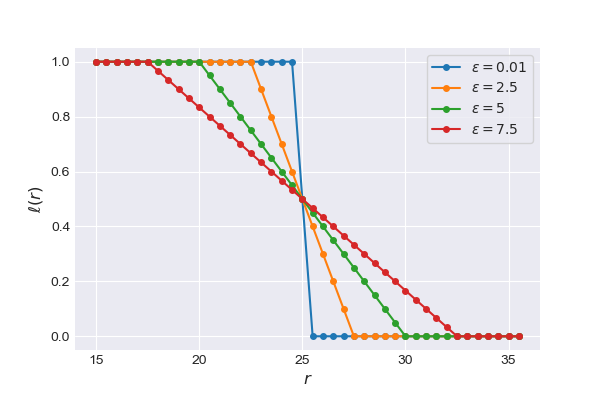}}
    \caption{Departure probability function $\ell(r)$ for various values of $\epsilon$, with $K = 1$ and $v = 25$.}
    \label{fig:eps-noisy-loss-probs}
\end{figure}

In the remainder of the section, for analytical tractability our results pertain to the setting where $K = 1$. However, our structural result regarding the fluid optimum allows us to gain numerical insights into {settings with $K > 1$}, as we will see later on in the section. For ease of notation, we omit the dependence of all quantities on the agent type. We moreover assume that $\rewardset$ is such that $v \in (\rmin, \rmax)$. 
Finally, we make the following assumption on the revenue function. 

\begin{assumption}\label{asp:c2} $\rev$ is twice-continuously differentiable, {with $\rev''(x) < 0$ for all $x \in \mathbb{R}_+$.} 
\end{assumption}

The following proposition characterizes the optimal solution $\bx^\star(\epsilon)$ to \ref{eq:fluid-opt}.

\begin{proposition}\label{prop:noisy-opt}
{Suppose $\epsilon \leq \min\{v-\rmin,\rmax-v\}$}. The optimal reward distribution $\bx^\star(\epsilon)$ to \ref{eq:fluid-opt} is a lottery that randomizes between $\rmin$ and $v+\epsilon$. Let $x^\star(\epsilon)$ denote the weight that $\bx^\star(\epsilon)$ places on $v+\epsilon$. Then, $x^\star(\epsilon) < 1$ for all $\epsilon > 0$. Moreover, the following holds:
\begin{enumerate}
    \item If $\rev'(\lambda) \leq v$, then $x^\star(\epsilon) = 0$ for all $\epsilon > 0$.
    \item If $\rev'(\lambda) > v$, then:
    \begin{enumerate}
    \item for all $\epsilon \leq \rev'(\lambda)-v$, $x^\star(\epsilon)$ is continuously decreasing in $\epsilon$, and
    \item for all $\epsilon > \rev'(\lambda)-v$, $x^\star(\epsilon) = 0$.
    \end{enumerate}
\end{enumerate}
\end{proposition}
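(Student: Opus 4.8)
The plan is to first collapse \ref{eq:fluid-opt} to a one-dimensional problem by exploiting the fact that, with $K=1$, its objective depends on the reward distribution $\bx$ only through the two scalars $\mu := \sum_r r x_r$ (expected reward) and $L := \sum_r \ell(r) x_r$ (expected departure probability). The stability constraint forces $\fluidn = \lambda/L$, so the objective equals $F(\mu,L) := \rev(\lambda/L) - \mu\,\lambda/L$, which for any fixed $L>0$ is \emph{strictly decreasing} in $\mu$. Hence at optimality $(\mu,L)$ must minimize $\mu$ among all pairs achievable at the same $L$; since the achievable set of $(\mu,L)$ is exactly the convex hull of $\{(r,\ell(r)) : r \in \rewardset\}$, the optimum lies on the lower-left boundary of this hull.

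The first main step is to identify that boundary as the chord joining $(\rmin,1)$ and $(v+\epsilon,0)$. Two domination observations dispose of the flat parts of $\ell$: every point $(r,1)$ with $r>\rmin$ is dominated by $(\rmin,1)$ (same $L$, larger $\mu$), and every point $(r,0)$ with $r>v+\epsilon$ is dominated by $(v+\epsilon,0)$. It then remains to show that the graph of $\ell$ over $[v-\epsilon,v+\epsilon]$ lies weakly above the chord, so that no convex combination using interior rewards can improve on it. Comparing the two linear pieces reduces this to $\tfrac{1}{2\epsilon} \geq \tfrac{1}{v+\epsilon-\rmin}$, i.e. $\epsilon \leq v-\rmin$ --- precisely the hypothesis $\epsilon \leq \min\{v-\rmin,\rmax-v\}$ (the second half guaranteeing $v+\epsilon \in \rewardset$). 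This geometric comparison is the crux of the argument and is exactly where the assumption on $\epsilon$ enters; it pins $\supp(\bx^\star(\epsilon))$ to $\{\rmin, v+\epsilon\}$, establishing the first assertion. (Equivalently, one may invoke the two-reward structure of \cref{thm:main-theorem} and then eliminate all other pairs by the same domination argument.)

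With the support fixed, I would let $x$ be the weight on $v+\epsilon$ and substitute $N := \fluidn = \lambda/(1-x)$ (so $x = 1 - \lambda/N$). A direct computation collapses the cost term $\mu\fluidn = \lambda\rmin + (v+\epsilon)(N-\lambda)$ and yields
\begin{align*}
g(N) = \rev(N) - (v+\epsilon)N + \lambda(v+\epsilon-\rmin), \qquad N \in [\lambda,\infty).
\end{align*}
Under \cref{asp:c2} this is strictly concave, so it has a unique maximizer characterized by the first-order condition $\rev'(N^\star) = v+\epsilon$; this maximizer is finite (otherwise the supremum in \ref{eq:fluid-opt} would fail to be attained at a finite point), which gives $x^\star(\epsilon) = 1 - \lambda/N^\star < 1$ for all $\epsilon > 0$.

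Finally, I would read off the remaining claims from the sign of $g'(\lambda) = \rev'(\lambda) - (v+\epsilon)$. If $\rev'(\lambda) \leq v < v+\epsilon$, then $g' < 0$ on all of $[\lambda,\infty)$, so the maximizer is the boundary point $N=\lambda$, i.e. $x^\star(\epsilon)=0$ for every $\epsilon>0$. If instead $\rev'(\lambda) > v$, set $\epsilon_0 = \rev'(\lambda) - v$, so that $g'(\lambda) = \epsilon_0 - \epsilon$. For $\epsilon > \epsilon_0$ the same boundary argument gives $x^\star(\epsilon) = 0$, while for $\epsilon \leq \epsilon_0$ the interior solution $N^\star(\epsilon)$ solves $\rev'(N^\star)=v+\epsilon$; since $\rev'$ is strictly decreasing, $N^\star(\epsilon)$ decreases continuously in $\epsilon$ (by the implicit function theorem, using $\rev''\neq 0$), so $x^\star(\epsilon)=1-\lambda/N^\star(\epsilon)$ is continuously decreasing and vanishes at $\epsilon=\epsilon_0$, as claimed. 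I expect the geometric step in the second paragraph to be the main obstacle; the rest is a routine concave one-variable optimization.
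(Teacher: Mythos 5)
Your proof is correct, and the second half (fixing the support and reducing to a one-variable concave problem) is essentially the paper's argument in different coordinates: the paper differentiates the profit with respect to the weight $x$ and reads the threshold off the sign of $\rev'\left(\frac{\lambda}{1-x}\right)-(v+\epsilon)$, which is exactly your $g'(N)=\rev'(N)-(v+\epsilon)$ after the substitution $N=\lambda/(1-x)$; the monotonicity and continuity of $x^\star(\epsilon)$ then follow identically. Where you genuinely diverge is the first step. The paper pins the support to $\{\rmin, v+\epsilon\}$ by discarding rewards above $v+\epsilon$, observing that $\ell$ restricted to $[\rmin, v+\epsilon]$ is concave, and invoking \cref{thm:main-theorem2} (maximal dispersion for risk-seeking workers). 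You instead argue directly in the $(\mu,L)$ plane: the objective is decreasing in $\mu$ at fixed $L$, so the optimum lies on the lower boundary of the convex hull of $\{(r,\ell(r))\}$, and the chord comparison $\tfrac{1}{2\epsilon}\geq\tfrac{1}{v+\epsilon-\rmin}$ shows that boundary is the segment joining $(\rmin,1)$ and $(v+\epsilon,0)$. Your route is more self-contained and, notably, makes explicit exactly where the hypothesis $\epsilon\leq v-\rmin$ is needed; it also sidesteps a small wrinkle in the paper's appeal to \cref{thm:main-theorem2}, whose hypothesis is \emph{strict} concavity while the restricted $\ell$ here has a flat piece at height $1$. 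The one soft spot in your write-up is the claim that the maximizer of $g$ is finite "otherwise the supremum would fail to be attained": strict concavity alone does not force a finite maximizer, and one should argue (as the paper does, also somewhat loosely) that $\rev'(N)$ eventually drops below $v+\epsilon$; this is a shared, minor issue rather than a gap specific to your approach.
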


We provide some intuition for the above result. $\rev'(\lambda)-v$ can be interpreted as the marginal revenue the decision-maker obtains for an agent, in addition to the $\lambda$ arrivals per period, net of the opportunity cost of that additional agent{ as $\epsilon$ approaches 0}. Thus, when $\rev'(\lambda)-v\leq 0$, agents are too costly to keep in the system. On the other hand, when $\rev'(\lambda)-v > 0$, it is worthwhile to incentivize agents to stay on with some probability. However, this benefit decreases as agents make noisier decisions, and as a result become more and more costly.

{In the remainder of the section, we let $\epsilon_0 = \rev'(\lambda)-v$ be the threshold past which it becomes too costly to keep agents in the system. We make the following mild assumption relating $\epsilon_0$, agents' value $v$, and the reward set $\rewardset$.

\begin{assumption}\label{asp:non-trivial}
$\epsilon_0 \leq \min\{v-\rmin, \rmax-v\}$.
\end{assumption}

{On the one hand, $\epsilon_0 \leq \rmax-v$ is a weak assumption that enforces that the maximum reward paid out by the decision-maker is at least the marginal revenue from the $(\lambda+1)$st agent in the system. The condition that $\epsilon_0 \leq \min\{v-\rmin\}$ can be viewed as a non-triviality condition: it simply imposes that $\epsilon_0$ not be so large that the decision-maker finds it optimal to pay out the minimum possible reward with probability 1 and use the newly arriving agents ``for free''. 

{Let $\widehat{\Pi}(\bx^\star(\epsilon);\epsilon)$ denote the decision-maker's optimal profit as a function of $\epsilon$.} We have the following fact.}
}

\begin{proposition}\label{prop:noisy-profit}
 $\widetilde{\Pi}(\bx^\star(\epsilon);\epsilon)$ is non-increasing in $\epsilon$.
\end{proposition}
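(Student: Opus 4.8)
The plan is to prove the claim by a direct feasibility/comparison argument that sidesteps any need to differentiate the optimizer $x^\star(\epsilon)$. Fix two noise levels $0 < \epsilon_1 < \epsilon_2$, both within the range $\epsilon \leq \min\{v-\rmin,\rmax-v\}$ for which \cref{prop:noisy-opt} applies. By \cref{prop:noisy-opt}, the optimal solution at $\epsilon_2$ is a lottery placing weight $x_2 := x^\star(\epsilon_2) \in [0,1)$ on the reward $v+\epsilon_2$ and weight $1-x_2$ on $\rmin$. Using that $\ell_{\epsilon_2}(\rmin)=1$ and $\ell_{\epsilon_2}(v+\epsilon_2)=0$ (which follow from $\rmin \leq v-\epsilon_2$ together with \cref{def:eps-noisy}), the stability constraint forces $\widetilde{N}_2 = \lambda/(1-x_2)$, so that
\[
\widetilde{\Pi}(\bx^\star(\epsilon_2);\epsilon_2) = R\!\left(\tfrac{\lambda}{1-x_2}\right) - \tfrac{\lambda}{1-x_2}\big[x_2(v+\epsilon_2) + (1-x_2)\rmin\big].
\]

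The key step is to exhibit, for the instance of~\ref{eq:fluid-opt} with noise level $\epsilon_1$, a feasible solution whose profit is at least the value above. I would take the lottery that places the \emph{same} weight $x_2$ on $v+\epsilon_1$ and $1-x_2$ on $\rmin$. The point is that this swap keeps the total number of workers unchanged while weakly lowering the average reward. Concretely, since $\epsilon_1 < v-\rmin$ gives $\ell_{\epsilon_1}(\rmin)=1$, and $\ell_{\epsilon_1}(v+\epsilon_1)=0$ by \cref{def:eps-noisy}, the expected departure fraction is again $1-x_2$; hence the stability constraint yields the identical supply $\widetilde{N} = \lambda/(1-x_2) = \widetilde{N}_2$, so the revenue term $R(\widetilde{N})$ is unchanged. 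The average reward, however, drops from $x_2(v+\epsilon_2)+(1-x_2)\rmin$ to $x_2(v+\epsilon_1)+(1-x_2)\rmin$, a decrease of $x_2(\epsilon_2-\epsilon_1)\geq 0$ because $x_2 \geq 0$ and $\epsilon_1 < \epsilon_2$.

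Putting these together, the constructed solution is feasible for the $\epsilon_1$-instance, and its profit equals $R(\widetilde{N}_2) - \widetilde{N}_2\big[x_2(v+\epsilon_1)+(1-x_2)\rmin\big] \geq \widetilde{\Pi}(\bx^\star(\epsilon_2);\epsilon_2)$, the inequality following from strictly lower cost at equal revenue. Since $\widetilde{\Pi}(\bx^\star(\epsilon_1);\epsilon_1)$ is by definition the maximum profit over all feasible solutions of the $\epsilon_1$-instance, it is at least the profit of this particular construction, giving $\widetilde{\Pi}(\bx^\star(\epsilon_1);\epsilon_1) \geq \widetilde{\Pi}(\bx^\star(\epsilon_2);\epsilon_2)$, which is the desired monotonicity.

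The main obstacle is the bookkeeping around feasibility: one must verify that $v+\epsilon_1$ and $\rmin$ are admissible rewards and, crucially, that under $\ell_{\epsilon_1}$ the endpoints behave as claimed ($\ell_{\epsilon_1}(v+\epsilon_1)=0$ and $\ell_{\epsilon_1}(\rmin)=1$), which is precisely where the hypothesis $\epsilon \leq \min\{v-\rmin,\rmax-v\}$ is used. A secondary point worth recording is that the argument is robust to the boundary case $x^\star(\epsilon_2)=0$ (identified in \cref{prop:noisy-opt} for $\epsilon \geq \epsilon_0$): there the construction simply pays $\rmin$ with probability one and both profits equal $R(\lambda)-\lambda\rmin$, so monotonicity holds trivially and no case split is needed. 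An alternative route would be to write the value function $g(\epsilon) := \widetilde{\Pi}(\bx^\star(\epsilon);\epsilon)$ as a one-dimensional maximization in $x$ and invoke an envelope argument, obtaining $g'(\epsilon) = -\widetilde{N}\,x^\star(\epsilon) \leq 0$; I prefer the comparison argument above, as it avoids establishing differentiability of $x^\star(\epsilon)$ and handling the kink at $\epsilon_0$.
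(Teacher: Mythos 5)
Your argument is correct, and it takes a genuinely different route from the paper. The paper's proof is an envelope-theorem computation: it writes the value as a function of $x^\star(\epsilon)$, differentiates in $\epsilon$, and uses the first-order condition $\rev'\bigl(\tfrac{\lambda}{1-x^\star(\epsilon)}\bigr)=v+\epsilon$ to cancel the terms involving $\tfrac{dx^\star(\epsilon)}{d\epsilon}$, leaving $\tfrac{d}{d\epsilon}\widetilde{\Pi}=-\lambda\tfrac{x^\star(\epsilon)}{1-x^\star(\epsilon)}\leq 0$ for $\epsilon\leq\epsilon_0$ and a constant value beyond. That route needs $x^\star(\epsilon)$ to be differentiable (implicitly via \cref{asp:c2} and the implicit function theorem) and a case split at $\epsilon_0$. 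Your comparison argument instead transplants the optimal $\epsilon_2$-lottery to the $\epsilon_1$-instance by moving the upper support point from $v+\epsilon_2$ to $v+\epsilon_1$; since $\ell_{\epsilon}(v+\epsilon)=0$ and $\ell_{\epsilon}(\rmin)=1$ in both instances, the induced supply $\lambda/(1-x_2)$ and hence the revenue are unchanged while the cost drops by $\tfrac{\lambda}{1-x_2}x_2(\epsilon_2-\epsilon_1)\geq 0$, and optimality at $\epsilon_1$ does the rest. This is more elementary: it avoids differentiability of the optimizer, handles the kink at $\epsilon_0$ without a case split, and only uses the support characterization from \cref{prop:noisy-opt} rather than the full force of \cref{asp:c2}. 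What the paper's calculation buys in exchange is an explicit expression for the marginal loss, $-\lambda\tfrac{x^\star(\epsilon)}{1-x^\star(\epsilon)}$, which is reused in the surplus analysis of \cref{prop:noisy-surplus}; your argument yields monotonicity but not that quantitative rate. Both proofs share the same implicit premise, inherited from \cref{prop:noisy-opt}, that $v+\epsilon$ is an admissible reward for each $\epsilon$ considered, so you are not introducing any new gap there.
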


At a high level, one would expect the decision-maker to be able to leverage the fact that agents are less sensitive to low rewards as $\epsilon$ increases, and can spend less on average as a result. \cref{prop:noisy-profit} invalidates this false intuition: as $\epsilon$ increases, the fact that agents must be rewarded more frequently to {\it guarantee they stay} in the system counterbalances these potential gains, and as a result, profit cannot increase.

Figure \ref{fig:noisy-metrics-many-types} shows that these trends persist even in the presence of many types, for a newsvendor-like revenue function. In particular, in all settings considered, we observe that the profit exhibits a threshold structure, in which there exists $\epsilon_0$ such that it decreases for $\epsilon \leq \epsilon_0$, and remains constant for $\epsilon > \epsilon_0$. The threshold $\epsilon_0$ at which this occurs, however, depends on the specific arrival rates; the threshold {\it decreases} as the arrival rate of agents with higher values grows. This is intuitive as these agents are the most ``expensive'' to keep in the system.
The extent to which this threshold depends on the type composition of arrivals, however, becomes more muted as $D$ grows large. This is because, for large enough $D$, there aren't enough cheap agents to satisfy all demand. As a result, the decision-maker must incentivize more expensive agents to stay on.

\begin{figure}[h!]
     \centering
     \subfloat[Profit vs. $\epsilon$, $D = 25$
     \label{fig:normal-variance-type1}]{\includegraphics[width = 0.33\textwidth]{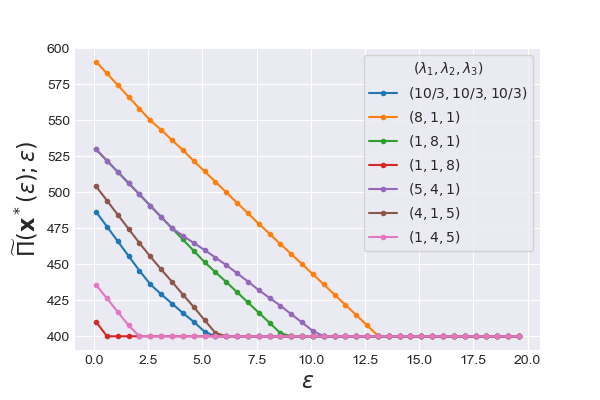}}
          \subfloat[Profit vs. $\epsilon$, $D = 100$
     \label{fig:normal-variance-type2}]{\includegraphics[width = 0.33\textwidth]{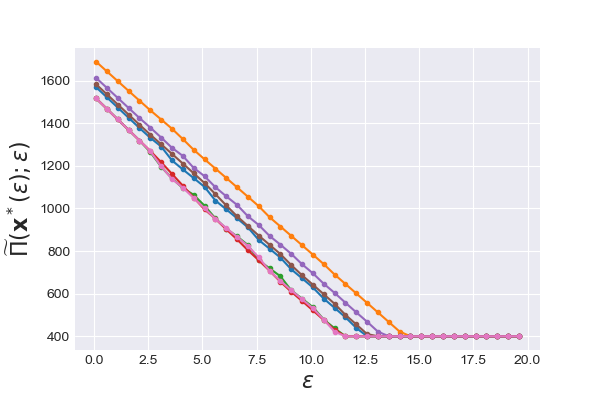}}
          \subfloat[Profit vs. $\epsilon$, $D = 400$
     \label{fig:normal-variance-type3}]{\includegraphics[width = 0.33\textwidth]{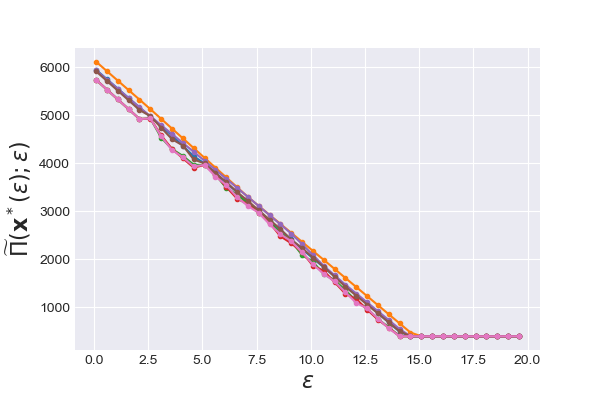}}
    \caption{Dependence of profit on $\epsilon$ for three types, with $\lambda_1 + \lambda_2 + \lambda_3 = 10$, $v_1 = 25, v_2 = 30, v_3 = 40$, and $\rev(N) = 40\min\{N,D\}$, with $D \in \{25,100,400\}$.}
    \label{fig:noisy-metrics-many-types} 
\end{figure}
}

	\section{Conclusion}\label{sec:discussion}

\chedit{
Our paper studies a decision-maker aiming to design {\it fair} incentives schemes when agents make stochastic participation decisions based on recent rewards. {Fairness constraints in this setting} lead to {an a priori non-obvious} obstacle for {dynamic reward} policies: when different agent types {exhibit} heterogeneous reactions to different rewards, 
dynamic policies can induce some types to be overrepresented in periods with lower rewards. Essentially, this is due to agents of these types self-selecting into those low-rewards periods based on the preceding high-reward periods.  
In the long run, this can result in bias in the reward distributions they experience. 
When dynamic policies are restricted to avoid such bias, we find that they offer no asymptotic benefit compared to static ones. Moreover, we prove, under a weak technical condition on the decision-maker's revenue function, that the asymptotic benefits of dynamic policies vanish fast, as the static policy converges to the fluid upper bound at a linear rate. Finally, leveraging the two-reward structure of optimal solutions to the fluid upper bound, {we} derive insights into the type of policies that perform best for {certain special cases of} departure probability functions.
}

    \bibliographystyle{apalike}
    {\bibliography{main}}
    
    \newpage
    \onecolumn
    \appendix
    
\chedit{
\section{Motivating application: Credit Karma's Instant Karma rewards program}

In this section we describe the applicability of our model to the Instant Karma random rewards program, a feature launched by Credit Karma Money (a checking account service operated by Credit Karma) in 2021 \citep{creditkarma}. Under this program, an instant refund may randomly occur anytime a Credit Karma Money customer uses their debit card at {selected stores.} 
Hence, customers effectively play a lottery every time they use their debit card.

We use our model to approximate the Instant Karma program via the following mapping:
\begin{itemize}
\item A customer being ``in the system'': the customer is an active user of the Credit Karma debit card; 
\item Length of a period: one year (e.g., the end of the period corresponds to the time at which a customer re-evaluates their card usage). At the end of year $t$, the user decides whether to stay based on the rewards received in year $t$. If so, she remains an active user throughout year $t+1$; if not, she becomes inactive at the beginning of year $t+1$.
\item Reward received at the end of the period: total refunds received throughout the year;
\item Decision-maker's revenue: in-app advertising \citep{creditkarma_rev}.
\end{itemize}

We now comment on how well our modeling assumptions fit the Instant Karma application:
\begin{itemize}
\item Exogenous arrivals: Though the possibility of refunds may influence the likelihood that individuals become active users of the Credit Karma debit card, the details of the program, such as the probability of receiving a refund, are not widely advertised by the company. Hence, it seems unlikely that potential new customers would form a prior over expected future refunds, and make their decision to become active users based on this. {However, we highlight that our main result on asymptotic optimality of static policies continues to hold under standard models of endogenous entry.}
\item Memorylessness: Given that we assume customers re-evaluate their debit card usage yearly, we expect that customers tend to have heavy recency bias in this setting.
\item Type-independent revenue function: This would hold under the assumption that Credit Karma curates in-app advertisements such that different customer types have similar click probabilities for ads with similar per-click revenues.
\item Time-invariant revenue function: This would hold under the assumption that both the kinds of ads shown to customers don't change and their click probabilities are time-invariant.
\item Unconditional rewards: This holds since refunds are immediately posted to the account, independent of the customer's decision to remain an active user at the end of year $t$.
\item {Large-market regime: In the year that Credit Karma launched this program, they reported refunding 100,000 transactions \citep{creditkarma}. Assuming that only a small fraction of transactions were refunded, and given Credit Karma's large customer base, it is reasonable to assume that it is operating in a large-market regime.}
\end{itemize}
}

\section{Connection between the stochastic and deterministic systems: Proofs}

\subsection{Proof of \cref{prop:det-to-stoch}}
\begin{proof}
We have:
\begin{align}
    \Delta(T)
    &\leq\sum_{t=1}^T\left\lvert\EE\left[R\left(\frac{N^\theta(t)}{\theta}\right)-R(\widetilde{N}(t))\right]\right\rvert +  r_{\max}\sum_{t=1}^T\left\lvert\EE\left[\frac{N^\theta(t)}{\theta} - \widetilde{N}(t)\right]\right\rvert \notag \\
    &\leq (L+r_{\max})\sum_{t=1}^T \left\lvert\EE\left[\frac{N^\theta(t)}{\theta} - \widetilde{N}(t)\right]\right\rvert \label{eq:lip}\\
    &= (L+r_{\max})\sum_{t=1}^T\sqrt{\left(\EE\left[\frac{N^\theta(t)}{\theta} - \widetilde{N}(t)\right]\right)^2} \notag \\
    &\leq (L+r_{\max})\sum_{t=1}^T\sqrt{\EE\left[\left(\frac{N^\theta(t)}{\theta} - \widetilde{N}(t)\right)^2\right]} \label{eq:jen}.
\end{align}
where \eqref{eq:lip} follows from $L$-Lipschitz continuity of $R$, and \eqref{eq:jen} from Jensen's inequality. The following lemma establishes the first connection between the number of agents in the deterministic systems and the expected number of agents in the large-market regime. We defer its proof to the end of the section.

\begin{lemma}\label{lem:det-stoch-num}
For all $i \in [K], \theta \in \mathbb{N}^+, t \in \mathbb{N}^+$, $N_i^\theta(t) \sim \text{Poi}(\theta\widetilde{N}_i(t))$.
\end{lemma}

Using the above fact, we then have that
\begin{align*}
\EE\left[\left(\frac{N^\theta(t)}{\theta} - \widetilde{N}(t)\right)^2\right] &= \text{Var}\left(\frac{N^\theta(t)}{\theta}\right) 
= \frac{1}{\theta^2}\sum_{i\in[K]}\EE[N_i^\theta(t)]
= \frac{1}{\theta}\sum_{i\in[K]}\widetilde{N}_i(t),
\end{align*}
where the second equality follows from the fact that the variance of a Poisson distribution is equal to its mean.
Since $\ell_i(\rmax) > 0$ for all $i$, there exists $N_{\max}$ such that $\widetilde{N}_i(t) \leq N_{\max}$ for all $i \in [K], t \in \mathbb{N}^+$. Plugging this back into \eqref{eq:jen}, we have:
$
    \Delta(T) \leq (L+\rmax)T\sqrt{\frac1\theta K N_{\max}}.
$
Dividing by $T$ on both sides, we obtain the result.\hfill\Halmos
\end{proof}

\begin{proof}[Proof of \cref{lem:det-stoch-num}.]
By Poisson thinning, $N_i^\theta(t)$ is Poisson distributed for all $t\in\mathbb{N}^+$, with $\EE[N_i^\theta(t)] = \EE[N_i^\theta(t-1)(1-\sum_r\ell_i(r)x_r(t))] + \theta\lambda_i$. Dividing by $\theta$ on both sides:
\begin{align*}
   \EE\left[\frac{N_i^\theta(t)}{\theta}\right] &= \EE\left[\frac{N_i^\theta(t-1)}{\theta}\right](1-\sum_r\ell_i(r)x_r(t)) + \lambda_i.
\end{align*}
Recall, $\widetilde{N}_i(t) = \widetilde{N}_i(t-1)(1-\sum_r \ell_i(r)x_r(t)) + \lambda_i$ for all $i$. Initializing $\widetilde{N}_i(0) = \mathbb{E}\left[\frac{N_i^\theta(0)}{\theta}\right]$, we obtain the result.\hfill\Halmos
\end{proof}



\section{Optimal Policies via the Deterministic Relaxation}

\subsection{Proof of \cref{thm:static-policies-are-opt-for-one-type}}\label{apx:fairness-thm}

\chedit{

{To prove the result, we consider an {\it endogenous} model of behavior wherein customers strategically make the decision to join the system based on whether their expectation of average rewards exceeds a given, nonnegative reservation wage. Noting that exogenous arrivals are a special case of this setting by taking all reservation wages to be zero, we obtain the result.} 

To formalize this, consider a policy $\widetilde{\varphi}$ defined by sequence of reward distributions $(\bx(t))_{t \in \mathbb{N}^+}$, and let $\widetilde{N}_i^{\varphi}(t)$ refers to the {\it counterfactual} number of type $i$ agents that would exist in the deterministic system in period $t$ if they did choose to join in each period, {as specified in \eqref{eq:inductive-deterministic}}. We assume type $i$ agents are endowed with a reservation wage $V_i \geq 0$, and join the system if and only if the long-run average reward of type $i$ agents exceeds that reservation wage, i.e.,:
\begin{align}\label{eq:joining-condition}
\lim_{T\to \infty} \sum_{t=1}^T \frac{\widetilde{N}_i^{\varphi}(t)\sum_r rx_r(t)}{\sum_{t=1}^T \widetilde{N}_i^{\varphi}(t)} \geq V_i.
\end{align}
We restrict our attention to the set of policies for which the limit on the left-hand side exists, and adapt our notion of group fairness to apply to all types who choose to join the system.

\begin{definition}[Endogenous group-fair policy]\label{def:endo_fair_policy}
For any policy $\varphi$, let $\mathcal{I}^{\varphi} \subseteq [K]$ denote the set of types for whom \eqref{eq:joining-condition} holds under $\varphi$. $\varphi$ is \emph{group-fair} if, for all $\delta > 0$, there exists $\tau_0 \in \mathbb{N}^+$ such that for all $\tau > \tau_0$:
\begin{align}
    \bigg{\lVert}\frac{1}{\sum_{t=t'}^{t'+\tau}\widetilde{N}^\varphi_i(t)}\sum_{t=t'}^{t'+\tau} \widetilde{N}^\varphi_i(t)\bx(t) - \frac{1}{\sum_{t=t'}^{t'+\tau}\widetilde{N}^\varphi_j(t)}\sum_{t=t'}^{t'+\tau} \widetilde{N}^\varphi_j(t)\bx(t) \bigg{\rVert}_1 < \delta  \qquad \, \forall \, t' \in \mathbb{N}^+, \, \forall \, i,j \in \mathcal{I}^{\varphi}. 
\end{align}
\end{definition}

{\cref{def:endo_fair_policy} implies that, for any endogenous group-fair policy, the following holds:
$$\lim_{T\to\infty}\frac{1}{\sum_{t=1}^{T}\widetilde{N}^\varphi_i(t)}\sum_{t=1}^{T} \widetilde{N}^\varphi_i(t)x_r(t) = \lim_{T\to\infty} \frac{1}{\sum_{t=1}^{T}\widetilde{N}^\varphi_j(t)}\sum_{t=1}^{T} \widetilde{N}^\varphi_j(t)x_r(t) \quad \forall i,j \in \mathcal{I}^{\varphi}, r \in \Xi.$$
}

{As before, we enforce the condition that, if two agents of different types are in the system in the same period, the distribution from which their rewards are drawn is identical (we henceforth refer to this as {\it individual fairness}). We refer to any policy that is both group-fair and individually fair as a fair policy. We then have the following theorem.
}

\begin{theorem}\label{thm:static-policies-are-opt-for-endo}
There exists an optimal fair policy that is static.
\end{theorem} 

\begin{proof}
Consider any fair policy $\varphi$ defined by the sequence $(\bx(t))_{t \in \mathbb{N}^+}$. We will construct a static policy whose long-run average profit lies within arbitrarily small $\epsilon > 0$ of that of $\varphi$. Taking $\epsilon$ to 0 will then prove the result. For ease of notation, in the remainder of the proof we omit the dependence of all quantities on $\varphi$.

Fix a type $i \in \mathcal{I}$, and construct the static policy $\widehat{\bx}$ as follows:
$$\widehat{x}_r = \lim_{T\to\infty} \sum_{t=1}^T \frac{\widetilde{N}_i(t)x_r(t)}{\sum_{t=1}^T \widetilde{N}_i(t)}, \quad \forall \ r \in \Xi.$$
For $j \in [K]$, let $\widehat{N}_j(t)$ denote the counterfactual number of type $j$ agents in the deterministic system in period $t$ under $\widehat{\bx}$, {and define $\widehat{N}_j = \lim_{T\to\infty}\frac1T\sum_{t=1}^T\widehat{N}_j(t)$. Since $\widehat{\bx}$ is static, it is easy to verify that
$\widehat{N}_j = \frac{\lambda_j}{\sum_r \ell_j(r)x_r}.$}

We first argue that the same set of types joins under $\widehat{\bx}$. To see this, for all type $j$ agents in $\mathcal{I}$, the counterfactual expected earnings under $\widehat{\bx}$ are:
\begin{align}\label{eq:same-set-joins}
\lim_{T\to \infty} \sum_{t=1}^T \frac{\widehat{N}_j(t)\sum_r rx_r(t)}{\sum_{t=1}^T \widehat{N}_j(t)} =  \sum_r r \widehat{x}_r &= \sum_r r \lim_{T\to\infty}  \sum_{t=1}^T \frac{\widetilde{N}_i(t)x_r(t)}{\sum_{t=1}^T \widetilde{N}_i(t)} = \lim_{T\to\infty}\sum_{t=1}^T \frac{\widetilde{N}_i(t) \sum_r rx_r(t)}{\sum_{t=1}^T \widetilde{N}_i(t)},
\end{align}
where the first equality follows from the fact that $\widehat{\bx}$ is static, the second from the definition of $\widehat{x}_r$, and the third uses the dominated convergence theorem for the sum-limit interchange. Noting that the final expression of \eqref{eq:same-set-joins} corresponds precisely to the expected earnings of type $j$ under $\varphi$ (the left-hand side of \eqref{eq:joining-condition}), we obtain that type $j$ joins under $\widehat{\bx}$ if and only if they joined under the original policy. Hence, it suffices to compare the profit of both policies over the set $\mathcal{I}$. Let $\widehat{N} = \sum_{j\in\mathcal{I}}\widehat{N}_j$, and $\widetilde{N}(t) = \sum_{j\in\mathcal{I}}\widetilde{N}_j(t)$. For ease of notation, we omit the indexing over $\mathcal{I}$ in the remainder of the proof.

The difference in the two policies' long-run average profit is given by:
{
\begin{align}\label{eq:delta-fix}
 &\left(\lim_{T\to\infty}\frac1T\sum_{t=1}^TR(\widetilde{N}(t))-\sum_r rx_r(t)\widetilde{N}(t)\right) -\left(\lim_{T\to\infty}\frac1T\sum_{t=1}^TR(\widehat{N}(t))-\sum_r r\widehat{x}_r\widehat{N}(t)\right) \notag \\
&\leq\left(\lim_{T\to\infty}\frac1T\sum_{t=1}^TR(\widetilde{N}(t))-\sum_r rx_r(t)\widetilde{N}(t)\right) -\left(R(\widehat{N})-\sum_r r\widehat{x}_r \widehat{N}\right) + (L+r_{\max})\lim_{T\to\infty}\frac1T\sum_{t=1}^T|\widehat{N}-\widehat{N}(t)|
\notag \\&= \left(\lim_{T\to\infty}\frac1T\sum_{t=1}^TR(\widetilde{N}(t))-\sum_r rx_r(t)\widetilde{N}(t)\right) -\left(R(\widehat{N})-\sum_r r\widehat{x}_r \widehat{N}\right).
\end{align}
where the first inequality follows from Lipschitz continuity of $R$, and the second inequality follows from the fact that $\widehat{N}(t)$ converges to $\lim_{T\to\infty}\frac1T\sum_{t=1}^T\widehat{N}(t) = \widehat{N}$.
}

Since $R$ is concave, by Jensen's inequality, we have:
\begin{align*}
\frac1T\sum_{t=1}^TR(\widetilde{N}(t)) \leq R\left(\frac1T\sum_{t=1}^T\widetilde{N}(t)\right) \leq R(\widehat{N}) + L|\widehat{N}-\frac1T\sum_{t=1}^T\widetilde{N}(t)| \leq R(\widehat{N}) + L\sum_j|\widehat{N}_j-\frac1T\sum_{t=1}^T\widetilde{N}_j(t)|,
\end{align*}
where the second inequality follows from the Lipschitz assumption. We use this to bound the difference in long-run average revenues:
\begin{align}\label{eq:endo-bound1}
\lim_{T\to\infty} \frac1T\sum_{t=1}^T R(\widetilde{N}(t)) - R(\widehat{N}) \leq L\sum_j |\widehat{N}_j-\lim_{T\to\infty}\frac1T\sum_{t=1}^T \widetilde{N}_j(t)|. 
\end{align}

The following lemma bounds the right-hand side of \eqref{eq:endo-bound1}. We defer its proof to the end of this section.

\begin{lemma}\label{lem:endo-bound1}
Under the $\mathbf{\widehat{x}}$ construction, for all $j \in \mathcal{I}$:
$$\lim_{T\to\infty} \frac1T \sum_{t=1}^T \widetilde{N}_j(t) = \widehat{N}_j.$$
\end{lemma}
Putting this together with \eqref{eq:endo-bound1} and plugging into \eqref{eq:delta-fix}, we have:
\begin{align}
\eqref{eq:delta-fix} &\leq  \sum_r r\widehat{x}_r\widehat{N} - \lim_{T\to\infty} \frac1T \sum_{t=1}^T \sum_r rx_r(t) \widetilde{N}(t) \notag \\
&= \sum_r r\left[{\widehat{N} \lim_{T\to\infty}\sum_{t=1}^T \frac{\widetilde{N}_i(t)x_r(t)}{\sum_{t=1}^T \widetilde{N}_i(t)} - \lim_{T\to\infty} \frac {\sum_{t=1}^T \widetilde{N}(t)}{T} \cdot \sum_{t=1}^T \frac{\widetilde{N}(t) x_r(t)}{\sum_{t=1}^T \widetilde{N}(t)}} \right] \label{eq:endo-bound-0}\\
&= \widehat{N}\sum_r r\left[\underbrace{\lim_{T\to\infty}\sum_{t=1}^T \frac{\widetilde{N}_i(t)x_r(t)}{\sum_{t=1}^T \widetilde{N}_i(t)} - \sum_{t=1}^T \frac{\widetilde{N}(t) x_r(t)}{\sum_{t=1}^T \widetilde{N}(t)}}_{(I)} \right] \label{eq:endo-bound-1}
\end{align}
where \eqref{eq:endo-bound-0} follows from the definition of $\widehat{x}_r$ and re-writing the third term, and \eqref{eq:endo-bound-1} follows from \cref{lem:endo-bound1}. We have the following lemma, whose proof we defer to the end of the section.
\begin{lemma}\label{lem:endo-bound-2}
For all $r \in \Xi$,
$$\sum_{t=1}^T \frac{\widetilde{N}(t)x_r(t)}{\sum_{t=1}^T \widetilde{N}(t)} \geq \min_{j\in \mathcal{I}} \sum_{t=1}^T \frac{\widetilde{N}_j(t)x_r(t)}{\sum_{t=1}^T \widetilde{N}_j(t)}.$$
\end{lemma}
Plugging this back into \eqref{eq:endo-bound-1}, we obtain:
\begin{align}\label{eq:last-step-endo}
(I) \leq \widehat{N}\cdot \lim_{T\to\infty}\left(\sum_{t=1}^T \frac{\widetilde{N}_i(t)x_r(t)}{\sum_{t=1}^T \widetilde{N}_i(t)} - \min_j\sum_{t=1}^T \frac{\widetilde{N}_j(t) x_r(t)}{\sum_{t=1}^T \widetilde{N}_j(t)}\right).
\end{align}
{
\cref{def:endo_fair_policy} implies that, for all $\delta > 0$, there exists large enough $\tau_0$ such that, for all $T > \tau_0$:
\begin{align}\label{eq:endo-fair-policy}
   \bigg{|}\frac{1}{\sum_{t=1}^{T}\widetilde{N}_i(t)}\sum_{t=1}^{T} \widetilde{N}_i(t))x_{r}(t) - \frac{1}{\sum_{t=1}^{T}\widetilde{N}_j(t)}\sum_{t=1}^{T} \widetilde{N}_j(t)x_r(t) \bigg{|} < \delta \quad \, \forall \, i,j \in [K], r \in \Xi. 
\end{align}
Applying this to \eqref{eq:last-step-endo} and taking $\delta\to 0$, we obtain that the static construction $\widehat{\mathbf{x}}$ weakly improves upon its dynamic counterpart, implying optimality of static policies.
}
\hfill\Halmos\end{proof}


\subsubsection{Auxiliary lemmas}
\begin{proof}[Proof of \cref{lem:endo-bound1}.]
By the recursive update equations, for all $t\geq 1$, $j \in \mathcal{I}$:
\begin{align*}
&\widetilde{N}_j(t+1) = \widetilde{N}_j(t) + \lambda_j  - \widetilde{N}_j(t)\sum_r\ell_j(r)x_r(t)\\
\implies &\frac{\widetilde{N}_j(T+1)-\widetilde{N}_j(1)}{\sum_{t=1}^T \widetilde{N}_j(t)} = \frac{T\lambda_j}{\sum_{t=1}^T \widetilde{N}_j(t)} - \sum_r\ell_j(r)\cdot \frac{\sum_{t=1}^T\widetilde{N}_j(t)x_r(t)}{\sum_{t=1}^T \widetilde{N}_j(t)}\\
\implies &\frac{\sum_{t=1}^T \widetilde{N}_j(t)}{T} = \lambda_j\left(\frac{1}{\frac{\widetilde{N}_j(T+1)-\widetilde{N}_j(1)}{\sum_{t=1}^T \widetilde{N}_j(t)} + \sum_r \ell_j(r)\cdot\frac{\sum_{t=1}^T\widetilde{N}_j(t)x_r(t)}{\sum_{t=1}^T \widetilde{N}_j(t)}}\right) \\
\implies &\lim_{T\to\infty}\frac{\sum_{t=1}^T \widetilde{N}_j(t)}{T} = \frac{\lambda_j}{\sum_r \ell_j(r)\cdot\lim_{T\to\infty}\frac{\sum_{t=1}^T\widetilde{N}_j(t)x_r(t)}{\sum_{t=1}^T \widetilde{N}_j(t)}} =\frac{\lambda_j}{\sum_r \ell_j(r)\cdot\lim_{T\to\infty}\frac{\sum_{t=1}^T\widetilde{N}_i(t)x_r(t)}{\sum_{t=1}^T \widetilde{N}_i(t)}},
\end{align*}
where the final equality follows from the group-fairness definition (\cref{def:endo_fair_policy}).
Recalling that, for any $j \in \mathcal{I}$:
\begin{align*}
\widehat{N}_j = \frac{\lambda_j}{\sum_r \ell_j(r)\widehat{x}_r} = \frac{\lambda_j}{\sum_r \ell_j(r)\cdot\lim_{T\to\infty}\frac{\sum_{t=1}^T\widetilde{N}_i(t)x_r(t)}{\sum_{t=1}^T \widetilde{N}_i(t)}},
\end{align*}
we obtain the lemma.
\hfill\Halmos
\end{proof}

\begin{proof}[Proof of \cref{lem:endo-bound-2}.]
Suppose for contradiction the inequality does not hold. Then, for all $j \in \mathcal{I}$:
\begin{align}
 \frac{\sum_{t=1}^T\widetilde{N}(t)x_r(t)}{\sum_{t=1}^T \widetilde{N}(t)} &<  \frac{\sum_{t=1}^T\widetilde{N}_j(t)x_r(t)}{\sum_{t=1}^T \widetilde{N}_j(t)} \\
\implies  \frac{\sum_{t=1}^T\widetilde{N}_j(t)}{\sum_{t=1}^T \widetilde{N}(t)} &<  \frac{\sum_{t=1}^T\widetilde{N}_j(t)x_r(t)}{\sum_{t=1}^T \widetilde{N}(t)x_r(t)} \\
\implies \sum_j  \frac{\sum_{t=1}^T\widetilde{N}_j(t)}{\sum_{t=1}^T \widetilde{N}(t)} &< \sum_j \frac{\sum_{t=1}^T\widetilde{N}_j(t)x_r(t)}{\sum_{t=1}^T \widetilde{N}(t)x_r(t)}.
\end{align}
Interchanging the sums on both sides, we obtain $1 < 1$, a contradiction.
\hfill\Halmos
\end{proof}
}

\subsection{Explicit discrimination: Proof of \cref{prop:explicit-wage-disc}}

For a given distribution $\bx$, let $\widehat{\ell}_i(\bx) = \sum_r \ell_i(r) x_r$, for $i \in [K]$, and $\widehat{r}(\bx) = \sum_r rx_r$.

\begin{proof}
The proof is constructive. Consider the {belief-based} policy $\varphi^b$ defined in \cref{alg:bbp}.

\begin{algorithm}
\begin{algorithmic}
\For{$t\in\mathbb{N}^+$}
\If{$\widetilde{N}(t-1) + \lambda_1 + \lambda_2 < D$}
\State Pay all new arrivals $v_1$.
\State Label any agent that stays in the system a type 1 agent; and pay all such agents $v_1$ 
\State for all $t' > t$.
\EndIf
\State Let $t^* = \inf\left\{t\, : \, \widetilde{N}(t-1) + \lambda_1 + \lambda_2 \geq D\right\}$.
\If{$t \geq t^*$}
\State Pay $D-\left(\lambda_1 + \lambda_2\right)$ type 1 agents identified a reward $v_1$, and everyone else a reward 0.
\EndIf
\EndFor
\end{algorithmic}
	\caption{Belief-based policy $\varphi^b$\label{alg:bbp}}
\end{algorithm}

We first derive the long-run average profit of $\varphi^b$:
\begin{align*}
    \widetilde{\Pi}(\varphi^b) &= \lim_{T\to\infty} \frac1T \sum_{t=1}^T \alpha\min\{\widetilde{N}(t-1)+\lambda_1 + \lambda_2,D\} - \left(\widetilde{N}(t-1)+\lambda_1 + \lambda_2\right)\sum_r rx_r(t) \\ 
    &= \lim_{T\to\infty}\frac1T \sum_{t=1}^{t^*-1} \left[\left(\alpha-\sum_r rx_r(t)\right) \left(\widetilde{N}(t-1)+\lambda_1 + \lambda_2\right) +\sum_{t=t^*}^T\left[\alpha D - v_1(D-(\lambda_1 + \lambda_2))\right]\right] \\
    &= \alpha D - v_1(D-(\lambda_1 + \lambda_2)),
\end{align*}
where the final equality follows from the fact that $t^* \leq 4$. The following lemma helps to characterize the optimal objective attained by any static policy. 
\begin{lemma}\label{lem:static-keeps-d-on}
Suppose $\alpha > 2v_2$, and let $\widetilde{N}^s$ denote the number of agents induced by the optimal static solution. Then $\widetilde{N}^s = D$.
\end{lemma}

 Thus, it suffices to show that
    $\min\left\{\widehat{r}(\bx^{s_1}),\widehat{r}(\bx^{s_2})\right\}\cdot D - v_1(D-(\lambda_1 + \lambda_2)) = \Omega(D)$,
where $\bx^{s_i}$ denotes the static policy with support $\{0,v_i\}$ that induces $D$ agents in the system, for $i \in \{1,2\}$. Let $x^{s_i}$ denote the weight placed on $v_i$ by $\bx^{s_i}$, for $i \in \{1,2\}$. (As noted in the proof of \cref{lem:static-keeps-d-on}, it is without loss of generality to consider only these two types of distributions.)

Consider first $\bx^{s_1}$. $x^{s_1}$ satisfies:
    $\frac{\lambda_1}{1-x^{s_1}} + \lambda_2 = D \iff x^{s_1} = 1 - \frac{\lambda_1}{D-\lambda_2}.$
After some algebra, we have $ \widehat{r}(\bx^{s_1})D-v_1(D-(\lambda_1 + \lambda_2)) = v_1D/4$.

Consider now $\bx^{s_2}$. Similarly, we have:
$    \frac{\lambda_1 + \lambda_2}{1-x^{s_2}} = D \iff x^{s_2} = 1-\frac{\lambda_1 + \lambda_2}{D}.$
Then, algebraic manipulations give us $ \widehat{r}(\bx^{s_2})D-v_1(D-(\lambda_1 + \lambda_2)) = (v_2-v_1)D/4.$
\hfill\Halmos\end{proof}

\subsubsection{Auxiliary lemmas}\label{apx:aux-lemmas-belief-based}

\begin{proof}[Proof of \cref{lem:static-keeps-d-on}.]
For $i\in \{1,2\}$, let $\varphi^{(i)}$ denote the reward distribution with support $\{0,v_i\}$ that places weight $x_i$ on $v_i$, and $\widetilde{\Pi}(\varphi^{(i)})$ its associated profit. Note that only considering distributions with these two supports is without loss of generality, as $(i)$ we have established that there exists an optimal solution with a support of size 2, and $(ii)$ $\bx$ such that $\supp(\bx) = \{v_1,v_2\}$ induces $\widetilde{N}_1 = \infty$ agents, with only $\alpha D$ revenue.

It suffices to show that, for $i \in \{1,2\}$, $\frac{d}{dx_i}\widetilde{\Pi}(\varphi^{(i)}) > 0$ for all $x_i$ such that $\widetilde{N}^{\varphi^{(i)}} < D$, where $\widetilde{N}^{\varphi^{(i)}}$ is used to denote the number of agents induced by $\varphi^{(i)}$.

For $i \in \{1,2\}$, we have:
    $\frac{d}{dx_i}\widetilde{\Pi}(\varphi^{(i)}) = \frac{d}{dx_i} \left[(\alpha - v_ix_i) \widetilde{N}^{\varphi^{(i)}}\right] 
    = -v_i\widetilde{N}^{\varphi^{(i)}} + (\alpha-v_ix_i)\frac{d}{dx_i}\widetilde{N}^{\varphi^{(i)}}.$
Thus,
\begin{align}\label{eq:static-keeps-d-on}
    \frac{d}{dx_i}\widetilde{\Pi}(\varphi^{(i)}) > 0 \iff \alpha > v_i\left (x_i + \frac{\widetilde{N}^{\varphi^{(i)}}}{\frac{d}{dx_i}\widetilde{N}^{\varphi^{(i)}}}\right)
\end{align}

For $i = 1$, we have:
\begin{align*}
    \widetilde{N}^{\varphi^{(1)}} = \frac{\lambda}{1-x_1} + \lambda \implies & \frac{d}{dx_1}\widetilde{N}^{\varphi^{(1)}} = \frac{\lambda}{(1-x_1)^2} 
    \implies  \frac{\widetilde{N}^{\varphi^{(1)}}}{\frac{d}{dx_1}\widetilde{N}^{\varphi^{(1)}}} = (1-x_1) + (1-x_1)^2
\end{align*}

For $i = 2$:
\begin{align*}
    \widetilde{N}^{\varphi^{(2)}} = \frac{2\lambda}{1-x_2} \implies & \frac{d}{dx_2}\widetilde{N}^{\varphi^{(2)}} = \frac{2\lambda}{(1-x_2)^2} 
    \implies  \frac{\widetilde{N}^{\varphi^{(2)}}}{\frac{d}{dx_2}\widetilde{N}^{\varphi^{(2)}}} = 1-x_2
\end{align*}

Noting that $\alpha > 2v_2$ satisfies \eqref{eq:static-keeps-d-on} for all $x_i$, $i \in \{1,2\}$, we obtain the lemma.
\hfill\Halmos\end{proof}

\subsection{Implicit discrimination: Proofs}\label{apx:cyclic-diff-dist}

In this section, we use the wrap-around convention that, given a cyclic policy, for $t' \in \{-(\tau-1),\ldots,0\}$, $\bx(t') := \bx(\tau+t')$.

We  prove \cref{ex:steady-state-cyclic}.
\begin{proof}[Proof of \cref{ex:steady-state-cyclic}.]

\chedit{We obtain $\widetilde{N}_i(t)$ by solving the following system of equations:
 \begin{align*}
\begin{cases}
\widetilde{N}_i(1) &= \widetilde{N}_i(2) + \lambda_i - \widetilde{N}_i(2)\sum_{r'} \ell_i(r')x_{r'}(2) \\
\widetilde{N}_i(2) &= \widetilde{N}_i(1) + \lambda_i - \widetilde{N}_i(1)\sum_{r'} \ell_i(r')x_{r'}(1)
\end{cases}
 \end{align*}
Hence, we have that for all $t$ odd:
\begin{align}\label{eq:steadystate-cyclic1}
    &\widetilde{N}_1(t) = 1.9\lambda,
    \widetilde{N}_2(t) = \lambda 
    \implies  \widetilde{N}(t) = 2.9\lambda.
\end{align}

Similarly, for all $t$ even we have:
\begin{align}\label{eq:steadystate-cyclic2}
    \widetilde{N}_1(t) = 2\lambda,
    \widetilde{N}_2(t) = 1.5\lambda
    \implies \widetilde{N}(t) = 3.5\lambda.
\end{align}
}

We leverage \eqref{eq:steadystate-cyclic1} and \eqref{eq:steadystate-cyclic2} for the following two results. In particular, \cref{lem:exists-rev-counter} characterizes the optimal static solution. \cref{lem:cyclic-beats-zero-counter} establishes that the cyclic policy outperforms the static solution.
\begin{lemma}\label{lem:exists-rev-counter}
For $\alpha < r$, the optimal static solution $\bx^s$ fulfills $x^s = 0$, where $x^s$ is the weight on reward $r > 0$.
\end{lemma}

\begin{lemma}\label{lem:cyclic-beats-zero-counter}
For $\alpha \geq 0.7r$, $\widetilde{\Pi}(\varphi^c) - \widetilde{\Pi}(\varphi^0)  = \Omega(\lambda)$, where $\varphi^0$ denotes the policy that deterministically pays out zero reward in each period.
\end{lemma}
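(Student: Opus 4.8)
The plan is to prove \cref{lem:cyclic-beats-zero-counter} by computing both long-run average profits in closed form and comparing them directly. Since $\varphi^0$ pays out zero in every period and, by \cref{lem:exists-rev-counter}, coincides with the optimal static policy $\varphi^s$ in this regime, it suffices to lower-bound $\widetilde{\Pi}(\varphi^c)-\widetilde{\Pi}(\varphi^0)$. Both profits are finite averages of per-period profits in the deterministic relaxation, where the period-$t$ profit equals $\rev(\widetilde{N}(t))-\widetilde{N}(t)\,\avgreward(\bx(t))$; thus the whole argument reduces to bookkeeping over a single cycle together with one steady-state computation for $\varphi^0$.

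First I would pin down $\widetilde{\Pi}(\varphi^c)$. Because $\varphi^c$ has period two, its long-run average profit is the average of the profit in an odd (reward-$r$) period and an even (reward-$0$) period. I would substitute the steady-state totals already derived in \eqref{eq:steadystate-cyclic1}--\eqref{eq:steadystate-cyclic2}, namely $\widetilde{N}(t)=2.9\lambda$ in odd periods and $\widetilde{N}(t)=3.5\lambda$ in even periods. The crucial accounting point is that the wage term $\widetilde{N}(t)\,\avgreward(\bx(t))$ is incurred \emph{only} in the odd periods, since $\avgreward(\unitvec_0)=0$, so that $\widetilde{\Pi}(\varphi^c)=\tfrac12\big[(\alpha-r)\,2.9\lambda+\alpha\,3.5\lambda\big]$, an affine function of $\alpha$ and $r$ scaled by $\lambda$.

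Next I would compute $\widetilde{\Pi}(\varphi^0)$. Under the constant zero-reward policy the deterministic relaxation settles at $\widetilde{N}_i=\lambda_i/\ell_i(0)$, which gives $\widetilde{N}_1=\lambda$ and $\widetilde{N}_2=\lambda$, hence $\widetilde{N}^{\varphi^0}=2\lambda$; since no wages are paid, $\widetilde{\Pi}(\varphi^0)=\rev(2\lambda)=2\alpha\lambda$. Subtracting the two expressions leaves an explicit affine form in $\alpha$ and $r$ times $\lambda$, and I would invoke the hypothesis $\alpha\ge 0.7r$ to bound it below by a strictly positive constant multiple of $\lambda$, yielding the claimed $\Omega(\lambda)$ gap.

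The main obstacle — and essentially the only content of the lemma — lies in the cost bookkeeping of the cyclic policy: one must charge the reward $r$ to the \emph{entire} accumulated pool present in the odd periods, including the sticky type-$1$ workers that pile up because $\ell_1(r)=0$, while charging nothing in the even periods, and then confirm that the resulting constant stays positive across the whole admissible interval $\alpha\in[0.7r,r)$ rather than only at a convenient endpoint. This is where the interplay between the accumulation of cheap-to-retain workers and the payout schedule must be handled carefully, and it is the step I would scrutinize most closely before declaring the $\Omega(\lambda)$ separation.
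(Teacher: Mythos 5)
Your overall route is the same as the paper's: compute both long-run average profits in closed form in the deterministic relaxation and subtract. Your treatment of $\varphi^c$ matches the paper exactly, including the key accounting point that the wage bill $r\cdot 2.9\lambda$ is charged only in odd periods, giving $\widetilde{\Pi}(\varphi^c)=3.2\alpha\lambda-1.45r\lambda$. The divergence, and the problem, is in the benchmark term.

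You evaluate $\widetilde{\Pi}(\varphi^0)$ at the steady state $\widetilde{N}_1=\lambda_1/\ell_1(0)=\lambda$, $\widetilde{N}_2=\lambda_2/\ell_2(0)=\lambda$, i.e.\ $\widetilde{\Pi}(\varphi^0)=2\alpha\lambda$. That is indeed the value dictated by the relaxation dynamics $\widetilde{N}_i(t+1)=\widetilde{N}_i(t)\bigl(1-\widehat{\ell}_i(\unitvec_0)\bigr)+\lambda_i$ with $\ell_1(0)=0.1$, and it agrees with the formula $\widetilde{N}(\bx)=\lambda\left(\frac{1}{1-x}+\frac{1}{1-0.5x}\right)$ evaluated at $x=0$ in the proof of \cref{lem:exists-rev-counter}. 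But then
\[
\widetilde{\Pi}(\varphi^c)-\widetilde{\Pi}(\varphi^0)=1.2\alpha\lambda-1.45r\lambda,
\]
which is negative on the entire interval $\alpha\in[0.7r,r)$ (it ranges from $-0.61\,r\lambda$ at $\alpha=0.7r$ up to $-0.25\,r\lambda$ as $\alpha\to r$). So the final step of your plan --- ``invoke $\alpha\ge 0.7r$ to bound the difference below by a strictly positive constant multiple of $\lambda$'' --- cannot be carried out: with your value of $\widetilde{\Pi}(\varphi^0)$ the sign is wrong for every admissible $\alpha$. The paper's proof closes only because it charges $\varphi^0$ revenue $\alpha(\lambda_1+\lambda_2)=1.1\alpha\lambda$, i.e.\ it implicitly treats all workers as departing every period under zero pay, which is inconsistent with $\ell_1(0)=0.1$ and with the computation in \cref{lem:exists-rev-counter}. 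You correctly flagged the cost/retention bookkeeping as the step to scrutinize, but the scrutiny has to actually be performed: as written, your proposal asserts a positivity that the arithmetic refutes, so there is a genuine gap, and your (correct) steady-state computation in fact exposes a discrepancy that must be resolved --- e.g.\ by changing the instance or the claimed constant --- before the $\Omega(\lambda)$ separation can be established.
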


 Putting these two lemmas together proves the result.
 \hfill\Halmos\end{proof}

 \chedit{The following proposition shows the difference in conditional reward distribution across both types.

 \begin{proposition}\label{prop:example-disc}
Under $\varphi^c$, a type 1 agent receives reward $r$ with probability 19/39, and a type 2 agent receives reward $r$ with probability 0.4. Hence, the expected reward of a type 1 agent is $\frac{19}{39}r$, and the expected reward of a type 2 agent is $0.4r$.
 \end{proposition}

 \begin{proof}
By \eqref{eq:steadystate-cyclic1} and \eqref{eq:steadystate-cyclic2}, the probability that a type 1 agent receives reward $r$, conditional on being a type 1 agent is given by:
\begin{align*}
\frac{\widetilde{N}_1(1)}{\widetilde{N}_1(1) + \widetilde{N}_1(2)} = 19/39.
\end{align*}
On the other hand, the probability that a type 2 agent receives reward $r$, conditional on being a type 2 agent is given by:
\begin{align*}
\frac{\widetilde{N}_2(1)}{\widetilde{N}_2(1) + \widetilde{N}_2(2)} = 2/5.
\end{align*}
The expected reward for each type follows from the fact that in every odd period, a reward of 0 is paid out.
 \end{proof}
}

\subsubsection{Auxiliary lemmas}\label{apx:cyclic-aux-proofs}

\begin{proof}[Proof of \cref{lem:exists-rev-counter}.]
We abuse notation and let $\widetilde{N}(\bx)$ and $\widetilde{\Pi}(\bx)$ respectively denote the number of agents and profit induced by static solution $\bx$ that places weight $x$ on $r$. It suffices to show that $\frac{d}{dx}\widetilde{\Pi}(\bx) < 0$ for all $x \in (0,1)$. We have:
\begin{align}
    \frac{d}{dx}\widetilde{\Pi}(\bx) &= -r\widetilde{N}(\bx) + (\alpha-rx)\frac{d}{dx}\widetilde{N}(\bx)
    < -r\widetilde{N}(\bx) + r(1-x)\frac{d}{dx}\widetilde{N}(\bx) \label{eq:lemc4-step1}
    \end{align}
where \eqref{eq:lemc4-step1} follows from the assumption that $\alpha < r$ and $\frac{d}{dx}\widetilde{N}(\bx) > 0$. Consider now our specific instance. We have:
\begin{align*}
    \widetilde{N}(\bx) &= \frac{0.1\lambda}{0 \cdot x + 0.1 (1-x)} + \frac{\lambda}{0.5x + 1-x} 
    \implies \frac{d}{dx}\widetilde{N}(\bx) = \lambda\left(\frac{1}{(1-x)^2} + \frac{0.5}{(1-0.5x)^2}\right).
\end{align*}

Plugging this into \eqref{eq:lemc4-step1}, we obtain:
\begin{align}
    \frac{d}{dx}\widetilde{\Pi}(\bx) &< \lambda r \left[-\left(\frac{1}{1-x} + \frac{1}{1-0.5x}\right) + (1-x)\left(\frac{1}{(1-x)^2} + \frac{0.5}{(1-0.5x)^2}\right)\right] \notag \\
    &= \lambda r\left[\frac{-\left(1-0.5x\right) + 0.5(1-x)}{(1-0.5x)^2}\right] 
    < 0 \quad \forall \, x \in (0,1). \notag
\end{align}
\hfill\Halmos\end{proof}

\begin{proof}[Proof of \cref{lem:cyclic-beats-zero-counter}.]
We have:
\begin{align*}
    \widetilde{\Pi}(\varphi^c) - \widetilde{\Pi}(\varphi^0) &= \left(\alpha\left(\frac12\widetilde{N}(1) + \frac12\widetilde{N}(2)\right) - \frac12 r \widetilde{N}(1) \right) - \alpha(\lambda_1+\lambda_2)  \geq 0.7r\left(3.2\lambda-1.1\lambda\right) - \frac12r \cdot 2.9\lambda 
    = 0.02r\lambda,
\end{align*}
where the inequality follows from plugging in the expressions for $\widetilde{N}(t), t \in \{1,2\}$, as well as $\alpha \geq 0.7r$.
\hfill\Halmos\end{proof}

\section{Structure and computation of the fluid heuristic}\label{apx:structure-of-fluid}

\subsection{Proof of Proposition~\ref{prop:fluid-non-cvx}}

\begin{proof}
{Suppose} $\rev(\fluidn) = \fluidn$, $\rewardset = \{r, 0\}$ for some $r > 0$, and $K = 1$. Moreover, let $\ell_1(0) = 1$. For distribution $\bx$, let $x$ denote the weight placed on $r$. We abuse notation and let $\widehat{\Pi}(\bx)$ denote the profit induced by $\bx$ in the deterministic relaxation. The objective of~\ref{eq:fluid-opt} evaluates to:
$$\widehat{\Pi}(\bx) = \lambda_1 \, \frac{1-rx}{1 - (1-\ell_1(r))x} \implies \frac{\partial^2}{\partial x^2}\widehat{\Pi}(\bx) = 2\lambda_1 \, \frac{(1-\ell_1(r))(1-\ell_1(r)-r)}{(1 - (1-\ell_1(r))x)^3} > 0 \quad  \forall \, r < 1-\ell_1(r).$$ \hfill\Halmos
\end{proof}

\subsection{Proof of \cref{thm:main-theorem}}

{Consider the following {\it budgeted supply maximization} problem, which we term~\ref{eq:supply-opt}. {For a fixed set of types, an instance of~\ref{eq:supply-opt} is defined by a budget $B \in \mathbb{R}_{+}$ as follows:}}
\begin{align}\label{eq:supply-opt}
    \max_{{\xvec \in \simplex^{|\rewardset|}, \mathbf{\fluidn} \in \mathbb{N}^K}} \quad & \sum_i \fluidn_i \tag{\supplyopt} \\
    \text{subject to} \quad & \left(\sum_r rx_r\right)\left(\sum_i \fluidn_i\right) \leq B \notag\\
    & \fluidn_i=\frac{\lambda_i}{\sum_r \ell_i(r) x_r}\notag
\end{align}

{Before proceeding with the proof, we note that the introduction of~\ref{eq:supply-opt} is a {\it conceptual} simplification, rather than a computational one. In particular, though the objective is now linear in $\mathbf{\fluidn}$, the budget constraint remains non-convex.}
{Lemma~\ref{prop:profit-reduction} formalizes the idea that~\ref{eq:supply-opt} is a useful proxy via which one can characterize the fluid optimum of~\ref{eq:fluid-opt}. Namely, it establishes that that the optimal solution to \ref{eq:fluid-opt} inherits the same structure as that of~\ref{eq:supply-opt}. We defer its proof -- as well as the proofs of all subsequent lemmas -- to Appendix \ref{apx:main-thm-aux-lemmas}.}
\begin{lemma}\label{prop:profit-reduction}
Suppose there exists $n \in \mathbb{N}^+$ such that, for all instances of~\ref{eq:supply-opt}, there exists an optimal solution $\supplyoptsol$ such that $|\supp(\nf{\supplyoptsol})| \leq n$. Then, there exists an optimal solution $\nf{\bx}^\star$ to~\ref{eq:fluid-opt} such that $|\supp(\nf{\bx}^\star)| \leq n$. 
\end{lemma}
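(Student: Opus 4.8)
The plan is to show that the small-support structure of \ref{eq:supply-opt} transfers to \ref{eq:fluid-opt} by viewing any optimal fluid solution through the lens of the budgeted supply-maximization problem. Let $\bx^\star$ be an optimal solution to \ref{eq:fluid-opt}. Since the stability constraint pins down the number of workers as a function of the distribution alone, write $\widetilde{N}_i^\star = \frac{\lambda_i}{\sum_r \ell_i(r) x_r^\star}$ for the induced number of type-$i$ workers, and set $\widetilde{N}^\star = \sum_i \widetilde{N}_i^\star$ together with the induced cost $B^\star = \left(\sum_r r x_r^\star\right)\widetilde{N}^\star$, so that $\optfluidprofit = \rev(\widetilde{N}^\star) - B^\star$.

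First I would instantiate \ref{eq:supply-opt} with budget $B = B^\star$ and observe that $\bx^\star$ is feasible for it: it satisfies the same stability constraint by construction, and its cost is exactly $B^\star$, meeting the budget with equality. Consequently the optimal value of this \ref{eq:supply-opt} instance is at least $\widetilde{N}^\star$. Next, by the hypothesis of the lemma, this instance admits an optimal solution $\supplyoptsol$ with $|\supp(\supplyoptsol)| \leq n$; let $\widetilde{N}^S = \sum_i \frac{\lambda_i}{\sum_r \ell_i(r) x_r^S}$ and $B^S = \left(\sum_r r x_r^S\right)\widetilde{N}^S$ be its induced supply and cost. Feasibility of $\supplyoptsol$ gives $B^S \leq B^\star$, while its optimality combined with the feasibility of $\bx^\star$ gives $\widetilde{N}^S \geq \widetilde{N}^\star$.

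It then remains to read $\supplyoptsol$ back as a candidate for \ref{eq:fluid-opt}. Because the stability constraint is identical in the two programs, $\supplyoptsol$ is feasible for \ref{eq:fluid-opt}, and its fluid profit satisfies
$$\rev(\widetilde{N}^S) - B^S \geq \rev(\widetilde{N}^\star) - B^\star = \optfluidprofit,$$
where the inequality uses the monotonicity of $\rev$ (hence $\rev(\widetilde{N}^S) \geq \rev(\widetilde{N}^\star)$ from $\widetilde{N}^S \geq \widetilde{N}^\star$) together with $B^S \leq B^\star$. Since $\optfluidprofit$ is the optimal value of \ref{eq:fluid-opt}, this must hold with equality, so $\supplyoptsol$ is itself an optimal solution to \ref{eq:fluid-opt} whose support has size at most $n$, which is the claim.

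The argument is deliberately short, and the only points requiring care are the two feasibility/optimality transfers: that the stability constraint carries over verbatim (so feasibility is preserved in both directions), and that a supply-optimal distribution can neither cost more than the budget nor recruit fewer workers than the feasible point $\bx^\star$ — exactly the two facts that let the non-decreasing revenue term dominate. I expect no genuine obstacle here; the real difficulty of establishing the two-reward structure is entirely confined to proving the hypothesis of this lemma, namely the small-support property of \ref{eq:supply-opt} itself, which the subsequent lemmas address.
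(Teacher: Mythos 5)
Your proof is correct and follows essentially the same reduction as the paper: instantiate \ref{eq:supply-opt} with a budget derived from an optimal fluid solution, invoke the hypothesis to obtain a small-support supply-optimal distribution, and transfer optimality back to \ref{eq:fluid-opt} via feasibility of the stability constraint and monotonicity of $\rev$. The only (cosmetic) difference is that you take the budget to be the actual cost $\left(\sum_r r x_r^\star\right)\widetilde{N}^\star$ of $\bx^\star$ rather than the paper's minimum cost $C(\widetilde{N}(\bx^\star))$ of achieving that supply level, which slightly streamlines the feasibility check without changing the substance of the argument.
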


{Thus, to prove the theorem, it suffices to show {that for all instances of~\ref{eq:supply-opt}} there exists {an optimal solution $\supplyoptsol$ with the desired structure, i.e. $|\supp(\nf{\supplyoptsol)}| \leq 2$.}}
Let $\supplyoptsol$ denote an optimal solution to~\ref{eq:supply-opt}. {Based on the second constraint}, we define $\fluidn_i(\bx) = \frac{\lambda_i}{\sum_r \ell_i(r)x_r}$, and $\fluidn(\bx) = \sum_i \fluidn_i(\bx)$. {Further}, $\avgreward(\bx) = \sum_r rx_r$ is used to denote {an agent's} expected reward under $\bx$, and $\avgloss_i(\bx) = \sum_r \ell_i(r)x_r$ the expected departure probability of a type $i$ agent under~$\bx$. {We moreover assume that $B$ is such that $\rmin\fluidn(\unitvec_{\rmin}) < B$ (i.e., the budget is large enough to serve all agents with the minimum possible reward).}
Finally, we introduce the concept of \emph{interlacing rewards}. {Intuitively, two rewards $r<r'$ {interlace} the budget $B$ if a solution $\xvec=\unitvec_r$ spends at most~$B$, while $\xvec'=\unitvec_{r'}$ spends more than $B$.}

\begin{definition}[Interlacing rewards]
Consider an instance of~\ref{eq:supply-opt}. Rewards $r$ and $r'$, $r < r'$, are said to \emph{interlace} budget $B$ if the following holds:
$ {r \fluidn(\unitvec_{r}) \leq B < r'\fluidn(\unitvec_{r'}). }$
We also say that three rewards $r, r'$ and $r''$ interlace if there exist two pairs of interlacing rewards amongst $r, r'$ and $r''$.
\end{definition}

{We present a simple, but useful, lemma upon which we repeatedly rely to prove Theorem~\ref{thm:main-theorem}. We leave the easy proof of this fact to the reader.}

\begin{lemma}\label{lem:obvious-fact}
Suppose distribution $\by$ is such that $\supp(\by) = \left\{r, r'\right\}$ for some $r > r' \in \rewardset$, with weights $y$ on $r$ and $1-y$ on $r'$, $y \in (0,1)$. Then, $\avgreward(\by)$ and $\fluidn(\by)$ are {both} non-decreasing in $y$. 
\end{lemma}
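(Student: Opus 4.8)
The plan is to treat both quantities as functions of the single scalar $y \in (0,1)$, with the two rewards $r > r'$ held fixed, and to read off monotonicity directly from the definitions. For the average reward, I would simply substitute the two-point distribution into the definition to get $\avgreward(\by) = y\,r + (1-y)\,r' = r' + y(r-r')$, which is affine in $y$ with slope $r - r' > 0$ by the hypothesis $r > r'$. Hence $\avgreward(\by)$ is (strictly) increasing, and in particular non-decreasing, in $y$. This disposes of the first claim with no further work.

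For $\fluidn(\by)$, the idea is to first analyze each type's expected departure probability and then pass to the reciprocal. Substituting the two-point distribution into $\avgloss_i$ gives $\avgloss_i(\by) = y\,\ell_i(r) + (1-y)\,\ell_i(r') = \ell_i(r') + y\bigl(\ell_i(r) - \ell_i(r')\bigr)$, which is affine in $y$ with slope $\ell_i(r) - \ell_i(r')$. Because each $\ell_i$ is assumed non-increasing and $r > r'$, this slope is non-positive, so $\avgloss_i(\by)$ is non-increasing in $y$. Since $\ell_i(\cdot)$ takes values in $(0,1]$, the denominator $\avgloss_i(\by)$ is strictly positive for every $y \in (0,1)$, so $\fluidn_i(\by) = \lambda_i/\avgloss_i(\by)$ is the reciprocal (scaled by $\lambda_i > 0$) of a strictly positive, non-increasing function, and is therefore non-decreasing in $y$. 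Summing over $i \in [K]$ preserves this, so $\fluidn(\by) = \sum_i \fluidn_i(\by)$ is non-decreasing in $y$, as claimed.

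I do not anticipate any real obstacle here: the lemma is a direct computation whose only genuine content is the interplay between the sign of $r - r'$ and the monotonicity of $\ell_i$. The one point that must be stated explicitly (to justify that inverting preserves and reverses monotonicity cleanly) is the strict positivity of each $\avgloss_i(\by)$, which follows from $\ell_i(r) > 0$ on all of $\rewardset$; this guarantees $\fluidn_i(\by)$ is finite and well-defined throughout $(0,1)$. With these observations the proof is complete.
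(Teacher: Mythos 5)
Your proof is correct and follows essentially the same route as the paper's: compute $\avgreward(\by)$ as an affine function with positive slope $r-r'$, and observe that $\avgloss_i(\by)$ is non-increasing in $y$ because each $\ell_i$ is non-increasing, so its positive reciprocal $\fluidn_i(\by)$ is non-decreasing. Your explicit remark about strict positivity of the denominator (from $\ell_i > 0$ on $\rewardset$) is a minor but welcome addition that the paper leaves implicit.
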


With these two lemmas in hand, we prove the theorem.

\begin{proof}[Proof of \cref{thm:main-theorem}.]
We first consider the trivial case in which $\rmax\fluidn(\unitvec_{\rmax}) \leq B$. In this case, {the optimal solution for \ref{eq:supply-opt} is} to give out the maximum reward with probability~1 (i.e., $\bx = \unitvec_{\rmax}$). 
Thus, in the remainder of the proof, we assume $\rmax\fluidn(\unitvec_{\rmax}) > B$.

Any optimal solution $\supplyoptsol$ to \ref{eq:supply-opt} satisfies the budget constraint with equality (else, we can strictly improve the solution by moving weight from a low reward to a high reward, contradicting optimality of $\supplyoptsol$). As a result, we can instead consider the following equivalent {\it budgeted reward minimization} problem:
\begin{align}
    \min_{{\xvec \in \simplex^{|\rewardset|}}} \qquad &\sum_r rx_r \tag{\textsc{Reward-OPT}} \label{eq:reward-opt} \\
    \text{subject to} \qquad &\left(\sum_r r x_r\right)\left(\sum_i \fluidn_i\right) {=} B \label{eq:budget-constraint}\\
    & \fluidn_i=\frac{\lambda_i}{\sum_r \ell_i(r) x_r}\notag 
\end{align}

We abuse notation and let $\supplyoptsol$ denote an optimal solution to~\ref{eq:reward-opt}, and suppose $|\supp(\supplyoptsol)| > 2$. Algorithm~\ref{alg:support-reduction} exhibits a procedure which either contradicts optimality of $\supplyoptsol$, or iteratively reduces the size of the support of $\supplyoptsol$, all the while maintaining feasibility  -- i.e., satisfying~\eqref{eq:budget-constraint} -- and never degrading the quality of the solution. Thus, {we only need to show} 
that Algorithm~\ref{alg:support-reduction} maintains these two invariants and successfully terminates. {As stated, the first step of each iteration of the algorithm implicitly relies on the existence of three interlacing rewards. It is, however, \emph{a priori} unclear that such a set necessarily exists. Lemma~\ref{lem:opt-uses-interlacing-rewards} establishes that, as long as the incumbent solution $\algxvec$ is feasible, existence of three interlacing rewards is{,} in fact{,} guaranteed.}


\begin{algorithm}
\begin{algorithmic}
\Require {optimal solution $\supplyoptsol$ to~\ref{eq:reward-opt} such that $|\supp(\supplyoptsol)| > 2$}
\Ensure {optimal solution $\algxvec$ such that $|\supp(\algxvec)| = 2$, or that $\supplyoptsol$ is not optimal, a contradiction.}
\State $\algxvec \gets \supplyoptsol$
\While{$|\supp(\algxvec)| > 2$}
\State Choose a set $\mathcal{R} = \{r_1, r_2, r_3\} \subseteq \supp(\algxvec)$ of interlacing rewards.
\State 
Construct a distribution $\by$ such that $\supp(\by) = \supp(\algxvec) \setminus \{r^\star\}, r^\star\in \mathcal{R}$, such that
\State $(i)$ $\avgreward(\by) = \avgreward(\algxvec)$ and $(ii)$ $\fluidn(\by) \geq \fluidn(\algxvec)$.
\If{$\fluidn(\by) > \fluidn(\algxvec)$}
\State Return that $\supplyoptsol$ is not optimal for~\ref{eq:reward-opt}, a contradiction.
\EndIf
\State $\algxvec \gets \by$
\EndWhile
\end{algorithmic}
	\caption{Support-reduction procedure}\label{alg:support-reduction}
\end{algorithm}


\begin{lemma}\label{lem:opt-uses-interlacing-rewards}
Suppose $\algxvec$ is feasible and $|\supp(\algxvec)| > 2$. Then, there exists a set $\mathcal{R} = \{r_1,r_2,r_3\}\subseteq\supp(\algxvec)$ of interlacing rewards.
\end{lemma}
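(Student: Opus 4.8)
The plan is to certify three interlacing rewards by locating, inside $\supp(\algxvec)$, one reward whose \emph{pure} spending lies at or below the budget $B$ and one whose pure spending strictly exceeds it; any third support reward will then complete the triple. Write $\underline{r} = \min\supp(\algxvec)$ and $\overline{r} = \max\supp(\algxvec)$, and recall that the pure spending of a reward $r$ is $r\fluidn(\unitvec_r)$, where $\fluidn(\unitvec_r) = \sum_i \lambda_i/\ell_i(r)$. The whole argument will rest on comparing the quantities $\fluidn(\algxvec)$ and $\avgreward(\algxvec)$ induced by the mixed distribution to the corresponding quantities induced by the pure distributions $\unitvec_{\underline r}$ and $\unitvec_{\overline r}$.

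First I would establish the two-sided bound $\fluidn(\unitvec_{\underline r}) \le \fluidn(\algxvec) \le \fluidn(\unitvec_{\overline r})$. This is an immediate consequence of each $\ell_i$ being non-increasing: since every $r \in \supp(\algxvec)$ satisfies $\underline r \le r \le \overline r$, we have $\ell_i(\overline r) \le \avgloss_i(\algxvec) \le \ell_i(\underline r)$; inverting these positive quantities and summing over $i$ against $\lambda_i > 0$ yields the claim. In the same vein, because $\algxvec$ is a probability distribution supported in $[\underline r, \overline r]$, its mean reward satisfies $\underline r \le \avgreward(\algxvec) \le \overline r$, and the upper inequality is in fact \emph{strict} since $|\supp(\algxvec)| \ge 2$ forces positive mass strictly below $\overline r$.

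Combining these monotonicity facts with feasibility, i.e. $\avgreward(\algxvec)\fluidn(\algxvec) = B$, produces the two certificates. On the low end, $\underline r\,\fluidn(\unitvec_{\underline r}) \le \underline r\,\fluidn(\algxvec) \le \avgreward(\algxvec)\fluidn(\algxvec) = B$, so the pure spending of $\underline r$ is at most $B$. On the high end, using $\avgreward(\algxvec) < \overline r$ together with $\fluidn(\algxvec) > 0$, I obtain $\overline r\,\fluidn(\unitvec_{\overline r}) \ge \overline r\,\fluidn(\algxvec) > \avgreward(\algxvec)\fluidn(\algxvec) = B$, so the pure spending of $\overline r$ strictly exceeds $B$.

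It remains to assemble the triple. Choose any $r^\dagger \in \supp(\algxvec)$ with $\underline r < r^\dagger < \overline r$, which exists precisely because $|\supp(\algxvec)| > 2$, and set $\mathcal{R} = \{\underline r, r^\dagger, \overline r\}$. Whatever the value of $r^\dagger\fluidn(\unitvec_{r^\dagger})$ relative to $B$, a short case check shows $\mathcal{R}$ always contains two interlacing pairs: if $r^\dagger\fluidn(\unitvec_{r^\dagger}) \le B$ these are $(\underline r, \overline r)$ and $(r^\dagger, \overline r)$, while otherwise they are $(\underline r, \overline r)$ and $(\underline r, r^\dagger)$. I expect the only delicate point to be the \emph{strict} inequality at the high end; it is exactly here that the hypothesis $|\supp(\algxvec)| > 2$ is doing work (ensuring both the intermediate reward $r^\dagger$ and, through $\avgreward(\algxvec) < \overline r$, the strictness), whereas the remaining steps are routine consequences of the $\ell_i$ being non-increasing.
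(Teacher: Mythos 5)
Your proposal is correct and follows essentially the same route as the paper's proof: exhibit one support reward whose pure spending is at most $B$ and one whose pure spending strictly exceeds $B$, then observe that any third support reward completes two interlacing pairs via the same two-case check. The only difference is cosmetic — you instantiate the witnesses explicitly as $\min\supp(\algxvec)$ and $\max\supp(\algxvec)$ and spell out the monotonicity of $r\mapsto r\fluidn(\unitvec_r)$ and the comparison with $\avgreward(\algxvec)\fluidn(\algxvec)=B$, steps the paper compresses into parenthetical contradiction arguments.
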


{{Next, Lemma~\ref{lem:main-lemma}} establishes that, given an incumbent feasible solution $\algxvec$, it is {\it always} possible to (weakly) increase the fluid number of agents by removing a reward from the support of $\algxvec$, all the while holding the expected reward fixed. Moreover, one can do this such that at least two interlacing rewards remain in the support of the newly constructed distribution.}

\begin{lemma}\label{lem:main-lemma}
Consider a set $\mathcal{R} = \{r_1, r_2, r_3\} \subseteq \supp(\algxvec)$ with $r_1 > r_2 > r_3$ interlacing. Then, there exists a distribution $\by$ and $r^\star \in \mathcal{R}$ such that $(i)$ $\supp(\by) = \supp(\algxvec) \setminus \{r^\star\}$, $(ii)$ $\avgreward(\by) = \avgreward(\algxvec)$, $(iii)$ $\fluidn(\by) \geq \fluidn(\algxvec)$, and $(iv)$ $p, q \in \mathcal{R}\setminus\{r^\star\}$ are interlacing.
\end{lemma}

{Finally,} {Lemma~\ref{lem:support-reduction-maintains-invariants}} shows that either this new supply-improving solution $\by$ is optimal, or it implies the existence of a feasible solution which {\it strictly} improves upon $\supplyoptsol$, contradicting the assumption that $\supplyoptsol$ was optimal in the first place.
\begin{lemma}\label{lem:support-reduction-maintains-invariants}
At the end of each iteration, either $\algxvec$ is optimal, or the algorithm has correctly returned that $\supplyoptsol$ was not optimal to begin with.
\end{lemma}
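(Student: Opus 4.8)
The plan is to run an induction on the loop of Algorithm~\ref{alg:support-reduction}, maintaining the invariant that at the start of every iteration $\algxvec$ is an optimal (hence feasible) solution to \ref{eq:reward-opt}; this holds at initialization since $\algxvec = \supplyoptsol$. The key observation I would record first is that, on the feasible set of \ref{eq:reward-opt}, the budget constraint forces $\avgreward(\algxvec)\fluidn(\algxvec) = B$, so $\fluidn = B/\avgreward$ there. Consequently optimality of $\algxvec$ yields both $\avgreward(\algxvec) = \avgreward(\supplyoptsol)$ and $\fluidn(\algxvec) = B/\avgreward(\supplyoptsol) = \fluidn(\supplyoptsol)$. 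By \cref{lem:main-lemma} the distribution $\by$ built in the iteration satisfies $\avgreward(\by) = \avgreward(\algxvec)$ and $\fluidn(\by) \geq \fluidn(\algxvec)$, so exactly one of the two algorithmic branches applies.

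The easy branch is $\fluidn(\by) = \fluidn(\algxvec)$. Here $\by$ spends $\avgreward(\by)\fluidn(\by) = \avgreward(\algxvec)\fluidn(\algxvec) = B$, so $\by$ is feasible for \ref{eq:reward-opt}, and its objective equals $\avgreward(\algxvec)$, the optimal value; hence $\by$ is itself optimal. The update $\algxvec \gets \by$ therefore preserves the invariant while strictly shrinking the support (as $\supp(\by) = \supp(\algxvec)\setminus\{r^\star\}$), and the conclusion ``$\algxvec$ is optimal'' holds.

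The crux is the branch $\fluidn(\by) > \fluidn(\algxvec)$, in which the algorithm declares $\supplyoptsol$ suboptimal; I must produce an explicit feasible solution that strictly improves on $\supplyoptsol$. Writing $\bar r := \avgreward(\supplyoptsol) = \avgreward(\by)$, the invariant gives $\fluidn(\by) > \fluidn(\supplyoptsol)$, so $\by$ \emph{overspends}: $\avgreward(\by)\fluidn(\by) = \bar r\,\fluidn(\by) > \bar r\,\fluidn(\supplyoptsol) = B$. I would then ``deflate'' $\by$ toward the cheapest point mass by setting $\bz_t = (1-t)\by + t\,\unitvec_{\rmin}$ for $t \in [0,1]$. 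Its spend $\avgreward(\bz_t)\fluidn(\bz_t)$ is continuous in $t$ (the mean reward is affine in $t$, and each type's mean departure probability under $\bz_t$ is affine in $t$ and bounded below by $\min_r \ell_i(r) > 0$, so $\fluidn(\bz_t)$ is continuous), equals $\mathrm{spend}(\by) > B$ at $t=0$, and equals $\rmin\fluidn(\unitvec_{\rmin}) < B$ at $t=1$ by the standing assumption on $B$. By the intermediate value theorem there is $t^\star \in (0,1)$ with $\mathrm{spend}(\bz_{t^\star}) = B$, so $\bz_{t^\star}$ is feasible. Moreover, since $\by$ overspends it cannot equal $\unitvec_{\rmin}$, which forces $\bar r > \rmin$ and hence $\avgreward(\bz_{t^\star}) = (1-t^\star)\bar r + t^\star \rmin < \bar r$. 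This feasible solution strictly beats $\supplyoptsol$, contradicting its optimality and confirming that the algorithm's return is correct.

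The main obstacle is exactly this last branch: converting ``$\by$ has the same mean reward but strictly more workers'' into an honest certificate of suboptimality. The resolution rests on two facts — that such a $\by$ necessarily overspends the budget, and that the assumption $\rmin\fluidn(\unitvec_{\rmin}) < B$ supplies a feasible direction ($\unitvec_{\rmin}$) that strictly underspends — so a single intermediate-value argument along the segment $\bz_t$ lands precisely on the budget while strictly lowering the average reward. I note that one could instead attempt to shift mass only between the interlacing pair $p > q$ that survives in $\supp(\by)$ via \cref{lem:main-lemma}, but reaching the budget along that restricted path is not guaranteed when higher rewards remain in the support; deflating toward $\unitvec_{\rmin}$ cleanly sidesteps this difficulty and needs only continuity of $\fluidn$ together with the budget assumption.
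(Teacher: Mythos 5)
Your proof is correct, and it follows the paper's overall strategy — the same two-case split on whether $\fluidn(\by)=\fluidn(\algxvec)$ or $\fluidn(\by)>\fluidn(\algxvec)$, with the second case resolved by exhibiting a feasible solution of strictly smaller average reward to contradict optimality of $\supplyoptsol$ — but the construction of that witness is genuinely different. The paper stays inside $\supp(\by)$: it repeatedly and continuously shifts mass from $r_{\max}(\by)$ to $r_{\min}(\by)$, invoking \cref{lem:obvious-fact} to show the spend decreases monotonically, and needs a multi-phase termination argument to handle the case where the weight on the current maximal reward is exhausted before the budget is met. You instead take the one-parameter segment $\bz_t=(1-t)\by+t\,\unitvec_{\rmin}$ toward the global minimum reward and apply the intermediate value theorem once, using that the spend is continuous (each $\widehat{\ell}_i(\bz_t)$ is affine in $t$ and bounded away from zero), exceeds $B$ at $t=0$, and falls below $B$ at $t=1$ by the standing assumption $\rmin\fluidn(\unitvec_{\rmin})<B$; your observation that an overspending $\by$ cannot equal $\unitvec_{\rmin}$, hence $\avgreward(\by)>\rmin$ and $\avgreward(\bz_{t^\star})<\avgreward(\supplyoptsol)$, closes the contradiction. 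What your route buys is a cleaner, single-shot argument that dispenses with \cref{lem:obvious-fact} and the exhaustion/termination bookkeeping; the only cost is that $\bz_{t^\star}$ may enlarge the support by adding $\rmin$, which is irrelevant here since the lemma only needs a strictly cheaper feasible point, not one supported on $\supp(\by)$. Your framing of the loop as an induction preserving optimality of $\algxvec$ is also a slightly more explicit justification of why $\avgreward(\algxvec)=\avgreward(\supplyoptsol)$ and $\fluidn(\algxvec)=\fluidn(\supplyoptsol)$ at each iteration, which the paper leaves implicit.
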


Lemmas~\ref{lem:opt-uses-interlacing-rewards},~\ref{lem:main-lemma}, and~\ref{lem:support-reduction-maintains-invariants} together establish that, in each iteration of the procedure, both invariants are maintained. {It remains to show that Algorithm~\ref{alg:support-reduction} terminates. Suppose first that the algorithm returns that~$\supplyoptsol$ was not optimal to begin with. In this case, the algorithm clearly terminates. In the other case, termination follows since in each iteration of the algorithm, $|\supp(\algxvec)|$ strictly decreases.}\hfill\Halmos
\end{proof}

\subsection{Auxiliary lemmas}\label{apx:main-thm-aux-lemmas}

\subsubsection{Proof of Lemma~\ref{prop:profit-reduction}}\label{apx:proof-of-profit-red}

\begin{proof}
Consider an optimal solution $\bx^\star$ to~\ref{eq:fluid-opt}, and suppose $|\supp(\bx^\star)| > n$. We define $C\left(\fluidn(\bx^\star)\right)$ to be the minimum cost of securing a supply of $\fluidn(\bx^\star)$ agents in the system. Formally, 
$C\left(\fluidn(\bx^\star)\right) = \min_{\bx: \fluidn(\bx) = \fluidn(\bx^\star)} \fluidn(\bx)\avgreward(\bx).$
We moreover let $\widehat{\Pi}(\bx) = \rev(\fluidn(\bx)) - C\left(\fluidn(\bx)\right).$

Consider now an instance of~\ref{eq:supply-opt} with budget $B = C\left(\fluidn(\bx^\star)\right)$. Let $\supplyoptsol$ denote an optimal solution to this instance of~\ref{eq:supply-opt}, with $|\supp(\supplyoptsol)| \leq n$, and let $\fluidn(\supplyoptsol)$ denote the number of agents induced by $\supplyoptsol$. 
Clearly, $\bx^\star$ is feasible to this instance, and thus $\fluidn(\supplyoptsol)\geq\fluidn(\bx^\star)$. Since $\rev$ is increasing, we have $\rev(\fluidn(\supplyoptsol)) \geq \rev(\fluidn(\bx^\star))$. Thus,
\begin{align*}
    \widehat{\Pi}(\supplyoptsol) = \rev(\fluidn(\supplyoptsol)) - C\left(\fluidn(\supplyoptsol)\right) \geq \rev(\fluidn(\bx^\star))-C\left(\fluidn(\supplyoptsol)\right) \geq \rev(\fluidn(\bx^\star))-C\left(\fluidn(\bx^\star)\right) = \widehat{\Pi}(\bx^\star)
\end{align*}
where the final inequality follows from the fact that $C\left(\fluidn(\supplyoptsol)\right) \leq B = C\left(\fluidn(\bx^\star)\right)$, by construction. Thus, $\supplyoptsol$ is also optimal for~\ref{eq:fluid-opt}. Since $|\supp(\supplyoptsol)| \leq n$ by assumption, the result follows.\hfill\Halmos
\end{proof}

\subsubsection{Proof of \cref{lem:opt-uses-interlacing-rewards}}

\begin{proof}
By feasiblity of $\algxvec$ and the assumption that $|\supp(\algxvec)| > 2$, there exists $r \in \supp(\algxvec)$ such that $r\fluidn(\unitvec_{r}) < B$ (else the budget has been exceeded).  By the same argument, there exists $r \in \supp(\algxvec)$ such that $r\fluidn(\unitvec_r) > B$ (else the total reward paid out is {\it strictly} less than the budget, contradicting feasibility to~\ref{eq:reward-opt}). Let $r_1 = \inf\left\{r \in \supp(\algxvec) \big{|} r\fluidn(\unitvec_r) > B \right\}$, and $r_2 = \sup\left\{r \in \supp(\algxvec) \big{|} r\fluidn(\unitvec_r) < B \right\}$. Then, by definition, $(r_1, r_2)$ is an interlacing pair.

Moreover, since $|\supp(\algxvec)| > 2$, there necessarily exists $r_3$ such that, either $r_3\fluidn(\unitvec_{r_3}) \leq B$, or $r_3\fluidn(\unitvec_{r_3})> B$, as argued above. In the former case $(r_1,r_3)$ form an interlacing pair; in the latter case, $(r_3, r_2)$ form an interlacing pair. Having established that $(r_1,r_2)$ is an interlacing pair, in either case we obtain at least 2 interlacing pairs.\hfill\Halmos
\end{proof}

\subsubsection{Proof of Lemma~\ref{lem:main-lemma}}

\begin{proof}
The proof is constructive. Fix ${r}^\star \in \mathcal{R}$, and let $\by(p,q)$ be the distribution that distributes the weight placed on $r^\star$ by $\algxvec$ to $p, q \in \mathcal{R} \setminus \{r^\star\}$, $p\neq q$, all the while holding the expected reward constant. Formally, $\by(p,q)$ satisfies:
    (1) $y_{r} = \widetilde{x}_{r}^S$ for $r \not\in \mathcal{R}$.
    (2) $y_{r^\star} = 0$,
    (3) $y_{q} = \sum_{r\in\mathcal{R}}\algx_r - y_{p}$ for $p, q \in \mathcal{R}\setminus \{r^\star\}$, and
    (4) $\avgreward(\by(p,q)) = \avgreward(\algxvec)$.

Recall, $\mathcal{R} = \{r_1, r_2, r_3\} \subseteq \supp(\algxvec)$, with $r_1 > r_2 > r_3$ interlacing. We claim that one of the following two inequalities necessarily holds:
\begin{align}\label{ineq:main-lemma-2}
\fluidn\left(\by(r_1,r_3)\right) \geq \fluidn(\algxvec) \quad \text{ or } \quad \fluidn\left(\by(r_2,r_3)\right) \geq \fluidn(\algxvec),
\end{align}
{\it and} one of the following two inequalities necessarily holds:
\begin{align}\label{ineq:main-lemma-1}
\fluidn\left(\by(r_1,r_2)\right) \geq \fluidn(\algxvec) \quad \text{ or } \quad \fluidn\left(\by(r_1,r_3)\right) \geq \fluidn(\algxvec).
\end{align}
Suppose the claim is true. Then, the proof of the lemma is complete. To see that we have shown existence of $\by$ such that $p, q \in \mathcal{R}\setminus\{r^\star\}$ in the support of $\by$ are interlacing (i.e., Condition $(iv)$ of the lemma holds), we consider the following cases:
\begin{enumerate}
    \item $(r_1, r_3)$ and $(r_2, r_3)$ are the interlacing pairs:~\eqref{ineq:main-lemma-2} covers this case, as $\{r_1, r_3\}\subseteq \supp(\by(r_1,r_3))$, and $\{r_2, r_3\}\subseteq \supp(\by(r_2,r_3))$.
    \item $(r_1, r_3)$ and $(r_1, r_2)$ are the interlacing pairs:~\eqref{ineq:main-lemma-1} covers this case, as $\{r_1, r_2\} \subseteq \supp(\by(r_1,r_2))$, and $\{r_1, r_3\} \subseteq \supp(\by(r_1,r_3))$.
\end{enumerate}

We only show the proof of the first set of alternatives~\eqref{ineq:main-lemma-2}. The proof of~\eqref{ineq:main-lemma-1} is entirely analogous. With slight abuse of notation, we let $y_j = y_{r_j}, \algx_j = \algx_{r_j}$ for $r_j \in |\rewardset|$.  

Consider first distribution $\by(r_1,r_3)$. Solving $(iii)$ and $(iv)$ for $y_1$ and $y_3$, we obtain:
\begin{align*}
y_{1} = \algx_1 + \frac{r_2-r_3}{r_1-r_3}\algx_2 \quad , \quad y_3 = \algx_3 + \frac{r_1-r_2}{r_1-r_3} \algx_2.
\end{align*}

Since $\fluidn_i(\bx)$ is the composition of convex function $f(z) = \frac1z$ with affine mapping $g(x) = \sum_{r\in\rewardset} \ell_i(r)x_r$, we have that $\fluidn_i(\bx)$ is convex. Thus:
\begin{align*}
\fluidn_i(\by(r_1,r_3)) &\geq \fluidn_i(\algxvec) + \nabla \fluidn_i(\algxvec)^T(\by(r_1,r_3) - \algxvec) \\
&= \fluidn_i(\algxvec) - \frac{\lambda_i}{\hat{\ell}_i(\algxvec)^2}\begin{pmatrix}
\ell_{i}(r_1)\\
\ell_{i}(r_2) \\
\ell_i(r_3)
\end{pmatrix}^T	\begin{pmatrix}
\algx_1 + \frac{r_2-r_3}{r_1-r_3}\algx_2 - \algx_1\\
0-\algx_2 \\
\algx_3 + \frac{r_1-r_2}{r_1-r_3} \algx_2 - \algx_3
\end{pmatrix} \\
&= \fluidn_i(\algxvec) - \frac{\lambda_i}{\hat{\ell}_i(\algxvec)^2}\algx_2\left[\frac{r_2-r_3}{r_1-r_3}\ell_{i}(r_1)- \ell_{i}(r_2) + \frac{r_1-r_2}{r_1-r_3}\ell_{i}(r_3)\right].
\end{align*}

Summing over all $i$, we obtain that the following condition suffices for $\fluidn(\by(r_1,r_3)) \geq \fluidn(\algxvec)$ to hold:
\begin{align}\label{inequality-3}
\sum_i \frac{\lambda_i}{\hat{\ell}_i(\algxvec)^2}\left[\frac{r_2-r_3}{r_1-r_3}\ell_{i}(r_1)- \ell_{i}(r_2) + \frac{r_1-r_2}{r_1-r_3}\ell_{i}(r_3)\right] \leq 0.
\end{align}

Suppose~\eqref{inequality-3} fails to hold, i.e. $\sum_i \frac{\lambda_i}{\hat{\ell}_i(\algxvec)^2}\left[\frac{r_2-r_3}{r_1-r_3}\ell_{i}(r_1)- \ell_{i}(r_2) + \frac{r_1-r_2}{r_1-r_3}\ell_{i}(r_3)\right] > 0$. Consider now distribution $\by(r_2,r_3)$. Solving $(iii)$ and $(iv)$ for $y_{2}$ and $y_3$ we obtain:
$$y_2 = \algx_2 + \frac{r_1-r_3}{r_2-r_3}\algx_1 \quad , \quad y_3 = \algx_3 + \frac{r_2-r_1}{r_2-r_3}\algx_1.$$

Again, by convexity of $\fluidn(\bx)$:
\begin{align*}
\fluidn_i(\by(r_2,r_3)) &\geq \fluidn_i(\algxvec) + \nabla \fluidn_i(\algxvec)^T(\by(r_2,r_3) - \algxvec) \\
&= \fluidn_i(\algxvec) - \frac{\lambda_i}{\hat{\ell}_i(\algxvec)^2}\begin{pmatrix}
\ell_{i}(r_1)\\
\ell_{i}(r_2) \\
\ell_i(r_3)
\end{pmatrix}^T	\begin{pmatrix}
0-\algx_1\\
\algx_2 + \frac{r_1-r_3}{r_2-r_3}\algx_1 - \algx_2 \\
x_3^2 + \frac{r_2-r_1}{r_2-r_3}\algx_1 - x_3^2
\end{pmatrix} \\
&= \fluidn_i(\algxvec) - \frac{\lambda_i}{\hat{\ell}_i(\algxvec)^2}\algx_1\left[-\ell_{i}(r_1) + \frac{r_1-r_3}{r_2-r_3}\ell_{i}(r_2) + \frac{r_2-r_1}{r_2-r_3}\ell_i(r_3)\right].
\end{align*}

Summing over all $i$, the following condition suffices for $\fluidn(\by(r_2,r_3)) \geq \fluidn(\algxvec)$ to hold:
\begin{align*}
\frac{\lambda_i}{\hat{\ell}_i(\algxvec)^2}\algx_1\left[-\ell_{i}(r_1) + \frac{r_1-r_3}{r_2-r_3}\ell_{i}(r_2) + \frac{r_2-r_1}{r_2-r_3}\ell_i(r_3)\right] \leq 0 \\
\iff \frac{\lambda_i}{\hat{\ell}_i(\algxvec)^2}\left[\frac{r_2-r_3}{r_1-r_3}\ell_{i}(r_1) -\ell_{i}(r_2) + \frac{r_1-r_2}{r_1-r_3}\ell_i(r_3)\right] \geq 0
\end{align*}
which holds by assumption.\hfill\Halmos
\end{proof}

\subsubsection{Proof of Lemma~\ref{lem:support-reduction-maintains-invariants}}

\begin{proof}
By Lemma~\ref{lem:main-lemma}, there exists $\by$ with strictly smaller support than $\algxvec$ such that $\avgreward(\by) = \avgreward(\algxvec)$ and $\fluidn(\by) \geq \fluidn(\algxvec)$. 

Suppose first that $\fluidn(\by) = \fluidn(\algxvec)$. Then feasibility is clearly maintained, and $\by$ is also optimal. Suppose instead that $\fluidn(\by) > \fluidn(\algxvec)$. In this case, $\by$ is {\it not} feasible, as 
$\avgreward(\by)\fluidn(\by) > \avgreward(\algxvec)\fluidn(\algxvec) = B.$
We claim {we} 
can construct a {feasible} distribution $\bz$ such that $(i)$ $\supp(\bz) = \supp(\by)$, $(ii)$ $\avgreward(\bz) < \avgreward(\by)$, and $(iii)$ $\avgreward(\bz)\fluidn(\bz) = B$. Since $\avgreward(\by) = \avgreward(\algxvec)$ by construction, the existence of such a distribution $\bz$ would contradict optimality of $\algxvec$, and consequently that of $\supplyoptsol$. 

To complete the proof, we detail the construction of such a distribution $\bz$. Take interlacing rewards $r_{\max}(\by) = \sup \{r | r \in \supp(\by)\}$, $r_{\min}(\by) = \inf\{r | r \in \supp(\by)\}$ (of which Lemma~\ref{lem:main-lemma} guarantees the existence), and continuously move weight from $r_{\max}(\by)$ to $r_{\min}(\by)$, holding all else fixed. This procedure strictly decreases $\avgreward(\by)$ and weakly decreases $\fluidn(\by)$, by Lemma~\ref{lem:obvious-fact}, and thus strictly decreases $\avgreward(\by)\fluidn(\by)$. Do this until one of two events occurs:
\begin{enumerate}
\item Constraint \eqref{eq:budget-constraint} is satisfied: In this case, we have found $\bz$.
\item The weight on $r_{\max}(\by)$ has been exhausted. Let $\widetilde{\by}$ denote the distribution at the point at which $r_{\max}(\by)$ has been exhausted. For this new distribution $\widetilde{\by}$, define $r_{\max}(\widetilde{\by}) = \sup\{r | r \in \supp(\widetilde{\by})\}$, and repeat the process.
\end{enumerate}
To see that this procedure must terminate and return a feasible solution $\bz$, in the worst case we are left with but one $r$ such that $r\fluidn(\unitvec_r) > B$. Let $\widetilde{y}$ be the weight placed on this reward. Since the left-hand side of~\eqref{eq:budget-constraint} is continuously increasing in $\widetilde{y}$ (again, by Lemma~\ref{lem:obvious-fact}), and at $\widetilde{y} = 0$, $\hat{r}(\widetilde{\by})\fluidn(\widetilde{\by}) \leq B$ (since $\rmin\fluidn(\unitvec_{\rmin}) \leq B$), there must exist $\widetilde{y}$ such that~\eqref{eq:budget-constraint} is satisfied. \hfill\Halmos
\end{proof}

\section{Asymptotic optimality of the fluid heuristic}\label{app:fluid-results}

{\subsection{Proof of Proposition~\ref{prop:fluid-ub-for-all-theta}}

Given policy $\varphi$ and $\theta \in \mathbb{N}^+$, let $\randomstatevector^\theta:=\randomstatevector^\theta(\infty)$ denote the steady-state number of agents in the system of each type, and let $\pi^\theta(\varphi)$ be the associated steady-state distribution, if it exists. We moreover define $N^\theta = \sum_{i\in[K]} N^\theta_i$. The upper bound follows from the following lemma.

\begin{lemma}\label{prop:limiting-dist}
For any static policy $\varphi$ defined by reward distribution $\bx$, a unique limiting distribution exists. Thus, $v_\theta(\varphi) = \EE\left[\rev\left(\frac{\scaledrandomstate}{\theta}\right) - \frac{\scaledrandomstate}{\theta}\left(\sum_r r x_r \right)\right].$ Moreover,
$$\scaledrandomstate \sim\normalfont{Pois}\left(\sum_i\frac{\theta\lambda_i}{\sum_r \ell_i(r)x_r}\right).$$
\end{lemma}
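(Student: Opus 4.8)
The plan is to analyze each worker type as an independent discrete-time Markov chain and to identify its stationary law \emph{explicitly} via probability generating functions (PGFs), which simultaneously yields existence, uniqueness, convergence in distribution, and the exact Poisson form. First I would observe that, under a static policy $\bx$, the chains $\{N_i^\theta(t)\}_t$ for different types $i \in [K]$ evolve independently, since arrivals and departures are independent across types; it therefore suffices to analyze a single type. Writing $q_i := \sum_r \ell_i(r) x_r$ (which lies in $(0,1]$, strictly positive because $\ell_i > 0$ on $\rewardset$), the key reduction is that, given $N_i^\theta(t) = n$, each of the $n$ workers independently draws a reward from $\bx$ and departs with the corresponding probability, so each departs with marginal probability exactly $q_i$, independently of the others. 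Hence the number of survivors is $\mathrm{Binomial}(n, 1-q_i)$, and the one-step recursion is $N_i^\theta(t+1) = \mathrm{Thin}_{1-q_i}\!\big(N_i^\theta(t)\big) + A_i(t+1)$ with $A_i(t+1) \sim \mathrm{Pois}(\theta\lambda_i)$.

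Next I would pass to PGFs. Let $G_t(z) = \EE[z^{N_i^\theta(t)}]$. Using that thinning by survival probability $1-q_i$ sends a PGF $G$ to $z \mapsto G(q_i + (1-q_i)z)$, and that the Poisson arrivals contribute a factor $e^{\theta\lambda_i(z-1)}$, I obtain $G_{t+1}(z) = G_t\big(q_i + (1-q_i)z\big)\,e^{\theta\lambda_i(z-1)}$. Iterating with $\phi(z) := q_i + (1-q_i)z$, so that $\phi^{(k)}(z) - 1 = (1-q_i)^k(z-1)$, gives the closed form $G_t(z) = G_0\big(\phi^{(t)}(z)\big)\exp\!\left(\theta\lambda_i(z-1)\sum_{k=0}^{t-1}(1-q_i)^k\right)$. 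Letting $t \to \infty$, we have $\phi^{(t)}(z) \to 1$ so $G_0(\phi^{(t)}(z)) \to 1$, while the geometric sum converges to $1/q_i$; hence $G_t(z) \to \exp\!\big(\tfrac{\theta\lambda_i}{q_i}(z-1)\big)$ pointwise on $[0,1]$. Pointwise PGF convergence on $[0,1]$ implies convergence in distribution, so the limiting law of $N_i^\theta$ is $\mathrm{Pois}(\theta\lambda_i/q_i)$; uniqueness follows because the iteration converges to this same limit regardless of $G_0$, and any stationary law must be a fixed point. By independence across types, $N^\theta = \sum_i N_i^\theta$ converges to $\mathrm{Pois}\!\left(\sum_i \frac{\theta\lambda_i}{\sum_r \ell_i(r) x_r}\right)$, the claimed distribution.

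Finally I would convert this into the long-run average profit identity. Writing $g(n) := \rev(n/\theta) - (n/\theta)\sum_r r x_r$, I have $\EE[\widehat{\Pi}^\theta(t)] = \EE[g(N^\theta(t))]$, and the goal is $\frac1T\sum_{t=1}^T \EE[g(N^\theta(t))] \to \EE[g(N^\theta)]$ under the stationary law. Since $g$ is continuous with at most linear growth (as $\rev$ is Lipschitz), the convergence $N^\theta(t) \Rightarrow N^\theta$ yields $\EE[g(N^\theta(t))] \to \EE[g(N^\theta)]$ provided $\{g(N^\theta(t))\}_t$ is uniformly integrable, after which Cesàro averaging of a convergent sequence gives the same limit. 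I expect this interchange of limit and expectation to be the main technical obstacle: I would establish uniform integrability by showing $\sup_t \EE[(N^\theta(t))^2] < \infty$, which follows from the stable (contractive) linear recursions $\EE[N_i^\theta(t+1)] = (1-q_i)\EE[N_i^\theta(t)] + \theta\lambda_i$ and its second-moment analogue, both of which remain bounded uniformly in $t$ for any fixed initial condition $\mathbf{c}_0$. Combining the three steps gives $v_\theta(\varphi) = \EE\!\left[\rev(N^\theta/\theta) - (N^\theta/\theta)\sum_r r x_r\right]$ with $N^\theta$ distributed as above.
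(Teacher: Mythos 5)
Your proposal is correct, and it reaches the same Poisson stationary law by a genuinely different route. The paper's proof identifies the stationary distribution by closure properties: Poisson thinning (each worker survives independently with probability $1-\sum_r \ell_i(r)x_r$) plus Poisson superposition with the arrivals shows that the Poisson family is preserved by the one-step map, the balance equation $\EE[N_i^\theta] = \EE[N_i^\theta](1-\sum_r\ell_i(r)x_r) + \theta\lambda_i$ pins down the mean, and uniqueness of the limiting distribution is delegated to irreducibility and aperiodicity of the chain. You instead compute the probability generating function iteration $G_{t+1}(z) = G_t(q_i + (1-q_i)z)e^{\theta\lambda_i(z-1)}$ in closed form and take $t\to\infty$, which buys you something the paper's argument only gets indirectly: explicit convergence in distribution to $\mathrm{Pois}(\theta\lambda_i/q_i)$ from an \emph{arbitrary} initial condition, with uniqueness falling out as a corollary rather than being imported from general Markov chain theory. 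On the final step, the two proofs also diverge: the paper verifies integrability of the profit function against the stationary law (bounding $\rev(n/\theta)$ by a tangent line via concavity) and then invokes the ergodic identification of $v_\theta(\varphi)$ with the stationary expectation, whereas you justify the interchange $\EE[g(N^\theta(t))]\to\EE[g(N^\theta)]$ directly through uniform integrability from uniformly bounded second moments (both moment recursions being contractions since $q_i>0$), followed by Ces\`aro averaging. Your treatment of this interchange is, if anything, the more careful of the two, since the definition of $v_\theta(\varphi)$ involves a limit of expected time-averages from a fixed initial state rather than a stationary expectation. Both proofs rely on the same essential structural facts --- independence across types and the reduction of departures to i.i.d.\ Bernoulli$(q_i)$ thinning --- so the difference is one of technique, not of substance.
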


\begin{proof}
Since the Markov chain governing $\scaledrandomstate_i(t)$ is irreducible and aperiodic, if a stationary distribution exists, it is the unique limiting distribution. Moreover, by Poisson thinning, $\scaledrandomstate_i$ is Poisson-distributed.
Since the Poisson distribution is uniquely defined by its mean, a stationary distribution exists if and only if the expected number of agents entering the system in each period is equal to the expected number of agents departing from the system, or equivalently 
\begin{align*}
\EE\left[\scaledrandomstate_i\right] = \EE\left[\scaledrandomstate_i\right]\left(1-\sum_r \ell_i(r)x_r\right) + \theta\lambda_i
\implies \EE\left[\scaledrandomstate_i\right] = \frac{\theta\lambda_i}{\sum_r \ell_i(r)x_r} < \infty
\end{align*}
since $\ell_i(r) > 0$ for all $r \in \rewardset$.
Thus, we obtain $\scaledrandomstate\sim\text{Pois}(\sum_i\frac{\theta\lambda_i}{\sum_r \ell_i(r) x_r})$. 

To prove that $v_\theta(\varphi) = \EE\left[\rev\left(\frac{\scaledrandomstate}{\theta}\right) - \frac{\scaledrandomstate}{\theta}\left(\sum_r r x_r \right)\right]$, it suffices to show that $\widehat{\Pi}(\scaledrandomstate) := \rev\left(\frac{\scaledrandomstate}{\theta}\right) - \frac{\scaledrandomstate}{\theta}\left(\sum_r r x_r \right)$ is integrable with respect to $\pi^\theta(\varphi)$. Let $\pi^\theta_n = \PP(\scaledrandomstate = n).$ We have:
\begin{align}\label{eq:integrable-step1}
    \sum_{n=0}^{\infty} \bigg{|} \rev\left(\frac{n}{\theta}\right) - \frac{n}{\theta}\left(\sum_r rx_r\right)\bigg{|} \, \pi^\theta_n < \sum_{n=0}^{\infty} \rev\left(\frac{n}{\theta}\right)\pi^\theta_n + \frac1\theta\left(\sum_r rx_r\right)\sum_{n=0}^{\infty} n \pi^\theta_n
\end{align}
Let $c \geq 0$ by any constant such that $\rev'(c)$ exists. By concavity of $\rev$, $\rev\left(\frac{n}{\theta}\right) \leq \rev(c) + \rev'(c)(\frac{n}{\theta}-c)$. Thus, 
\begin{align*}
    \sum_{n=0}^{\infty} \rev\left(\frac{n}{\theta}\right)\pi^\theta_n \leq  \sum_{n=0}^{\infty} \left(\rev(c) + \rev'(c)\left(\frac{n}{\theta}-c\right)\right)\pi^\theta_n = \rev(c)-cR'(c) + \frac1\theta\rev'(c)\sum_{n=0}^{\infty} n\pi^\theta_n
\end{align*}
Combining this with~\eqref{eq:integrable-step1}, we obtain:
\begin{align*}
    \sum_{n=0}^{\infty} \bigg{|} \rev(n) - n\left(\sum_r rx_r\right)\bigg{|} \, \pi^\theta_n &< \rev(c) - cR'(c) + \frac1\theta\left(\rev'(c) + \sum_r rx_r\right) \sum_{n=0}^{\infty}n\pi^\theta_n \\
    & = \rev(c) - cR'(c) + \frac1\theta\left(\rev'(c) + \sum_r rx_r\right)\EE\left[\scaledrandomstate\right] \\
    &= \rev(c) - cR'(c) + \left(\rev'(c) + \sum_r rx_r\right)\left(\sum_i \frac{\lambda_i}{\sum_r \ell_i(r)x_r}\right) 
    < \infty.
\end{align*}\hfill\Halmos
\end{proof}


\begin{proof}[Proof of Proposition~\ref{prop:fluid-ub-for-all-theta}.]
By \cref{prop:limiting-dist},
\begin{align}
\scaledv(\varphi) = \EE\left[\rev\left(\frac{\scaledrandomstate}{\theta}\right)\right] - \left(\sum_r rx_r\right)\EE\left[\frac{\scaledrandomstate}{\theta}\right] &\leq \rev\left(\EE\left[\frac{\scaledrandomstate}{\theta}\right]\right)- \left(\sum_r rx_r\right)\EE\left[\frac{\scaledrandomstate}{\theta}\right] \label{eq:jensens2}\\
&= \rev\left(\sum_i \frac{\lambda_i}{\sum_r \ell_i(r) x_r}\right)- \left(\sum_r rx_r\right)\sum_i \frac{\lambda_i}{\sum_r \ell_i(r) x_r} \label{eq:poisson-steady-state2} \\
&\leq \fluidopt \label{eq:opt-def2}
\end{align}
where~\eqref{eq:jensens2} follows from concavity of $\rev$,~\eqref{eq:poisson-steady-state2} from \cref{prop:limiting-dist}, and~\eqref{eq:opt-def2} by taking the supremum over all $\xvec \in \simplex^{|\rewardset|}$. Since the upper bound holds for arbitrary $\varphi \in \Phi$, $v^\star_\theta = \sup_{\varphi \in \Phi} \scaledv(\varphi) \leq \optfluidprofit$.
\hfill\Halmos
\end{proof}}

\subsection{Proof of Theorem~\ref{thm:asymptotic-result}}
\begin{proof}
Let $\randomstate^\star$ be the random variable denoting the steady-state number of agents in the system, induced by policy $\varphi^\star$. For ease of notation, let $\Lambda^\star = \sum_i \frac{\lambda_i}{\sum_r \ell_i(r)x_r^\star}$. By \cref{prop:limiting-dist}, $\randomstate^\star \sim$ Poisson$(\theta\Lambda^\star)$.

We first claim that it suffices to establish that 
$\EE\left[\rev\left(\frac{\randomstate^\star}{\theta}\right)\right] \geq \rev\left(\frac{\EEn}{\theta}\right)-\frac{C}{{\theta}},$
since 
\begin{align*}
    \scaledv(\varphi^\star)-\optfluidprofit &= \left(\EE\left[\rev\left(\frac{\randomstate^\star}{\theta}\right)\right]-\left(\sum_r r x_r^\star\right)\frac{\EEn}{\theta}\right) - \left(\rev\left(\frac{\EEn}{\theta}\right)-\left(\sum_r r x_r^\star\right)\frac{\EEn}{\theta}\right) \\
    &= \EE\left[\rev\left(\frac{\randomstate^\star}{\theta}\right)\right]-\rev\left(\frac{\EEn}{\theta}\right).
\end{align*}


 Consider the following three events:
\begin{align*}
 E_1 = \left\{\randomstate^\star \geq \EEn - {c_0}\log\theta\sqrt{\EEn},\,\randomstate^\star \geq 1\right\} \\
E_2 = \left\{ \randomstate^\star < \EEn - {c_0}\log\theta\sqrt{\EEn},\,\randomstate^\star \geq 1 \right\} \\
E_3 = \left\{\randomstate^\star = 0\right\}
\end{align*}
where $c_0 = \frac{\Lambdalb-\varepsilon}{\sqrt{\Lambdalb}}\frac{e}{2}$, for some constant $\varepsilon \in (0,\Lambdalb)$. 

We first decompose the expected revenue under $\varphi^\star$ according to $E_1, E_2, E_3$:
\begin{align}
    \EE\left[\rev\left(\frac{\randomstate^\star}{\theta}\right)\right] &= \EE\left[\rev\left(\frac{\randomstate^\star}{\theta}\right)\bigg{|} E_1\right]\PP(E_1) + \EE\left[\rev\left(\frac{\randomstate^\star}{\theta}\right)\bigg{|} E_2\right]\PP(E_2) + \EE\left[\rev\left(\frac{\randomstate^\star}{\theta}\right)\bigg{|} E_3\right]\PP(E_3) \notag \\
    & \geq \EE\left[\rev\left(\frac{\randomstate^\star}{\theta}\right)\bigg{|} E_1\right]\PP(E_1) + \EE\left[\rev\left(\frac{\randomstate^\star}{\theta}\right)\bigg{|} E_2\right]\PP(E_2). \label{eq:ignore-e3} 
\end{align}
where~\eqref{eq:ignore-e3} follows from non-negativity of $\rev$.

For $i \in \{1,2\}$, let $I_i$ denote the interval corresponding to $E_i$. We leverage the fact that {$\rev$ is twice-continuously differentiable over $\mathbb{R}_{> 0}$}, and apply Taylor's theorem to bound $\rev\left(\frac{\randomstate^\star}{\theta}\right),$ for {$\randomstate^\star \in I_i$, $i \in \{1,2\}$}:
\begin{align}\label{eq:taylor}
\rev\left(\frac{\randomstate^\star}{\theta}\right) = \rev\left(\frac{\EEn}{\theta}\right) + \rev'\left(\frac{\EEn}{\theta}\right)\left(\frac{\randomstate^\star-\EEn}{\theta}\right) + \frac12 \rev''(\eta_i)\left(\frac{\randomstate^\star-\EEn}{\theta}\right)^2
\end{align}
for some {$\eta_i$ between $\randomstate^\star/\theta$ and $\EEn/\theta$, i.e.,} $\eta_i \in \left[\frac1\theta\min\left\{\randomstate^\star,{\EEn}\right\},\frac1\theta\max\left\{{\randomstate^\star},{\EEn}\right\}\right]$.

Plugging the above into~\eqref{eq:ignore-e3}, we obtain:
\begin{align}\label{eq:expected-taylor}
    \EE\left[\rev\left(\frac{\randomstate^\star}{\theta}\right)\right] &\geq \rev\left(\frac{\EEn}{\theta}\right) \left(\PP\left(E_1\right)+\PP\left(E_2\right)\right) + \rev'\left(\frac{\EEn}{\theta}\right)\sum_{i\in\{1,2\}}\EE\left[\frac{{\randomstate^\star-\EEn}}{\theta}\,\bigg{|}\,E_i\right]\PP\left(E_i\right) \notag \\
    &\qquad + \frac{1}{2}\sum_{i\in\{1,2\}}\rev''(\eta_i) \EE\left[\frac{\left(\randomstate^\star-\EEn\right)^2}{\theta^2}\,\bigg{|}\,E_i\right]\PP(E_i) \notag \\
    &= \rev\left(\frac{\EEn}{\theta}\right) \left(1-\PP(E_3)\right) + \rev'\left(\frac{\EEn}{\theta}\right)\sum_{i\in\{1,2\}}\EE\left[\frac{{\randomstate^\star-\EEn}}{\theta}\,\bigg{|}\,E_i\right]\PP\left(E_i\right) \notag \\
    &\qquad + \frac{1}{2}\sum_{i\in\{1,2\}}\rev''(\eta_i) \EE\left[\frac{\left(\randomstate^\star-\EEn\right)^2}{\theta^2}\,\bigg{|}\,E_i\right]\PP(E_i)
\end{align}

Lemmas~\ref{lem:step2-decomposition}-\ref{lem:prob-e3} allow us to complete the {loss} analysis. We defer their proofs to Appendix~\ref{app:asymp-thm-lemmas}.

\begin{lemma}\label{lem:step2-decomposition}
$\rev'\left(\frac{\EEn}{\theta}\right)\sum_{i\in\{1,2\}}\EE\left[\frac{{\randomstate^\star-\EEn}}{\theta}\,\bigg{|}\,E_i\right]\PP\left(E_i\right) \geq 0 \quad \forall \, \theta.$
\end{lemma}

\begin{lemma}\label{lem:e1-analysis}
There exists a constant $c_1 > 0$ such that, conditioned on $E_1$, $$\rev''(x) \EE\left[\frac{\left(\randomstate^\star-\EEn\right)^2}{\theta^2}\,\bigg{|}\,E_1\right]\PP(E_1) \geq -\frac{c_1}{\theta} \qquad \forall \, x\in \left[\frac1\theta\min\left\{\randomstate^\star,{\EEn}\right\},\frac1\theta\max\left\{{\randomstate^\star},{\EEn}\right\}\right].$$
\end{lemma}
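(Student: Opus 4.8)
The plan is to split the product into its two factors and bound each separately. Because $\rev$ is concave we have $\rev''\le0$, so the quantity in question is non-positive and it suffices to bound its magnitude by $c_1/\theta$. I would first control the second-moment factor using only the unconditional variance of the Poisson steady state: since $\randomstate^\star\sim\mathrm{Pois}(\theta\Lambda^\star)$ with $\EEn=\theta\Lambda^\star$ and $\mathrm{Var}(\randomstate^\star)=\theta\Lambda^\star$,
\[
\EE\!\left[\frac{(\randomstate^\star-\EEn)^2}{\theta^2}\,\Big|\,E_1\right]\PP(E_1)
=\frac{1}{\theta^2}\,\EE\!\left[(\randomstate^\star-\EEn)^2\mathbbm{1}_{E_1}\right]
\le\frac{\mathrm{Var}(\randomstate^\star)}{\theta^2}
=\frac{\Lambda^\star}{\theta}\le\frac{\Lambdaub}{\theta},
\]
where $\Lambda^\star\le\Lambdaub$ because $\sum_r\ell_i(r)x_r^\star\ge\ell_i(\rmax)$ by monotonicity of $\ell_i$. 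This already extracts the target $1/\theta$ rate, so all that remains is a $\theta$-independent bound on $|\rev''(x)|$ over the relevant range of $x$.

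The crux is that conditioning on $E_1$ keeps the argument $x$ bounded away from the origin, which is precisely the region where the hypothesis only permits the weak control $\rev''(x)\ge-\theta^\alpha$. Concretely, on $E_1$ the lower endpoint of the interval satisfies
\[
\frac{1}{\theta}\min\{\randomstate^\star,\EEn\}\ge\frac{1}{\theta}\bigl(\EEn-c_0\log\theta\sqrt{\EEn}\bigr)
=\Lambda^\star-c_0\log\theta\sqrt{\tfrac{\Lambda^\star}{\theta}}\ \ge\ \frac{\Lambda^\star}{2}\ \ge\ \frac{\Lambdalb}{2},
\]
for all $\theta$ large enough, since $\log\theta/\sqrt{\theta}\to0$. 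Hence every $x$ in the interval lies in $[\Lambdalb/2,\infty)$, uniformly bounded away from $0$. On the compact sub-interval $[\Lambdalb/2,\Lambdaub]$ the continuous function $\rev''$ is bounded, say $|\rev''(x)|\le M$, and combining this with the moment bound gives $\rev''(x)\cdot\frac{\Lambdaub}{\theta}\ge-\frac{M\Lambdaub}{\theta}$, so one may take $c_1=M\Lambdaub$.

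The one place I expect genuine friction is the upper tail: when $\randomstate^\star\gg\EEn$ the upper endpoint $\frac1\theta\max\{\randomstate^\star,\EEn\}$ exceeds $\Lambdaub$, escaping the compact interval on which continuity gave the constant $M$. I would resolve this by splitting $E_1$ at $\randomstate^\star=\EEn$. On $\{\randomstate^\star\le\EEn\}\cap E_1$ the argument above applies verbatim, since then $x\le\EEn/\theta=\Lambda^\star\le\Lambdaub$. On $\{\randomstate^\star>\EEn\}$ I would avoid the second-order remainder altogether: by monotonicity of $\rev$ we have $\rev(\randomstate^\star/\theta)\ge\rev(\EEn/\theta)$, so this event contributes a \emph{non-negative} amount to the lower bound on $\EE[\rev(\randomstate^\star/\theta)]$ (its linear part being already accounted for by \cref{lem:step2-decomposition}) and can simply be dropped. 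Alternatively, for the concave non-decreasing revenue functions of interest (e.g.\ $x^\beta$ or $\log(1+x)$) $|\rev''|$ is in fact bounded on all of $[\Lambdalb/2,\infty)$, since its curvature decays at infinity; either route closes the argument with a constant $c_1$ independent of $\theta$.
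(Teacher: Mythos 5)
Your proposal follows essentially the paper's own argument: the paper likewise bounds the conditional second moment by the unconditional Poisson variance to obtain $\Lambdaub/\theta$, and likewise argues that conditioning on $E_1$ keeps the Taylor point bounded away from the origin so that $\rev''$ can be bounded there by a constant. Two remarks. First, where you invoke ``$\theta$ large enough,'' the paper's specific choice $c_0=\frac{\Lambdalb-\varepsilon}{\sqrt{\Lambdalb}}\frac{e}{2}$ combined with $\log\theta/\sqrt{\theta}\le 2/e$ makes the lower endpoint at least $\varepsilon$ for \emph{all} $\theta$; this difference is cosmetic. Second, your concern about the upper tail is legitimate --- the paper simply asserts that twice-continuous differentiability yields $\rev''(x)\ge-c_1'$ on all of $[\varepsilon,+\infty)$, which is precisely the boundedness-at-infinity you identify and which does hold for the stated examples --- but your first proposed repair does not work. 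If on $E_1\cap\{\randomstate^\star>\EEn\}$ you replace the Taylor expansion by $\rev(\randomstate^\star/\theta)\ge\rev(\EEn/\theta)$, you forfeit the linear term $\rev'\left(\frac{\EEn}{\theta}\right)\EE\left[\frac{\randomstate^\star-\EEn}{\theta}\mathbbm{1}_{\{\randomstate^\star>\EEn\}}\right]$ on that sub-event; this term is positive and of order $\Theta(1/\sqrt{\theta})$, since $\EE\left[(\randomstate^\star-\EEn)^+\right]=\Theta(\sqrt{\theta})$ for a Poisson random variable. The accounting in \cref{lem:step2-decomposition} relies on the linear terms over all of $E_1\cup E_2$ summing to $\EEn\,\PP(E_3)\ge 0$; once the sub-event $\{\randomstate^\star>\EEn\}$ is excised, the remaining linear contribution over $E_1\cap\{\randomstate^\star\le\EEn\}$ and $E_2$ equals $\EEn\,\PP(E_3)-\EE\left[(\randomstate^\star-\EEn)^+\right]=-\Theta(\sqrt{\theta})$, which after multiplication by $\rev'(\EEn/\theta)/\theta$ degrades the guarantee to $O(1/\sqrt{\theta})$ and defeats the purpose of the lemma. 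Your second repair --- observing that $|\rev''|$ is bounded on $[\Lambdalb/2,+\infty)$ for the concave, non-decreasing revenue functions of interest --- is the right one, and is what the paper tacitly assumes.
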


\begin{lemma}\label{lem:e2-analysis}
There exists a constant $c_2 > 0$ such that, conditioned on $E_2$, $$\rev''(x) \EE\left[\frac{\left(\randomstate^\star-\EEn\right)^2}{\theta^2}\,\bigg{|}\,E_2\right]\PP(E_2) \geq -\frac{c_2}{\theta} \qquad \forall \, x\in \left[\frac1\theta\min\left\{\randomstate^\star,{\EEn}\right\},\frac1\theta\max\left\{{\randomstate^\star},{\EEn}\right\}\right].$$
\end{lemma}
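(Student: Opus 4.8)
The plan is to exploit the fact that $E_2$ is a large-deviation event whose probability decays faster than any polynomial in $\theta$; this super-polynomial decay is exactly what lets us absorb both the relaxed curvature bound $\rev''\geq-\theta^{\alpha}$ and the $\Theta(\theta^{2})$ worst-case size of $(\randomstate^\star-\EEn)^2$ on $E_2$. Throughout I use that $\randomstate^\star\sim\text{Pois}(\EEn)$ with $\EEn=\theta\Lambda^\star$ (from \cref{prop:limiting-dist}) and $\Lambda^\star\le\Lambdaub$.

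First I would pin down the domain of $x$. On $E_2$ we have $1\le\randomstate^\star<\EEn$, so the interval $\big[\tfrac1\theta\min\{\randomstate^\star,\EEn\},\tfrac1\theta\max\{\randomstate^\star,\EEn\}\big]=[\randomstate^\star/\theta,\EEn/\theta]$ is contained in $[\tfrac1\theta,\Lambda^\star]\subseteq[\tfrac1\theta,\Lambdaub]$; note the requirement $\randomstate^\star\ge1$ in the definition of $E_2$ is precisely what places $x$ above $1/\theta$, inside the domain of the regularity hypothesis. By that hypothesis and continuity of $\rev''$ (extending the bound to the endpoint $\Lambdaub$), together with concavity of $\rev$, we have $\rev''(x)\in[-\theta^\alpha,0]$ on this interval. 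Since the conditional second-moment factor is nonnegative, multiplying the inequality $\rev''(x)\ge-\theta^\alpha$ through by it gives
$$\rev''(x)\,\EE\left[\frac{\left(\randomstate^\star - \EEn\right)^2}{\theta^2}\,\bigg{|}\,E_2\right]\PP(E_2) \geq -\frac{\theta^\alpha}{\theta^2}\,\EE\left[\left(\randomstate^\star - \EEn\right)^2\mathbbm{1}_{E_2}\right],$$
so the task reduces to upper bounding the truncated second moment on the right.

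Next I would bound that truncated moment crudely and pay for it with the tail probability. On $E_2$, $0<\randomstate^\star<\EEn\le\theta\Lambdaub$, hence $(\randomstate^\star-\EEn)^2\le\EEn^2\le\theta^2\Lambdaub^2$, giving $\EE[(\randomstate^\star-\EEn)^2\mathbbm{1}_{E_2}]\le\theta^2\Lambdaub^2\,\PP(E_2)$. The crux is the tail bound on $\PP(E_2)$: the defining deviation $c_0\log\theta\sqrt{\EEn}$ equals $c_0\log\theta$ standard deviations of $\randomstate^\star$, so the standard Poisson lower-tail (Chernoff) bound $\PP(X\le\mu-a)\le\exp(-a^2/(2\mu))$ with $a=c_0\log\theta\sqrt{\EEn}$ and $\mu=\EEn$ yields
$$\PP(E_2) \le \exp\left(-\frac{c_0^2(\log\theta)^2}{2}\right) = \theta^{-\frac{c_0^2}{2}\log\theta},$$
which decays faster than any fixed power of $\theta$. (For small $\theta$ the event $E_2$ is empty, since the threshold $\EEn-c_0\log\theta\sqrt{\EEn}$ falls below $1$, so $\PP(E_2)=0$ and the bound is vacuous; the Chernoff estimate applies once $c_0\log\theta\le\sqrt{\EEn}$.)

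Combining the three displays gives, on $E_2$, the lower bound $-\Lambdaub^2\,\theta^\alpha\exp(-\tfrac{c_0^2}{2}(\log\theta)^2)=-\tfrac1\theta\,\Lambdaub^2 g(\theta)$ where $g(\theta):=\theta^{\alpha+1}\exp(-\tfrac{c_0^2}{2}(\log\theta)^2)$. Since $g$ is continuous on $[1,\infty)$ and $g(\theta)\to0$ as $\theta\to\infty$, the constant $c_2:=\Lambdaub^2\sup_{\theta\ge1}g(\theta)$ is finite and delivers the claim. The main obstacle — and the whole point of the construction — is recognizing that the $\log\theta$-scaling of the deviation threshold (rather than a fixed number of standard deviations) forces $\PP(E_2)$ below \emph{every} polynomial rate, so that this decay dominates the polynomial factors $\theta^{\alpha}$ coming from the relaxed curvature bound and $\theta^{2}$ coming from the worst-case magnitude of $(\randomstate^\star-\EEn)^2$; the precise value of $c_0$ is immaterial for this lemma, as any $c_0>0$ suffices.
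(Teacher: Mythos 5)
Your proof is correct and follows essentially the same route as the paper's: restrict $x$ to $[\tfrac1\theta,\Lambdaub]$ using $\randomstate^\star\in[1,\EEn)$ on $E_2$, bound the conditional second moment crudely by $\Lambdaub^2$, apply the Poisson lower-tail Chernoff bound to get $\PP(E_2)\le\theta^{-c_0^2\log\theta/2}$, and absorb the $\theta^{\alpha}$ curvature factor into the super-polynomial decay. The only cosmetic difference is that you take a supremum of $g(\theta)=\theta^{\alpha+1}e^{-c_0^2\log^2\theta/2}$ while the paper verifies $g(\theta)\le c_2'\theta^{-1}$ directly for $c_2'>e^{1+\alpha}$; these are interchangeable.
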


\begin{lemma}\label{lem:prob-e3}
$\PP(E_3) \leq \frac{1}{e\,\Lambdalb}\cdot\frac1\theta.$
\end{lemma}

Applying the above lemmas to~\eqref{eq:expected-taylor}, we obtain:
\begin{align*}
    \EE\left[\rev\left(\frac{\randomstate^\star}{\theta}\right)\right] &\geq \rev\left(\frac{\EEn}{\theta}\right)\left(1-\frac{1}{e\Lambdalb}\frac{1}{\theta}\right) - \frac12\left(\frac{c_1}{\theta} + \frac{c_2}{\theta}\right).
\end{align*}
Defining $C = \frac{\rev(\Lambdaub)}{e\Lambdalb} + \frac12({c_1+c_2})$, we obtain the theorem.
\hfill\Halmos\end{proof}

\subsubsection{Proofs of Theorem~\ref{thm:asymptotic-result} auxiliary lemmas}\label{app:asymp-thm-lemmas}

\begin{proof}[Proof of Lemma~\ref{lem:step2-decomposition}.]
We have:
\begin{align}\label{eq:step2-decomposition}
    0 = \EE\left[\randomstate^\star-\EEn\right] &= \sum_{i\in\{1,2\}}\EE\left[{\randomstate^\star-\EEn}\bigg{|}E_i\right]\PP\left(E_i\right)
     + \EE\left[{\randomstate^\star-\EEn}\bigg{|}E_3\right]\PP\left(E_3\right) \notag \\
    \implies  \sum_{i\in\{1,2\}}\EE\left[{\randomstate^\star-\EEn}\bigg{|}E_i\right]\PP\left(E_i\right) &= -\EE\left[{\randomstate^\star-\EEn}\bigg{|}E_3\right]\PP\left(E_3\right) = \EEn\PP(E_3) > 0.
\end{align} 

Combining~\eqref{eq:step2-decomposition} with the fact that $\rev'(x) \geq 0$ for all $x$, we obtain the lemma. 
\hfill\Halmos\end{proof}

\begin{proof}[Proof of Lemma~\ref{lem:e1-analysis}.]
For ease of notation, we define $$\mathcal{I}(\randomstate^\star,\theta) := \left[\frac1\theta\min\left\{\randomstate^\star,{\EEn}\right\},\frac1\theta\max\left\{{\randomstate^\star},{\EEn}\right\}\right].$$ Since, conditioned on $E_1$, $\randomstate^\star \geq \theta\Lambda^\star-c_0\log\theta\sqrt{\EEn}$, we have that  $\mathcal{I}(\randomstate^\star,\theta) \subseteq \left[\frac{\theta\Lambda^\star-{c_0}\log\theta\sqrt{\EEn}}{\theta},+\infty\right).$ 

{By assumption, $c_0 \leq \frac{\Lambdalb-\varepsilon}{\sqrt{\Lambdalb}}\frac{e}{2}$, therefore:
\begin{align}
    \frac{\theta\Lambda^\star-{c_0}\log\theta\sqrt{\EEn}}{\theta} = \Lambda^\star-c_0\sqrt{\Lambda^\star}\frac{\log\theta}{\sqrt\theta} \geq \Lambda^\star - \left(\frac{\Lambdalb-\varepsilon}{\sqrt{\Lambdalb}}\frac{e}{2}\right)\sqrt{\Lambda^\star}\frac{\log\theta}{\sqrt\theta} &\geq \Lambda^\star - \left(\frac{\Lambdalb-\varepsilon}{\sqrt{\Lambdalb}}\frac{e}{2}\right)\sqrt{\Lambda^\star}\cdot\frac2e \label{eq:ub-on-logxsqrtx} \\
    &\geq\Lambda^\star - \left(\frac{\Lambdalb-\varepsilon}{\sqrt{\Lambdalb}}\right)\sqrt{\Lambdaub} \geq \varepsilon\label{eq:ub2}
\end{align}
where~\eqref{eq:ub-on-logxsqrtx} follows from the upper bound $\frac{\log\theta}{\sqrt{\theta}} \leq \frac2e$ for all $\theta$, and ~\eqref{eq:ub2} follows from $\Lambda^\star \leq \Lambdaub$, and $\Lambda^\star \geq \Lambdalb$.
}


Thus, $\mathcal{I}(\randomstate^\star,\theta) \subseteq \left[\varepsilon,+\infty\right)$, for some constant $\varepsilon$. Since $\rev$ is twice-continuously differentiable over $\mathbb{R}_{>0}$, there exists $c_1' > 0$ such that $\rev''(x) \geq - c_1'$ for all $x \in \left[\varepsilon,+\infty\right)$, and consequently for all $x \in \mathcal{I}(\randomstate^\star,\theta)$.

We next bound $\EE\left[\frac{\left(\randomstate^\star-\EEn\right)^2}{\theta^2}\right]\PP(E_1).$
\begin{align*}
     &\text{Var}(\randomstate^\star)= \EE\left[(\randomstate^\star-\EEn)^2\right] = \sum_{i=1}^3 \EE\left[(\randomstate^\star-\EEn)^2\bigg{|}E_i\right]\PP(E_i) \\
     \implies &\EE\left[(\randomstate^\star-\EEn)^2\bigg{|}E_1\right]\PP(E_1)\leq \text{Var}(\randomstate^\star) = \theta\Lambda^\star \leq \theta\Lambdaub 
     \implies \EE\left[\frac{\left(\randomstate^\star-\EEn\right)^2}{\theta^2}\bigg{|} E_1
     \right]\PP(E_1) \leq \frac{\Lambdaub}{\theta}.
\end{align*}

Putting this all together, we obtain: $
\rev''(x)\EE\left[\frac{\left(\randomstate^\star-\EEn\right)^2}{\theta^2}\,\bigg{|}\,E_1\right]\PP(E_1) \geq -\frac{c_1'\Lambdaub}{\theta} \quad \forall \, x \in \mathcal{I}(\randomstate^\star,\theta).$
\hfill\Halmos\end{proof}

\begin{proof}[Proof of Lemma~\ref{lem:e2-analysis}.]
By the Poisson
tail bound~\citep{canonne2017short}, 
$\PP(E_2) \leq \exp\left\{-\frac{c_0^2{\log^2\theta}\,\EEn}{2\EEn}\right\} = \theta^{-c_0^2\log\theta/2}.$ Conditioned on $E_2$, we have $\randomstate^\star \in [1,\EEn)$. Thus:
\begin{align*}
&(\randomstate^\star-\EEn)^2 \leq (1-\theta\Lambda^\star)^2 \leq \left(\theta\Lambda^\star\right)^2 \leq \theta^2\Lambdaub^2
\implies \EE\left[\frac{(\randomstate^\star-\EEn)^2}{\theta^2}\,\bigg{|}\,E_2\right] \leq \Lambdaub^2.
\end{align*}

Moreover, {by assumption there exists $\alpha > 0$ such that $\rev''(x) \geq -\theta^{\alpha}$ for all $x \in \left[\frac1\theta,\Lambdaub\right]$. Thus 
\begin{align}
    \rev''(x) \EE\left[\frac{\left(\randomstate^\star-\EEn\right)^2}{\theta^2}\,\bigg{|}\,E_2\right]\PP(E_2) \geq -\Lambdaub^2 \theta^{-c_0^2\log\theta/2}\cdot \theta^{\alpha} \quad \forall\, x \in  \left[\frac1\theta,\Lambdaub\right].
\end{align}

Let $g(\theta) = \theta^{-c_0^2\log\theta/2 + \alpha}.$ We have that $g(\theta) \leq c_2' \theta^{-1}$ for some constant $c_2' > 0$ if and only if $\theta^{-c_0^2\log\theta/2+\alpha+1} \leq c_2' \iff -c_0^2\log\theta/2+\alpha+1 \leq \log c_2' \iff c_0^2 \geq \frac{2(1+\alpha-\log c_2')}{\log\theta}$. For {$c_2' > e^{1+\alpha}$}, this holds for all $c_0 > 0$, $\theta > 1$.}

We conclude the proof of the lemma by observing that, conditioned on $E_2$, $$\left[\frac1\theta\min\left\{\randomstate^\star,{\EEn}\right\},\frac1\theta\max\left\{{\randomstate^\star},{\EEn}\right\}\right] = \left[\frac{\randomstate^\star}{\theta},\frac{\EEn}{\theta}\right]\subseteq \left[\frac1\theta,\Lambda^\star\right] \subseteq\left[\frac1\theta,\Lambdaub\right].$$  Thus, the bound holds for all $x \in \left[\frac1\theta\min\left\{\randomstate^\star,{\EEn}\right\},\frac1\theta\max\left\{{\randomstate^\star},{\EEn}\right\}\right]$.
\hfill\Halmos\end{proof}

\begin{proof}[Proof of Lemma~\ref{lem:prob-e3}.]
By definition,
$\PP(E_3) = \PP(\randomstate^\star = 0) = e^{-\theta\Lambda^\star} \leq \frac1e \cdot \frac{1}{\theta\Lambda^\star} \leq \frac{1}{e\,\Lambdalb}\cdot\frac1\theta.$
\hfill\Halmos\end{proof}

\chedit{
\subsection{Performance of static policies in small-market regimes}\label{app:small_market_numerics}

Our results establish fast convergence of the fluid heuristic to the fluid upper bound in a large-market regime (i.e., as $\theta\to\infty$). While such regimes are meaningful within the context of, e.g., CreditKarma, where the customer base is on the order of millions, a natural question is whether one can bridge this gap when $\theta$ is small. In this section we explore the gap between the fluid optimum and the performance of the fluid heuristic in small- and moderate-sized markets. 

To do so, we first consider an instance where $\Xi = \{15,40,60\}$, and the revenue function is newsvendor-like, i.e. $R(N) = 100\min\{N,5\}$. We assume $K = 1$, with $\lambda_1 = 1$ and $\ell_1(r) = \begin{cases}
0.8 \quad \text{if } r = 15 \\
0.5 \quad \text{if } r = 40 \\
0.1 \quad \text{if } r = 60
\end{cases}.
$ In addition to the fluid heuristic, for $\theta \in \{1,5,10,\ldots,100\}$, we enumerate over all static policies via grid search to identify the optimal static policy for the {\it stochastic} system, i.e., the distribution that maximizes $$\lim_{T\to\infty}\frac1T\sum_{t=1}^T\mathbb{E}\left[R\left(\frac{N^{\theta}(t)}{\theta}\right)-\left(\sum_r rx_r\right)\cdot\frac{N^\theta(t)}{\theta}\right].$$
\cref{fig:small-market} compares the performance of these two policies to the fluid optimum.

\begin{figure}[h]
     \centering{\includegraphics[scale=0.6]{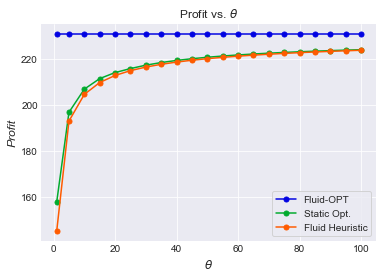}}
    \caption{Fluid heuristic and static optimum performance versus fluid upper bound in small-$\theta$ regimes}
    \label{fig:small-market}
\end{figure}

For $\theta = 1$, we observe a significant gap (at least 30\% relative difference) between the fluid upper bound, and both the static optimum and fluid heuristic. Moreover, there is approximately an 8\% relative difference between the static optimum and the fluid heuristic. These observations lead us to two conclusions: $(i)$ for truly small-market regimes, there may indeed be room to improve upon static policies, and $(ii)$ within the space of static policies, it is important to take into account the variability of the underlying arrival distribution. Indeed, while the fluid heuristic ensures that 5 agents are in the system in expectation, the static optimum for small $\theta$ induces slightly less than 4 agents in the system in expectation. Essentially, the optimal static policy aims to avoid paying a premium for having $>5$ agents that do not increase the revenue. 

This difference is all but erased for $\theta = 5$, with about a 2\% relative gap between the static optimum and the fluid heuristic due to Poisson concentration. Still, a significant gap remains between these two policies and the fluid upper bound. This gap also shrinks for moderate $\theta$ (e.g., 3\% relative difference at $\theta = 100$).

{We next investigate the convergence under two strictly concave revenue functions. \cref{fig:strict-concave} illustrates our results for $R(N) = 100\min\{N^{1-\epsilon},5N^{\epsilon}\}$, for $\epsilon \in \{0.1,0.2\}$. Here we find that for small values of $\epsilon$, when $\theta$ is small, the gap between the optimal static policy and the fluid heuristic is even larger than in the case of Figure \ref{fig:small-market}. However, we observe significantly faster convergence to the fluid optimum in these settings. Indeed, for $\epsilon = 0.2$, while this gap is negligible as of $\theta = 5$, with the static optimum and the fluid heuristic achieving identical performance for all values of $\theta$.

\begin{figure}
     \centering
     \subfloat[$\epsilon=0.1$
     \label{fig:small-theta-0.1}]{\includegraphics[width = 0.5\textwidth]{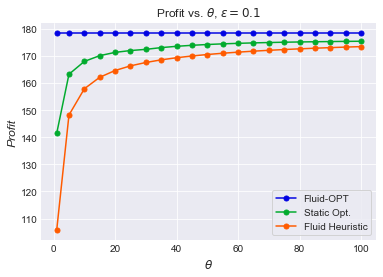}}
          \subfloat[$\epsilon = 0.2$
     \label{fig:small-theta-0.2}]{\includegraphics[width = 0.5\textwidth]{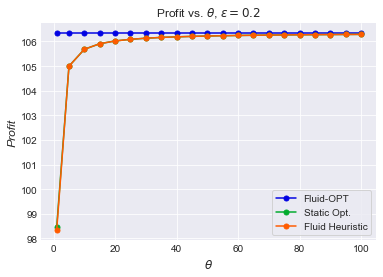}}
    \caption{$\rev(N) = 100\min\{N^{1-\epsilon},5N^{\epsilon}\}$.}
    \label{fig:strict-concave} 
\end{figure}

}
Taken together, our numerical findings provide additional evidence for the appropriateness of our asymptotic analysis. Despite the fluid heuristic not always giving us the optimal static policy, we find that even in the worst cases, at $\theta\approx 100$ the fluid system becomes a good approximation of the stochastic model.
}

\chedit{
\section{Special cases}\label{apx:special-cases}

\subsection{On the convexity of the departure function}

\subsubsection{Proof of \cref{thm:main-theorem2}}
\begin{proof}
The proof uses a similar strategy as that of Theorem~\ref{thm:main-theorem}. In particular, we rely on the following lemma, analogous to Lemma~\ref{prop:profit-reduction}. We leave the almost identical proof of Lemma~\ref{lem:dispersion-reduction} to the reader.
\begin{lemma}\label{lem:dispersion-reduction}
Suppose that, for all instances of~\ref{eq:supply-opt}, there exists an optimal solution $\nf{\supplyoptsol}$ to~\ref{eq:supply-opt} which is minimally (resp. maximally) dispersed. Then, there exists an optimal solution $\nf{\bx}^\star$ to~\ref{eq:fluid-opt} which is minimally (resp. maximally) dispersed. 
\end{lemma}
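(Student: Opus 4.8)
The plan is to reproduce the proof of Lemma~\ref{prop:profit-reduction} essentially verbatim, substituting the dispersion property for the support-size property. The argument there never uses the specific structure of the property enjoyed by the \ref{eq:supply-opt} optimum; it only uses that an optimal \ref{eq:fluid-opt} solution can be matched to an optimal \ref{eq:supply-opt} instance via a budget-matching construction, and that the resulting \ref{eq:supply-opt} optimum is itself optimal for \ref{eq:fluid-opt}. Since ``minimally dispersed'' and ``maximally dispersed'' are, like the support bound, properties that transfer unchanged when a distribution is reinterpreted as a solution to the other program, the same construction applies.

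Concretely, I would fix an optimal solution $\bx^\star$ to \ref{eq:fluid-opt}, and recall the minimum-cost function $C(\fluidn(\bx^\star)) = \min_{\bx \,:\, \fluidn(\bx) = \fluidn(\bx^\star)} \fluidn(\bx)\avgreward(\bx)$ together with the associated fluid profit $\widehat{\Pi}(\bx) = \rev(\fluidn(\bx)) - C(\fluidn(\bx))$. I would then consider the instance of \ref{eq:supply-opt} with budget $B = C(\fluidn(\bx^\star))$, and let $\supplyoptsol$ be an optimal solution to this instance that is minimally (resp.\ maximally) dispersed, whose existence is guaranteed by hypothesis.

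The remainder would be identical to the proof of Lemma~\ref{prop:profit-reduction}. Since $\bx^\star$ spends exactly $B$, it is feasible for this \ref{eq:supply-opt} instance, so $\fluidn(\supplyoptsol) \geq \fluidn(\bx^\star)$; monotonicity of $\rev$ then gives $\rev(\fluidn(\supplyoptsol)) \geq \rev(\fluidn(\bx^\star))$. Combining this with $C(\fluidn(\supplyoptsol)) \leq B = C(\fluidn(\bx^\star))$ (budget feasibility) yields
\begin{align*}
\widehat{\Pi}(\supplyoptsol) = \rev(\fluidn(\supplyoptsol)) - C(\fluidn(\supplyoptsol)) \geq \rev(\fluidn(\bx^\star)) - C(\fluidn(\bx^\star)) = \widehat{\Pi}(\bx^\star).
\end{align*}
Hence $\supplyoptsol$ is also optimal for \ref{eq:fluid-opt}, and it is minimally (resp.\ maximally) dispersed by construction, as desired.

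There is essentially no genuine obstacle here, which is precisely why the paper defers the proof: both the minimally- and maximally-dispersed cases run in exact parallel, since the budget-matching argument is agnostic to which dispersion property the \ref{eq:supply-opt} optimum satisfies. The only points I would verify are the bookkeeping ones --- that a minimally/maximally dispersed distribution has support of size at most two, so that $\fluidn(\supplyoptsol)$ and $C(\fluidn(\supplyoptsol))$ are well-defined, and that reinterpreting $\supplyoptsol$ as a \ref{eq:fluid-opt} solution preserves its dispersion --- together with the mild admissibility of the budget $B$ (the trivial case $\bx^\star = \unitvec_{\rmin}$ is already minimally dispersed and requires no argument).
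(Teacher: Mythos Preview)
Your proposal is correct and follows precisely the approach the paper intends: it explicitly states that Lemma~\ref{lem:dispersion-reduction} is analogous to Lemma~\ref{prop:profit-reduction} with an almost identical proof, and your argument reproduces that budget-matching construction verbatim with the dispersion property substituted for the support-size property.
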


Thus, it suffices to show that the solution to~\ref{eq:supply-opt} has the correct dispersion structure, depending on the convexity of $\{\ell_i\}$. We prove this for strictly {concave} departure probabilities, and leave the symmetric proof of the setting with strictly {convex} departure probabilities to the reader. 

Consider any non-maximally-dispersed feasible solution $\bx$ to~\ref{eq:supply-opt}. Let $r_2> r_3$ be such that $\supp(\bx) = \{r_2,r_3\}$, and let $x$ denote the weight placed on $r_2$. Consider reward $r_1 \in \rewardset$ such that $r_1 > r_2$.\footnote{A symmetric argument holds if $r_2= \rmax$ by assuming $r_1 < r_3$.} We construct a feasible solution $\by$ to~\ref{eq:supply-opt} such that $\supp(\by) = \{r_1, r_3\}$ which weakly improves upon $\bx$. Specifically, we define $\by$ to be such that $\avgreward(\by) = \avgreward(\bx)$, and let $y$ denote the weight placed on $r_1$. Solving for $y$, we have:
\begin{align*}
r_1 y + r_3(1-y) = r_2x + r_3 (1-x) \iff y = \frac{r_2-r_3}{r_1-r_3}x.
\end{align*}
Under $\by$, the expected departure probability of type $i$ agents{, $\hat{\ell}_i(\by)$,} is
\begin{align*}
\hat{\ell}_i(\by) &= \ell_{i}(r_1)\cdot \frac{r_2-r_3}{r_1-r_3}x + \ell_{i}(r_3)\left(1- \frac{r_2-r_3}{r_1-r_3}x\right) \\
&= \hat{\ell}_i(\bx) - \ell_{i}(r_2) x - \ell_{i}(r_3) \left(1-x\right) + \ell_{i}(r_1)\, \frac{r_2-r_3}{r_1-r_3}x + \ell_{i}(r_3)\left(1- \frac{r_2-r_3}{r_1-r_3}x\right) \\
&= \hat{\ell}_i(\bx) + x \left(-\ell_{i}(r_2) + \ell_{i}(r_1)\,\frac{r_2-r_3}{r_1-r_3}+\ell_{i}(r_3)\left(1-\frac{r_2-r_3}{r_1-r_3}\right) \right) \\
&< \hat{\ell}_i(\bx).
\end{align*}
where the inequality follows from strict concavity of $\ell_i$ (Definition~\ref{def:ccv}).

Thus, $\fluidn_i(\by) = \frac{\lambda_i}{\hat{\ell}_i(\by)} > \frac{\lambda_i}{\hat{\ell}_i(\bx)} = \fluidn_i(\bx)$ for all $i \in [K]$, and $\fluidn(\by) > \fluidn(\bx)$. As argued in the proof of Lemma~\ref{lem:support-reduction-maintains-invariants}, the budget constraint~\eqref{eq:budget-constraint} has been violated; we can however use the same redistribution procedure, continuously moving weight from $r_1$ to $r_3$ until $\by$ becomes feasible. This procedure strictly decreases $\avgreward(\by)$, thus strictly improving upon $\bx$.\hfill\Halmos
\end{proof}
}

\chedit{
\subsection{Linearized S-Shape}

\subsubsection{Proof of \cref{prop:noisy-opt}}

\begin{proof}
That $\bx^\star(\epsilon)$ randomizes between $\rmin$ and $v+\epsilon$ follows from the fact that, when $K = 1$, it is never optimal for the decision-maker to pay out a reward $r > v+\epsilon$. As a result, it is without loss of generality to consider the departure probability function on the domain $[\rmin, v+\epsilon]$, over which it is strictly concave. By \cref{thm:main-theorem2}, then, $\bx^\star(\epsilon)$ is maximally dispersed.

We moreover have that $x^\star(\epsilon) < 1$, as $\ell(v+\epsilon) = 0$ implies that deterministically paying out $v+\epsilon$ would induce infinitely many agents in the system. Since $\rev$ is strictly concave, this cannot be optimal.

We now prove the threshold structure. For fixed $\epsilon \leq \min \{v-\rmin,\rmax-v\}$, let $\widetilde{N}(\bx;\epsilon)$ denote the number of agents in the deterministic relaxation under reward distribution $\bx$ such that $\supp(\bx) = \{\rmin, v+\epsilon\}$. Moreover, let $x$ denote the weight placed on $v+\epsilon$ under $\bx$. We have:
\begin{align*}
    \widetilde{N}(\bx; \epsilon) &= \frac{\lambda}{\widehat{\ell}(\bx)} = \frac{\lambda}{\ell(v+\epsilon)x + \ell(\rmin)\cdot (1-x)} = \frac{\lambda}{1-x},
\end{align*}
since $\ell(v+\epsilon) = 0$ and $\ell(\rmin) = 1$. Thus, the profit $\widetilde{\Pi}(\bx; \epsilon)$ induced by $\bx$ is given by:
\begin{align*}
&\widetilde{\Pi}(\bx;\epsilon) = \rev\left(\frac{\lambda}{1-x}\right) - \left((v+\epsilon)x + \rmin(1-x)\right)\frac{\lambda}{1-x} \\
\implies &\frac{\partial\widetilde{\Pi}}{\partial x} = \frac{\lambda}{(1-x)^2}\left[\rev'\left(\frac{\lambda}{1-x}\right) -(v+\epsilon)\right] \quad \forall \, x \in [0,1).
\end{align*}
If $\rev'\left(\frac{\lambda}{1-x}\right) \leq v+\epsilon$ for all $x \in [0,1)$,  $\widetilde{\Pi}(\bx;\epsilon)$ is {decreasing} for all $x$, and setting $x = 0$ is optimal. Noting that $\rev'\left(\frac{\lambda}{1-x}\right)$ is {\it decreasing} in $x$, since $\frac{1}{1-x}$ is increasing in $x$ and $\rev$ is strictly concave, we have that $\rev'\left(\frac{\lambda}{1-x}\right) \leq \rev'(\lambda)$ for all $x \in [0,1)$. Thus, for all $\epsilon$ such that $\rev'(\lambda) \leq v+\epsilon$ (i.e., $\epsilon \geq \rev'(\lambda)-v$), $x^\star(\epsilon) = 0$. Note that if $\rev'(\lambda)-v\leq 0$, this holds for all $\epsilon > 0$, which proves the first part of the proposition.

Suppose now that $\rev'(\lambda) > v$. As before, if $\epsilon \geq \rev'(\lambda)-v =: \epsilon_0$, $x^\star(\epsilon) = 0$. If $\epsilon < \rev'(\lambda)-v$,
since $\rev'\left(\frac{\lambda}{1-x}\right)$ is continuously decreasing in $x$, there must be $x < 1$ such that $\epsilon = \rev'\left(\frac{\lambda}{1-x}\right)-v$, i.e., the first-order condition is satisfied. (This also follows from the fact that $\lim_{x\to 1}\rev'\left(\frac{\lambda}{1-x}\right) = 0$, again since $\rev$ is strictly concave and increasing.)

To see that $x^\star(\epsilon)$ is decreasing for $\epsilon < \rev'(\lambda)-v$, note that the solution to $\rev'\left(\frac{\lambda}{1-x}\right) = v+\epsilon$ decreases with $\epsilon$, again since $\rev'\left(\frac{\lambda}{1-x}\right)$ is continuously decreasing in $x$.
\hfill\Halmos\end{proof}

\medskip

\subsubsection{Proof of \cref{prop:noisy-profit}}
\begin{proof}
Recall,
\begin{align*}
    \widetilde{\Pi}(\bx^\star(\epsilon);\epsilon) &= \rev\left(\frac{\lambda}{1-x^\star(\epsilon)}\right) - \left((v+\epsilon)x^\star(\epsilon) +\rmin(1-x^\star(\epsilon))\right)\frac{\lambda}{1-x^\star(\epsilon)} \\ 
    &= \rev\left(\frac{\lambda}{1-x^\star(\epsilon)}\right) - \left(v+\epsilon-\rmin\right)\frac{\lambda x^\star(\epsilon)}{1-x^\star(\epsilon)} - \rmin \frac{\lambda}{1-x^\star(\epsilon)}
\end{align*}
For $\epsilon > \epsilon_0$, $x^\star(\epsilon) = 0$, by \cref{prop:noisy-opt}. Thus $ \widetilde{\Pi}(\bx^\star(\epsilon);\epsilon) = \rev(\lambda) - \rmin\lambda$, a constant.
For $\epsilon \leq \epsilon_0$, we have:
{
\begin{align*}
    \frac{d}{d\epsilon}\widetilde{\Pi}(\bx^\star(\epsilon);\epsilon) &= \frac{dx^\star(\epsilon)}{d\epsilon}  \frac{\lambda}{(1-x^\star(\epsilon))^2}\rev'\left(\frac{\lambda}{1-x^\star(\epsilon)}\right) - \lambda\frac{x^\star(\epsilon)}{1-x^\star(\epsilon)} \\
    &\qquad-(v+\epsilon-\rmin)\frac{\lambda}{(1-x^\star(\epsilon))^2}\frac{dx^\star(\epsilon)}{d\epsilon}-\rmin\frac{\lambda}{(1-x^\star(\epsilon))^2}\frac{dx^\star(\epsilon)}{d\epsilon} \\
    &=\frac{dx^\star(\epsilon)}{d\epsilon}\frac{\lambda}{(1-x^\star(\epsilon))^2}\left(\rev'\left(\frac{\lambda}{1-x^\star(\epsilon)}\right)-(v+\epsilon)\right) -\lambda\frac{x^\star(\epsilon)}{1-x^\star(\epsilon)}
\end{align*}
}
By \cref{prop:noisy-opt}, $\frac{dx^\star(\epsilon)}{d\epsilon} \leq 0$ for all $\epsilon$. Moreover, $\rev'\left(\frac{\lambda}{1-x^\star(\epsilon)}\right)-(v+\epsilon) = 0$ holds by the first-order condition, satisfied at $x^\star(\epsilon)$ for all $\epsilon \leq \epsilon_0$. Thus, $\frac{d}{d\epsilon}\widetilde{\Pi}(\bx^\star(\epsilon);\epsilon)\leq 0$ for all $\epsilon \leq \epsilon_0$. 
\hfill\Halmos\end{proof}

}

\chedit{
\section{Extension of \cref{thm:static-policies-are-opt-for-one-type} to model with two-period memory}\label{apx:memory-extension}

Consider now a setting in which agents make the decision to stay in or leave the system based on the two most recent rewards received. For a type $i$ agent, let $\ell_i(r_1, r_2)$ denote the probability that she leaves the system having received a reward $r_1$ in the current period, and $r_2$ in the last period. We assume $\ell_i(r_{\max}, r_{\max}) > 0$. Since agents make their decision based on the last two rewards, we assume that they remain in the system in the first period they arrive, independent of the reward received. 

The decision-maker optimizes over the space of joint distributions of rewards for each two consecutive periods, i.e., $x_{r_1,r_2}(t, t-1) := \mathbb{P}\left(r(t) = r_1, r(t-1) = r_2\right)$, for all $r_1, r_2, t$, where we abuse notation and let $r(t)$ denote the reward paid out in period $t$. Given a sequence of reward distributions, the number of type $i$ agents in the deterministic system satisfies the following inductive relation:
\begin{align}\label{eq:inductive-memory}
    \fluidn_i(t+1) = \fluidn_i(t) + \lambda_i - (\fluidn_i(t)-\lambda_i)\sum_{r_1,r_2}\ell_i(r_1,r_2)x_{r_1,r_2}(t,t-1), \quad \forall\, t \geq 1
\end{align}

Our notion of group fairness, adapted to the setting with two-period memory, is as follows:
\begin{definition}[Two-period memory group-fair policy]\label{def:memory_fair_policy}
Policy $\varphi$ is \emph{group-fair} if, for all $\delta > 0$, there exists $\tau_0 \in \mathbb{N}^+$ such that for all $\tau > \tau_0$:
\begin{align}\label{eq:memory-fair-policy}
   \sum_{r_1,r_2} \bigg{\lVert}\frac{1}{\sum_{t=t'}^{t'+\tau}(\widetilde{N}^\varphi_i(t)-\lambda_i)}\sum_{t=t'}^{t'+\tau} (\widetilde{N}^\varphi_i(t)-\lambda_i)x_{r_1,r_2}(t,t-1) - \frac{1}{\sum_{t=t'}^{t'+\tau}(\widetilde{N}^\varphi_j(t)-\lambda_j)}\sum_{t=t'}^{t'+\tau} (\widetilde{N}^\varphi_j(t)-\lambda_j)x_{r_1,r_2}(t,t-1) \bigg{\rVert}_1\\ < \delta \notag   \forall \, t' \in \mathbb{N}^+, \, \forall \, i,j \in [K]. 
\end{align}
\end{definition}
Note that the group-fairness definition here is with respect to $\widetilde{N}^\varphi_i(t)-\lambda_i$, the number of agents who have been in the system for at least two periods. {One could similarly include a separate constraint for agents who just joined the system by considering the marginal distribution of rewards in each period; including such a constraint, however, trivially reduces the space of policies to static policies, given the fact that arrival rates are stationary.} 
The following theorem shows that the optimality of static policies within the space of all fair policies is robust to agent memory.
\begin{theorem}\label{thm:static-policies-are-opt-memory}
There exists an optimal fair policy that is static, i.e., {$\mathbb{P}(r(t-1)=r_1) = \mathbb{P}(r(t) = r_1)$ for all $t$.} 
\end{theorem} 

\begin{proof}
{We show that there exists an optimal fair policy such that $x_{r_1,r_2}(t,t-1) = x_{r_1,r_2}(t+1,t)$. Summing over all $r_2$ gives the claim.}

As for the memoryless setting, the proof is constructive. Fix an arbitrary fair policy, and consider the static policy that chooses an arbitrary type $i$, and is defined by reward distribution:
\[\widehat{x}_{r_1,r_2} = \lim_{T\to\infty}\frac{\sum_{t=1}^T (\fluidn_i(t)-\lambda_i)x_{r_1,r_2}(t,t-1)}{\sum_{t=1}^T(\fluidn_i(t)-\lambda_i)} \quad \forall \ r_1,r_2 \in \Xi.\]
Let $\widehat{N}_i(t)$ denote the number of type $i$ agents in the deterministic system under $\widehat{x}_{r_1,r_2},$ and define $\widehat{N}_i = \lim_{T\to\infty}\frac1T\sum_{t=1}^T\widehat{N}_i(t)$. It is easy to verify that
\[\widehat{N}_i = \lambda_i + \frac{\lambda_i}{\sum_{r_1,r_2}\ell_i(r_1,r_2)\widehat{x}_{r_1,r_2}}.\]

Finally, we abuse notation and let $x_r(t), \widehat{x}_r$ denote the marginal probability that each respective policy distributes reward $r$ in period $t$.

{By the same arguments as those used in the proof of \cref{thm:static-policies-are-opt-for-one-type},} the difference in the two policies' long-run average profit $\Delta$ is upper bounded by:
{
\begin{align*}
\Delta &\leq L|\lim_{T\to\infty}\frac1T\sum_{t=1}^T\fluidn(t) -\widehat{N}| + \sum_{r}r\left(\widehat{x}_r\widehat{N} - \lim_{T\to\infty}\frac1T\sum_{t=1}^Tx_r(t)\fluidn(t)\right).
\end{align*}
}

The following lemma establishes that the static policy induces the same long-run average number of agents in the system as the original policy. We defer its proof to the end of the section.
\begin{lemma}\label{lem:mem-step-1}
For all $j\in[K]$, $\lim_{T\to\infty}\frac1T\sum_{t=1}^T\fluidn_j(t) = \widehat{N}_j$.
\end{lemma}
Hence, it suffices to bound the difference in average rewards paid out in each period. By \cref{lem:mem-step-1} and the static construction, for all $r \in \Xi$:
\begin{align*}
\widehat{N}\widehat{x}_r = \left(\lim_{T\to\infty}\frac1T\sum_{t=1}^T\fluidn(t)\right)\cdot\lim_{T\to\infty}\frac{\sum_{t=1}^T(\fluidn_i(t)-\lambda_i)x_r(t)}{\sum_{t=1}^T(\fluidn_i(t)-\lambda_i)}.
\end{align*}
We also have:
\begin{align*}
\lim_{T\to\infty}\frac1T\sum_{t=1}^T\fluidn(t)x_r(t) &= \left(\lim_{T\to\infty}\frac{\sum_{t=1}^T\fluidn(t)}{T}\right)\cdot\left(\lim_{T\to\infty}\frac{\sum_{t=1}^T\fluidn(t)x_r(t)}{\sum_{t=1}^T\fluidn(t)}\right).
\end{align*}
Putting these two together, we obtain:
\begin{align}
\widehat{N}\widehat{x}_r- \lim_{T\to\infty}\frac1T\sum_{t=1}^T\fluidn(t)x_r(t) &= \left(\lim_{T\to\infty}\frac{\sum_{t=1}^T\fluidn(t)}{T}\right)\cdot {\left(\lim_{T\to\infty}\frac{\sum_{t=1}^T(\fluidn_i(t)-\lambda_i)x_r(t)}{\sum_{t=1}^T(\fluidn_i(t)-\lambda_i)}-\lim_{T\to\infty}\frac{\sum_{t=1}^T\fluidn(t)x_r(t)}{\sum_{t=1}^T\fluidn(t)}\right)} \notag \\
&\leq \left(\lim_{T\to\infty}\frac{\sum_{t=1}^T\fluidn(t)}{T}\right)\cdot \underbrace{\left(\lim_{T\to\infty}\frac{\sum_{t=1}^T(\fluidn_i(t)-\lambda_i)x_r(t)}{\sum_{t=1}^T(\fluidn_i(t)-\lambda_i)}-\lim_{T\to\infty}\min_j\frac{\sum_{t=1}^T\fluidn_j(t)x_r(t)}{\sum_{t=1}^T\fluidn_j(t)}\right)}_{(I)},
\end{align}
where the first inequality follows from identical arguments as those used for \cref{lem:endo-bound-2}. We have the following fact, whose proof we defer to the end of the section.
\begin{lemma}\label{lem:necessary-fact}
For any $T \in \mathbb{N}, j \in [K]$,
$\frac{\sum_{t=1}^T(\fluidn_j(t)-\lambda_j)x_r(t)}{\sum_{t=1}^T(\fluidn_j(t)-\lambda_j)} \leq \frac{\sum_{t=1}^T\fluidn_j(t)x_r(t)}{\sum_{t=1}^T\fluidn_j(t)}$.
\end{lemma}
We use this to upper bound $(I)$ as follows:
\begin{align}\label{eq:last-step-memory}
(I) 
&\leq \lim_{T\to\infty}\left(\frac{\sum_{t=1}^T(\fluidn_i(t)-\lambda_i)x_r(t)}{\sum_{t=1}^T (\fluidn_i(t)-\lambda_i)}-\min_j\frac{\sum_{t=1}^T(\fluidn_j(t)-\lambda_j)x_r(t)}{\sum_{t=1}^T(\fluidn_j(t)-\lambda_j)}\right).
\end{align}

\cref{def:memory_fair_policy} implies that, for all $\delta > 0$, there exists large enough $\tau_0$ such that, for all $T > \tau_0$:
\begin{align}
   \bigg{|}\frac{1}{\sum_{t=1}^{T}(\widetilde{N}_i(t)-\lambda_i)}\sum_{t=1}^{T} (\widetilde{N}_i(t)-\lambda_i)x_{r}(t) - \frac{1}{\sum_{t=1}^{T}(\widetilde{N}_j(t)-\lambda_j)}\sum_{t=1}^{T} (\widetilde{N}_j(t)-\lambda_j)x_r(t) \bigg{|} < \delta \quad \, \forall \, i,j \in [K], r \in \Xi. 
\end{align}
Applying this to \eqref{eq:last-step-memory}, and taking $\delta\to 0$ we obtain the result.
\hfill\Halmos
\end{proof}

{\begin{remark}
\revedit{The two-period memory assumption was not crucial here. In fact, we conjecture that these arguments extend to a $m$-period memory model in which the platform's lever is the space of all joint distributions over $m$ consecutive periods. However, with $m=2$, at a high level, it is easier to observe that the right construction is the weighted long-run average of the joint distribution of rewards in the past $m$ periods. In contrast, for general $m$, the notational overhead becomes burdensome and obfuscates the technical insight.}
\end{remark}}

\subsection{Auxiliary lemmas}
\begin{proof}[Proof of \cref{lem:mem-step-1}.]
Summing \eqref{eq:inductive-memory} from $t = 1$ to $T$, we have:
\begin{align*}
&\fluidn_j(T+1)-\fluidn_j(1) = \lambda_jT - \sum_{t=1}^T(\fluidn_j(t)-\lambda_j)\sum_{r_1,r_2}\ell_j(r_1,r_2)x_{r_1,r_2}(t,t-1) \\
\implies &\lim_{T\to\infty} \frac{\sum_{t=1}^T(\fluidn_j(t)-\lambda_j)\sum_{r_1,r_2}\ell_j(r_1,r_2)x_{r_1,r_2}(t,t-1)}{T} = \lambda_j\\ 
\implies &\lim_{T\to\infty} \frac{\sum_{t=1}^T(\fluidn_j(t)-\lambda_j)}{T} \cdot \frac{\sum_{t=1}^T(\fluidn_j(t)-\lambda_j)\sum_{r_1,r_2}\ell_j(r_1,r_2)x_{r_1,r_2}(t,t-1)}{\sum_{t=1}^T(\fluidn_j(t)-\lambda_j)} = \lambda_j \\
\implies &\lim_{T\to\infty} \frac{\sum_{t=1}^T(\fluidn_i(t)-\lambda_i)}{T} \cdot \sum_{r_1,r_2}\ell_j(r_1,r_2)\lim_{T\to\infty}\frac{\sum_{t=1}^T(\fluidn_i(t)-\lambda_i)x_{r_1,r_2}(t,t-1)}{\sum_{t=1}^T(\fluidn_i(t)-\lambda_i)} = \lambda_j
\end{align*}
where the final implication applies the limiting version of the group-fairness constraint \eqref{eq:memory-fair-policy}. Using the definition of $\widehat{x}_{r_1,r_2}$ and re-arranging, we obtain:
\begin{align*}
&\lim_{T\to\infty} \frac{\sum_{t=1}^T\fluidn_j(t)}{T} = \lambda_j + \frac{\lambda_j}{\sum_{r_1,r_2}\ell_j(r_1,r_2)\widehat{x}_{r_1,r_2}} = \widehat{N}_j.
\end{align*}
 \hfill\Halmos
\end{proof}

\begin{proof}[Proof of \cref{lem:necessary-fact}.]
For $j \in [K]$, let $Z_j = \left(\sum_{t=1}^T(\fluidn_j(t)-\lambda_j)\right)\cdot\sum_{t=1}^T\fluidn_j(t)$. We have:
\begin{align*}
\frac{\sum_{t=1}^T(\fluidn_j(t)-\lambda_j)x_r(t)}{\sum_{t=1}^T(\fluidn_j(t)-\lambda_j)} - \frac{\sum_{t=1}^T \fluidn_j(t)x_r(t)}{\sum_{t=1}^T \fluidn_j(t)} 
&=\sum_{t=1}^Tx_r(t)\left[\frac{(\fluidn_j(t)-\lambda_j)}{\sum_{t'=1}^T(\fluidn_j(t')-\lambda_j)} - \frac{ \fluidn_j(t)}{\sum_{t'=1}^T \fluidn_j(t')}\right]\\
&= \frac{1}{Z_j}\sum_{t=1}^Tx_r(t)\bigg[(\fluidn_j(t)-\lambda_j)\left(\sum_{t'=1}^T\fluidn_j(t')\right)\\
&\qquad\qquad -\fluidn_j(t)\left(\sum_{t'=1}^T(\fluidn_j(t')-\lambda_j)\right)\bigg] \\
&=\frac{\lambda_j}{Z_j}\sum_{t=1}^Tx_r(t)\left(-\left(\sum_{t'=1}^T\fluidn_j(t')\right)+\fluidn_j(t)\right)\\
&=-\frac{\lambda_j}{Z_j}\sum_{t=1}^Tx_r(t)\left(\sum_{t'\neq t}\fluidn_j(t')\right) \leq 0.
\end{align*}\hfill\Halmos
\end{proof}
}

\end{document}